\newtheorem{theorem}{Theorem}[section]
\newtheorem{corollary}[theorem]{Corollary}
\newtheorem{lemma}[theorem]{Lemma}
\newtheorem{proposition}[theorem]{Proposition}
\newtheorem{fact}[theorem]{Fact}
\newtheorem{example}[theorem]{Example}
\newtheorem{definition}[theorem]{Definition}
\newtheorem{remark}[theorem]{Remark}
\newenvironment{proof}{\vspace{0mm}\noindent {\bf Proof.}}{\vspace{2mm}}
\newcommand{\spath}[1]{\mbox{[$#1$]-}path}
\newcommand{\adj}[1]{\mbox{[$#1$]-}adjacent}
\newcommand{\component}[1]{\mbox{[$#1$]-}component}
\newcommand{\connected}[1]{\mbox{[$#1$]-}connected}
\newcommand{\touches}[1]{\mbox{[$#1$]-}touches}
\newcommand{\nodes}{\mathit{nodes}}
\newcommand{\edges}{\mathit{edges}}
\newcommand{\HD}{H\!D}
\newcommand{\HG}{{\cal H}}
\newcommand{\JT}{J\!T}
\renewcommand{\root}{\mathit{root}}
\newcommand{\vertices}{\mathit{vertices}}
\newcommand{\NP}{\mbox{\rm NP}}
\newcommand{\V}{\mathcal{V}}
\newcommand{\CR}{\mbox{\rm R\&C}}
\newcommand{\E}{\mathrm{ED}}
\newcommand{\F}{\mathrm{Fr}}
\newcommand{\ecomponent}[1]{\mbox{[$#1$]-}option}
\newcommand{\icomponent}[1]{\mbox{\em [$#1$]-}component}
\newcommand{\itouches}[1]{\mbox{\em [$#1$]-}touches}
\newcommand{\iecomponent}[1]{\mbox{\em [$#1$]-}option}
\newcommand{\treecomp}{\mathit{C_{\mbox{\tiny $\top$}}}\hspace{-0,5mm}}
\newcommand{\tuple}[1]{\langle#1\rangle}
\newcommand{\nop}[1]{}
\newcommand{\longv}[1]{}
\begin{document}

\title{{Tree Projections and Structural Decomposition Methods}:\\~Minimality~and~Game-Theoretic~Characterization}

\author{Gianluigi Greco and Francesco Scarcello\\
\ \\
\small  University of Calabria, 87036, Rende, Italy\\
\small  {\tt \{ggreco\}@mat.unical.it}, {\tt \{scarcello\}@deis.unical.it} }

\date{}

\maketitle

\begin{abstract}
Tree projections provide a mathematical framework that encompasses all the various (purely) structural decomposition methods that have been
proposed in the literature to single out classes of nearly-acyclic (hyper)graphs, such as the \emph{tree decomposition method}, which is the
most powerful decomposition method on graphs, and the \emph{(generalized) hypertree decomposition method}, which is its natural counterpart on
arbitrary hypergraphs.

The paper analyzes this framework, by focusing in particular on ``minimal'' tree projections, that is, on tree projections without useless
redundancies. First, it is shown that minimal tree projections enjoy a number of properties that are usually required for normal form
decompositions in various structural decomposition methods. In particular, they enjoy the same kind of connection properties as (minimal) tree
decompositions of graphs, with the result being tight in the light of the negative answer that is provided to the open question about whether
they enjoy a slightly stronger notion of connection property, defined to speed-up the computation of hypertree decompositions.
Second, it is shown that tree projections admit a natural game-theoretic characterization in terms of the Captain and Robber game. In
this game, as for the Robber and Cops game characterizing \emph{tree decompositions}, the existence of winning strategies implies the existence
of monotone ones. As a special case, the Captain and Robber game can be used to characterize the generalized hypertree decomposition method,
where such a game-theoretic characterization was missing and asked for.
Besides their theoretical interest, these results have immediate algorithmic applications both for the general setting and for structural
decomposition methods that can be recast in terms of tree projections.
\end{abstract}

\raggedbottom

\section{Introduction}

\subsection{Structural Decomposition Methods and Open Questions}

Many $\NP$-hard problems in different application areas, ranging, e.g., from AI~\cite{gott-etal-00} to Database Theory~\cite{bern-good-81}, are
known to be efficiently solvable when restricted to instances whose underlying structures can be modeled via acyclic graphs or hypergraphs.
Indeed, on these kinds of instances, solutions can usually be computed via dynamic programming, by incrementally processing the acyclic
(hyper)graph, according to some of its topological orderings.
However, structures arising from real applications are hardly precisely acyclic. Yet, they are often not very intricate and, in fact, tend to
exhibit some limited degree of cyclicity, which suffices to retain most of the nice properties of acyclic ones. Therefore, several efforts have
been spent to investigate invariants that are best suited to identify nearly-acyclic graph/hypergraphs, leading to the definition of a number
of so-called {\em structural decomposition methods}, such as the \emph{(generalized) hypertree}~\cite{gott-etal-99}, \emph{fractional
hypertree}~\cite{GM06}, \emph{spread-cut}~\cite{CJG08}, and \emph{component hypertree}~\cite{GMS07} decompositions. These methods aim at
transforming a given cyclic hypergraph into an acyclic one, by organizing its edges (or its nodes) into a polynomial number of clusters, and by
suitably arranging these clusters as a tree, called decomposition tree. The original problem instance can then be evaluated over such a tree of
subproblems, with a cost that is exponential in the cardinality of the largest cluster, also called {\em width} of the decomposition, and
polynomial if this width is bounded by some constant.

\begin{figure*}[t]
  \centering
  \includegraphics[width=0.99\textwidth]{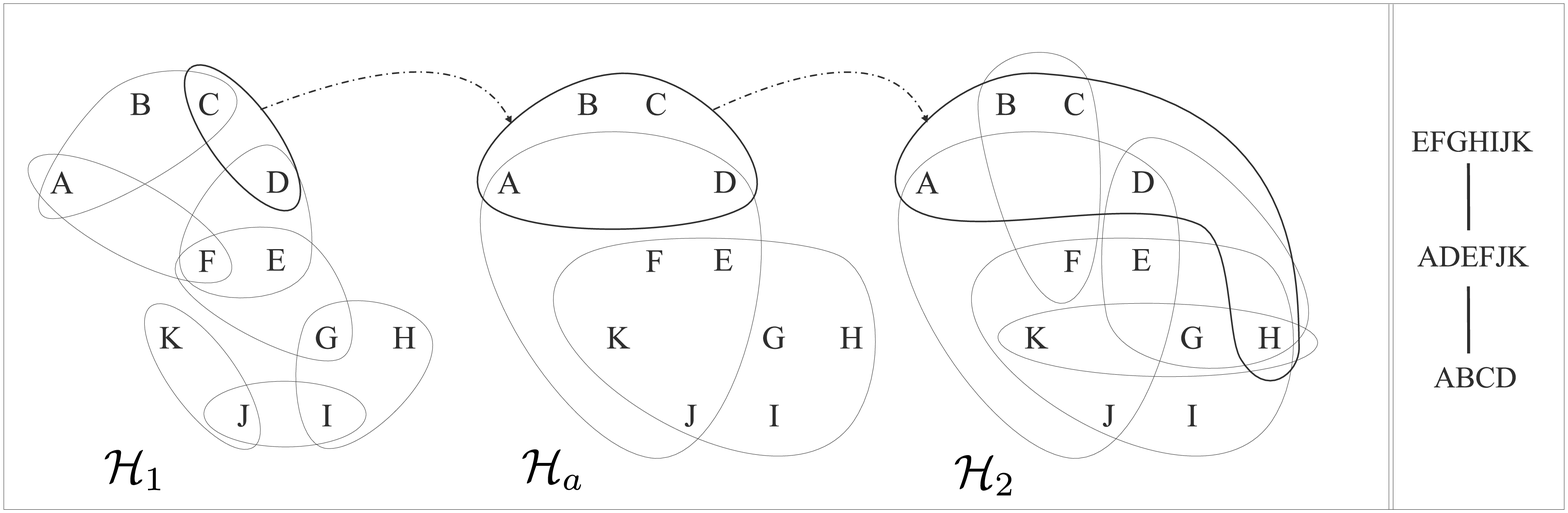}\vspace{-2mm}
  \caption{A Tree Projection $\HG_a$ of $\HG_1$ with respect $\HG_2$; e.g., $\{C,D\}\subseteq \{A,B,C,D\}\subseteq\{A,B,C,D,H\}$.
  On the right: A Join Tree $\JT_a$ for $\HG_a$.
  }\label{fig:hypegraph}\vspace{-3mm}
\end{figure*}

Despite their different technical definitions, there is a simple mathematical framework that encompasses all \emph{purely} structural
decomposition methods, which is the framework of the \emph{tree projections}~\cite{GS84}.
Roughly, given a pair of hypergraphs $(\HG_1,\HG_2)$, a tree projection of $\HG_1$ w.r.t. $\HG_2$ is an acyclic hypergraph $\HG_a$ such that
each hyperedge of $\HG_1$ is contained in some hyperedge of $\HG_a$, that is in its turn contained in a hyperedge of $\HG_2$, which is called
the {\em resource hypergraph}---see Figure~\ref{fig:hypegraph} for an illustration.

Therefore, in the tree projection framework, the resource hypergraph $\HG_2$ is arbitrary. Whenever it is instead computed with some specific
technique from the hypergraph $\HG_1$, we obtain as special cases the so-called \emph{purely} structural decomposition methods. Consider, for
instance, the \emph{tree decomposition} method~\cite{DP89,Fre90}, based on the notion of treewidth~\cite{RS84}, which is the most general
decomposition method over classes of graphs (see, e.g, \cite{gott-etal-00,G07}). Let $k$ be a fixed natural number, and consider any
(hyper)graph $\HG_1$ over a set $\V$ of nodes. Let $\HG_1^{tk}$ be the hypergraph associated whose hyperedges are all possible sets of at most
$k+1$ variables. Then, a hypergrah $\HG_1$ has \emph{treewidth} bounded by $k$ {if, and only if,} there is a tree projection of $\HG_1$
w.r.t.~$\HG_1^{tk}$ (see, e.g., \cite{GS08,GS10}).\footnote{For the sake of completeness, observe that the only known structural technique that
does not fit the general framework of tree projections is the one based on the submodular width~\cite{M10}, which is not purely structural.
Indeed, this method, which is specifically tailored to solve constraint satisfaction problem (or conjunctive query) instances, identify a
number of decompositions on the basis of both the given constraint hypergraph and the associated constraint relations.}

In fact, our current understating of structural decompositions for binary (graph) instances is fairly complete. The situation pertaining
decompositions methods for arbitrary (hypergraphs) instances is much more muddled instead. In particular, the following two questions have been
posed in the literature for the general tree projection framework as well as for structural decomposition methods specifically tailored to deal
with classes of queries without a fixed arity bound. Such questions were in particular open for the generalized hypertree decomposition method,
which on classes of unbounded-arity queries is a natural counterpart of the tree decomposition method.

\medskip

\noindent \textbf{(Q2) Is there a natural notion of normal-form for tree projections?} Whenever some tree projection of a pair $(\HG_1,\HG_2)$
exists, in general there are also many tree projections with useless redundancies. Having a suitable notion of minimality may allow us to
identify the most desirable tree projections. In fact, for several structural decomposition methods, \emph{normal forms} have been defined to
restrict the search space of decomposition trees, without loosing any useful decomposition.

Such a nice feature is however missing for the general case of tree projections. As a consequence, consider for instance the basic problem of
deciding whether a tree projection of a hypergraph $\HG_1$ with respect to a hypergraph $\HG_2$ exists or not. Because every subset of any
hyperedge of $\HG_2$ may belong to the tree projection $\HG_a$ we are looking for, this latter hypergraph might in principle consists of an
exponential number of hyperedges (w.r.t.~to the size of $\HG_1$ and $\HG_2$). Therefore, even proving that the existence problem is feasible in
$\NP$ is not easy, without a notion of minimality that allows us to get rid of redundant hyperedges.

Furthermore, in the case of tree decompositions, it is known that we can focus, w.l.o.g., on \emph{connected} ones~\cite{FN06}, that is,
basically, on tree decompositions such that, for each set of connected vertices, the sub-hypergraph induced by the nodes covered in such
vertices is connected in its turn. Again, connected decompositions provide us with a ``normal form'' for decomposition trees, which can be
exploited to restrict the search space of the possible decompositions and, thus, to speed-up their computation~\cite{FN06}. However, no
systematic study about connection properties of tree projections (and of decomposition methods other than tree decomposition) appeared in the
literature. Algorithms have been implemented limiting the search space to a kind of connected (generalized) hypertree
decompositions~\cite{SH07}, but it was left open whether the resulting method is a heuristic one or it does give an exact~solution.

\medskip

\noindent \textbf{(Q2) Is there a natural game-theoretic characterization for tree projections?} Tree decompositions have a nice
{game-theoretic characterization} in terms of the  \emph{Robber and Cops} game~\cite{ST93}: A hypergraph $\HG$ has treewidth bounded by $k$ if,
and only, if $k+1$ Cops can capture a Robber that can run at great speed along the hyperedges of $\HG$, while being not permitted to run trough
a node that is controlled by a Cop. In particular, the Cops can move over the nodes, and while they move, the Robber is fast and can run trough
those nodes that are left or not yet occupied before the move is completed. An important property of this game is that there is no restriction
on the strategy used by the Cops to capture the Robber. In particular, the Cops are not constrained to play {\em monotone strategies}, that is,
to shrink the Robber's escape space in a monotonically decreasing way. More precisely, playing non-monotone strategies gives no more power to
the Cops~\cite{ST93}. In many results about treewidth (e.g., \cite{ABD07}), this property turns out to be very useful, because good strategies
for the Robber may be easily characterized as those strategies that allow the Robber to run forever.

\emph{Hypertree decomposition} is an efficiently recognizable structural method~\cite{gott-etal-03}, which provides a 3-approximation for
generalized hypertree decompositions~\cite{AGG07}. This method is also known to have a nice game-theoretic characterization, in terms of the
(monotone) \emph{Robber and Marshals} game~\cite{gott-etal-03}, which can be viewed as a natural generalization of the Robber and Cops games.
The game is the same as the one characterizing acyclicity, but with $k$ Marshals acting simultaneously to capture the Robber: A hypergraph
$\HG$ has \emph{hypertree width} bounded by $k$ if, and only if, $k$ Marshals, each one with the ability of controlling a hyperedge of $\HG$,
can capture a Robber that can run at great speed along the hyperedges, while being not permitted to run trough a node that belongs to a
hyperedge controlled by a Marshal. Note that Marshals are more powerful than the Cops of the Robber and Cops game characterizing treewidth, in
that they can move over whole hyperedges. However, Marshals are now required to play monotonically, because non-monotone strategies give some
extra-power that does not correspond to valid decompositions~\cite{adler04}.

Despite the similarities between hypertree and generalized hypertree decompositions as they are apparent from the original definitions by
Gottlob et al.~\cite{gott-etal-03}, game theoretic characterizations for generalized hypertree width were still missing. In~\cite{adler04}, it
is raised the question about whether there is a (natural) game theoretic characterization for generalized hypertree width, where
non-monotonicity does not represent a source of additional power. Such a characterization is missing for the tree projection setting, too.

\subsection{Contributions}

In this paper, we provide useful properties and characterizations of tree projections (and structural decomposition methods), by answering the
two questions illustrated above. In particular,

\begin{itemize}
    \item[$\blacktriangleright$] We define and investigate \emph{minimal} tree projections, where the minimal possible subsets of any view
        are employed. Intuitively, such tree projections typically correspond to more efficient decompositions. We show that some
        properties required for ``normal form'' decompositions in various notions of structural decomposition methods (see,
        e.g.,~\cite{gott-etal-99}) are a consequence of minimality. In particular, minimal tree projections enjoy the same kind of
        connection property as tree decompositions.

    \item[$\blacktriangleright$] We define a normal form for (minimal) tree projections. In particular, it turns out that, given any pair
        of hypergraphs $(\HG_1,\HG_2)$,  there always exists a tree projection of $\HG_1$ w.r.t.~$\HG_2$ in normal form having polynomial
        size with respect to the size of the given hypergraphs. An immediate consequence of this result is that checking whether a tree
        projection exists or not is feasible in $\NP$. In fact, this property has already been exploited in the $\NP$-completeness proof of
        tree projections~\cite{GMS07}.

   \item[$\blacktriangleright$] We give a negative answer to the question raised in~\cite{SH07} for (generalized) hypertree decomposition.
       We observe that the notion of connected decomposition proposed there differs from the one defined for tree decompositions and
       mentioned above. In particular, we show a hypergraph where this restriction leads to worse tree projections, more precisely, where
       all such connected (generalized) hypertree decompositions have width higher than the hypertree width of the considered hypergraph.
       Hence, the algorithm proposed in~\cite{SH07} for connected hypertree decompositions is not complete, as far as the computation of
       unrestricted decompositions is considered.

  \item[$\blacktriangleright$] We define the \emph{Captain and Robber} game to be played on pairs of hypergraphs, and we show that in this
      game the Captain has a winning strategy if, and only if, she has a monotone winning one. Then, we show that tree projections and
      thus, e.g., generalized hypertree decompositions, may be characterized in terms of the Captain and Robber game. Hence, these notions
      have now a natural game characterization where monotone and non-monotone strategies have the same power.
\end{itemize}

\medskip

\noindent \textbf{Organization.} The rest of the paper is organized as follows.
Section~\ref{sec:framework} illustrates some basic notions and concepts. The setting of minimal tree projections is discussed in
Section~\ref{MTP}. The game-theoretic characterization is illustrated in Section~\ref{GTC}. A few final remarks and some further results are
discussed in Section~\ref{sec:applications}, by exploiting the properties of minimal tree projections and the game-theoretic characterization.

\section{Preliminaries}\label{sec:framework}

\noindent \textbf{Hypergraphs and Acyclicity.} A \emph{hypergraph} $\HG$ is a pair $(V,H)$, where $V$ is a finite set of nodes and $H$ is a set
of hyperedges such that, for each $h\in H$, $h\subseteq V$.
If $|h|=2$ for each (hyper)edge $h\in H$, then $\HG$ is a {\em graph}.
For the sake of simplicity, we always denote $V$ and $H$ by $\nodes(\HG)$ and $\edges(\HG)$, respectively.

A hypergraph $\HG$ is {\em acyclic} (more precisely, $\alpha$-acyclic~\cite{fagi-83}) if, and only if, it has a join tree~\cite{bern-good-81}.
A {\em join tree} $\JT$ for a hypergraph $\HG$ is a tree whose vertices are the hyperedges of $\HG$ such that, whenever a node $X\in V$ occurs
in two hyperedges $h_1$ and $h_2$ of $\HG$, then $h_1$ and $h_2$ are connected in $\JT$, and $X$ occurs in each vertex on the unique path
linking $h_1$ and $h_2$ (see Figure~\ref{fig:hypegraph} for an illustration). In words, the set of vertices in which $X$ occurs induces a
(connected) subtree of $\JT$. We will refer to this condition as the {\em connectedness condition} of join trees.

\medskip \noindent \textbf{Tree Decompositions.}
A \emph{tree decomposition}~\cite{RS84} of a graph $G$ is a pair $\tuple{T,\chi}$, where $T=(N,E)$ is a tree, and $\chi$ is a labeling function
assigning to each vertex $v\in N$ a set of vertices $\chi(v)\subseteq \nodes(G)$, such that the following conditions are satisfied: (1) for
each node $Y\in \nodes(G)$, there exists $p\in N$ such that $Y\in \chi(p)$; (2) for each edge  $\{X,Y\}\in \edges(G)$, there exists $p\in N$
such that $\{X,Y\}\subseteq \chi(p)$; and (3) for each node $Y\in \nodes(G)$, the set $\{p\in N \mid  Y\in \chi(p)\}$ induces a (connected)
subtree of $T$. The \emph{width} of $\tuple{T,\chi}$ is the number $\max_{p\in N}(|\chi(p)|-1)$.

The \emph{Gaifman graph} of a hypergraph $\HG$ is defined over the set $\nodes(\HG)$ of the nodes of $\HG$, and contains an edge $\{X,Y\}$ if,
and only if, $\{X,Y\}\subseteq h$ holds, for some hyperedge $h\in\edges(\HG)$. The {\em treewidth} of $\HG$ is the minimum width over all the
tree decompositions of its Gaifman graph. Deciding whether a given hypergraph has treewidth bounded by a fixed natural number $k$ is known to
be feasible in linear time~\cite{bodl-96}.

\medskip \noindent \textbf{(Generalized) Hypertree Decompositions.}
A {\em hypertree for a hypergraph $\HG$} is a triple $\tuple{T,\chi,\lambda}$, where $T=(N,E)$ is a rooted tree, and $\chi$ and $\lambda$ are
labeling functions which associate each vertex $p\in N$ with two sets $\chi(p)\subseteq \nodes(\HG)$ and $\lambda(p)\subseteq \edges(\HG)$. If
$T'=(N',E')$ is a subtree of $T$, we define $\chi(T')= \bigcup_{v\in N'} \chi(v)$.
In the following, for any rooted tree $T$, we denote the set of vertices $N$ of $T$ by $\vertices(T)$, and the root of $T$ by $\root(T)$.
Moreover, for any $p\in N$, $T_p$ denotes the subtree of $T$ rooted at $p$.

A {\em generalized hypertree decomposition}~\cite{gott-etal-03} of a hypergraph $\HG$ is a hypertree $\HD=\tuple{T,\chi,\lambda}$ for $\HG$
such that: (1) for each hyperedge $h\in \edges(\HG)$, there exists $p\in \vertices(T)$ such that $h\subseteq \chi(p)$; (2) for each node $Y\in
\nodes(\HG)$, the set $\{ p\in \vertices(T) \mid  Y \in \chi(p) \}$ induces a (connected) subtree of $T$; and (3) for each $p\in \vertices(T)$,
$\chi(p)\subseteq \nodes(\lambda(p))$.
The {\em width} of a generalized hypertree decomposition $\tuple{T,\chi,\lambda}$ is $max_{p\in \vertices(T)} |\lambda(p)|$. The {\em
generalized hypertree width} $ghw(\HG)$ of $\HG$ is the minimum width over all its
generalized hypertree decompositions. 

A \emph{hypertree decomposition}~\cite{gott-etal-99} of $\HG$ is a generalized hypertree decomposition $\HD=\tuple{T,\chi,\lambda}$ where: (4)
for each $p\in \vertices(T)$, $\nodes(\lambda(p)) \cap \chi(T_p) \;\subseteq\; \chi(p)$. Note that the inclusion in the above condition is
actually an equality, because Condition~(3) implies the reverse inclusion. The {\em hypertree width} $hw(\HG)$ of $\HG$ is the minimum width
over all its hypertree decompositions.
Note that, for any hypergraph $\HG$, it is the case that $ghw(\HG)\leq hw(\HG)\leq 3\times ghw(\HG)+1$~\cite{AGG07}. Moreover, for any fixed
natural number $k>0$, deciding whether $hw(\HG)\leq k$ is feasible in polynomial time (and, actually, it is
highly-parallelizable)~\cite{gott-etal-99}, while deciding whether $ghw(\HG)\leq k$ is $\NP$-complete~\cite{GMS07}.

\medskip \noindent\textbf{Tree Projections.}
For two hypergraphs $\HG_1$ and $\HG_2$, we write $\HG_1\leq \HG_2$ if, and only if, each hyperedge of $\HG_1$ is contained in at least one
hyperedge of $\HG_2$. Let $\HG_1\leq \HG_2$; then, a \emph{tree projection} of $\HG_1$ with respect to $\HG_2$ is an acyclic hypergraph $\HG_a$
such that $\HG_1\leq \HG_a \leq \HG_2$. Whenever such a hypergraph $\HG_a$ exists, we say that the pair of hypergraphs $(\HG_1,\HG_2)$ has a
tree projection.

Note that the notion of tree projection is more general than the above mentioned (hyper)graph based notions. For instance, consider the
generalized hypertree decomposition approach. Given a hypergraph $\HG$ and a natural number $k>0$, let $\HG^k$ denote the hypergraph over the
same set of nodes as $\HG$, and whose set of hyperedges is given by all possible unions of $k$ edges in $\HG$, i.e., $\edges(\HG^k)= \{ h_1
\cup h_2 \cup \cdots \cup h_k \mid \{h_1,h_2,\ldots,h_k\}\subseteq \edges(\HG)\}$. Then, it is well known and easy to see that $\HG$ has
generalized hypertree width at most $k$ if, and only if, there is a tree projection for $(\HG,\HG^k)$.

Similarly, for tree decompositions, let $\HG^{tk}$ be the hypergraph over the same set of nodes as $\HG$, and whose set of hyperedges is given
by all possible clusters $B\subseteq\nodes(\HG)$ of nodes such that $|B| \leq k+1$. Then, $\HG$ has treewidth at most $k$ if, and only if,
there is a tree projection for $(\HG,\HG^{tk})$.

\section{Minimal Tree Projections}\label{MTP}

In this section, a partial ordering of tree projections is defined. It is shown that minimal tree projections have nice properties with both
theoretical and practical interest.

Let $\HG$ and $\HG'$ be two hypergraphs. We say that $\HG$ is \emph{contained} in $\HG'$, denoted by $\HG\subseteq \HG'$, if for each hyperedge
$h\in \edges(\HG)-\edges(\HG')$, there is a hyperedge $h'\in \edges(\HG')-\edges(\HG)$ with $h\subseteq h'$ (and hence $h\subset h'$).
Moreover, we say that $\HG$ is \emph{properly contained} in $\HG'$, denoted by $\HG\subset \HG'$, if $\HG\subseteq \HG'$ and $\HG\neq \HG'$.

Note that $\edges(\HG)\subseteq \edges(\HG')$ entails $\HG\subseteq \HG'$ (and hence $\edges(\HG)\subset \edges(\HG')$ entails $\HG\subset
\HG'$). Moreover, $\HG\subseteq \HG'$ implies $\HG\leq \HG'$, but the converse is not true. For example, if $\edges(\HG)=\{h_1,h_2\}$ with
$h_2\subset h_1$ and $\edges(\HG')=\{h_1\}$, then $\HG'\subseteq \HG$ and $\HG'\leq \HG$ hold, as $\edges(\HG')\subset \edges(\HG)$. Moreover,
$\HG\leq \HG'$ holds too, but $\HG$ is not contained in $\HG'$ as there is no hyperedge $h'\in\edges(\HG')-\edges(\HG)$ such that $h_2\subseteq
h'$.

\begin{definition}\label{def:minimal}\em
Let $\HG_1$ and $\HG_2$ be two hypergraphs.
Then, a tree projection $\HG_a$ for $(\HG_1,\HG_2)$ is \emph{minimal} if there is no tree projection $\HG_a'$ of $\HG_1$ wr.t. $\HG_2$ with
$\HG_a'\subset \HG_a$.~\hfill~$\Box$
\end{definition}

\subsection{Basic Facts}

We first point out a number of basic important properties of tree projections of a given pair of hypergraphs $(\HG_1,\HG_2)$.

\begin{fact}\label{fact:ordering}
The relationship $\subseteq$ of Definition~\ref{def:minimal} induces a partial ordering over the tree projections of $\HG_1$ w.r.t.~$\HG_2$.
\end{fact}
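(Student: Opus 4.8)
The plan is to verify the three defining properties of a partial order for the relation $\subseteq$, regarded as a relation on the tree projections of $\HG_1$ w.r.t.\ $\HG_2$; in fact none of the three arguments uses anything specific to tree projections, so I will argue about $\subseteq$ directly, on arbitrary hypergraphs over the fixed node set at hand. Reflexivity is immediate: $\edges(\HG)-\edges(\HG)=\emptyset$, so the condition defining $\HG\subseteq\HG$ holds vacuously.

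For antisymmetry I would use a cardinality-maximal choice. Suppose $\HG\subseteq\HG'$, $\HG'\subseteq\HG$, and, towards a contradiction, $\edges(\HG)\neq\edges(\HG')$; w.l.o.g.\ pick $h\in\edges(\HG)-\edges(\HG')$ of maximum cardinality among all hyperedges in $\edges(\HG)-\edges(\HG')$. Since $h\in\edges(\HG)-\edges(\HG')$, the definition of $\HG\subseteq\HG'$ provides $h'\in\edges(\HG')-\edges(\HG)$ with $h\subsetneq h'$; since $h'\in\edges(\HG')-\edges(\HG)$, the definition of $\HG'\subseteq\HG$ provides $h''\in\edges(\HG)-\edges(\HG')$ with $h'\subsetneq h''$; but then $|h''|>|h|$, contradicting the choice of $h$. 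Hence $\edges(\HG)-\edges(\HG')=\emptyset$ and, symmetrically, $\edges(\HG')-\edges(\HG)=\emptyset$, so $\HG=\HG'$.

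For transitivity---the only part needing real care---let $\HG\subseteq\HG'$ and $\HG'\subseteq\HG''$ and fix $h\in\edges(\HG)-\edges(\HG'')$; the aim is to produce $g\in\edges(\HG'')-\edges(\HG)$ with $h\subsetneq g$. The subtle point is that merely finding a hyperedge of $\HG''$ above $h$ does not suffice, since it might again belong to $\edges(\HG)$. I would instead build a strictly increasing chain $h=e_0\subsetneq e_1\subsetneq\cdots$ by an alternating ``chase'': whenever the current $e_i$ lies in $\edges(\HG)-\edges(\HG')$, apply $\HG\subseteq\HG'$ to obtain $e_{i+1}\in\edges(\HG')-\edges(\HG)$ with $e_i\subsetneq e_{i+1}$; whenever $e_i$ lies in $\edges(\HG')-\edges(\HG'')$, apply $\HG'\subseteq\HG''$ to obtain $e_{i+1}\in\edges(\HG'')-\edges(\HG')$ with $e_i\subsetneq e_{i+1}$. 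One checks that $e_0=h$ falls into one of these two situations (it lies in $\edges(\HG)$ and not in $\edges(\HG'')$, hence either in $\edges(\HG)-\edges(\HG')$ or in $\edges(\HG')-\edges(\HG'')$), and that each newly produced $e_{i+1}$ again falls into one of the two situations \emph{unless} it already solves the problem: a hyperedge produced by a $\HG\subseteq\HG'$-step lies in $\edges(\HG')-\edges(\HG)$, so if it also lies in $\edges(\HG'')$ it is the desired $g$; a hyperedge produced by a $\HG'\subseteq\HG''$-step lies in $\edges(\HG'')-\edges(\HG')$, so if it does not lie in $\edges(\HG)$ it is the desired $g$ (in both cases $e_{i+1}\supsetneq e_0=h$). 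Since all the $e_i$ are subsets of one fixed finite node set and the chain is strictly increasing, the chase cannot run forever, and by the above it can halt only in a ``solved'' configuration, which yields $g$.

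The routine part is the bookkeeping that every step of the chase is applicable and that every halting configuration is a solved one; the main obstacle I anticipate is simply stating the alternating chase and its finiteness/strict-monotonicity termination argument precisely enough, given that---as noted above---the naive ``pick any superset in $\HG''$'' shortcut genuinely fails.
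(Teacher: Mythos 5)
Your proof is correct. For reflexivity and antisymmetry it is essentially the paper's own argument: the paper likewise picks a cardinality-maximal $h_1\in\edges(\HG_1)\setminus\edges(\HG_2)$ and derives the contradiction $h_1\subset h_2\subset h_1'$. The genuine difference is in transitivity. The paper stops after at most two applications of the hypotheses: given $h_1\in\edges(\HG_1)\setminus\edges(\HG_3)$, it splits on whether $h_1\in\edges(\HG_2)$ and, if not, on whether the intermediate witness $h_2\in\edges(\HG_2)\setminus\edges(\HG_1)$ lies in $\edges(\HG_3)$, and then declares the hyperedge so obtained to be the required witness. In the subcase $h_2\in\edges(\HG_3)$ this is fine, but in the other two branches the hyperedge produced is only known to lie in $\edges(\HG_3)\setminus\edges(\HG_2)$, and---exactly as you point out---it could still belong to $\edges(\HG_1)$, in which case it is not yet a witness for $\HG_1\subseteq\HG_3$. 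Your alternating chase, with termination forced by strict growth of subsets of a fixed finite node set, is precisely the iteration needed to handle that fall-back case, and the paper's two-step case analysis is just the first two rounds of your chase. So your route is a bit longer to state but strictly more watertight; the paper's version buys brevity at the cost of leaving the ``witness falls back into $\edges(\HG_1)$'' situation implicit, which your argument closes.
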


\begin{proof}
Observe first that the relation `$\subseteq$' over hypergraphs is \emph{reflexive}. We next show that it is \emph{antisymmetric}, too. Let
$\HG_1$ and $\HG_2$ be two hypergraphs such that $\HG_1\subseteq \HG_2$ and $\HG_2\subseteq \HG_1$, and assume by contradiction that $\HG_1\neq
\HG_2$. Thus, $\edges(\HG_2)\neq \edges(\HG_1)$. Moreover, $\edges(\HG_2)\not\supset \edges(\HG_1)$ holds, for otherwise it is trivially
impossible that $\HG_2\subseteq \HG_1$. Then, let $h_1$ be the largest hyperedge (with the maximum number of nodes) in $\edges(\HG_1)\setminus
\edges(\HG_2)$. Since $\HG_1\subseteq \HG_2$, it is the case that there is a hyperedge $h_2\in \edges(\HG_{2})\setminus \edges(\HG_{1})$ with
$h_1\subset h_2$. But we also know that $\HG_2\subseteq \HG_1$ holds, and hence there is a hyperedge $h_1'\in \edges(\HG_1)\setminus
\edges(\HG_2)$ with $h_2\subset h_1'$. Thus, $h_1\subset h_2\subset h_1'$, which is impossible due to the maximality of $h_1$.

Eventually, we show that the relation `$\subseteq$' over hyperedges is \emph{transitive}. Indeed, assume $\HG_1\subseteq \HG_2$ and
$\HG_2\subseteq \HG_3$. Let $h_1$ be a hyperedge in $\edges(\HG_{1})\setminus \edges(\HG_{3})$. We distinguish two cases. If $h_1\in
\edges(\HG_2)$, and hence $h_1\in \edges(\HG_2)\setminus \edges(\HG_3)$, then there is a hyperedge $h_3\in \edges(\HG_3)\setminus
\edges(\HG_2)$ such that $h_1\subseteq h_3$. Otherwise, i.e., if $h_1\not\in \edges(\HG_2)$, and hence $h_1\in \edges(\HG_1)\setminus
\edges(\HG_2)$, then there is a hyperedge $h_2\in \edges(\HG_2)\setminus \edges(\HG_1)$ such that $h_1\subseteq h_2$. Then, we have to consider
two subcases. If $h_2\in \edges(\HG_3)$, then we have that $h_2$ is actually a hyperedge in $\edges(\HG_3)\setminus \edges(\HG_1)$ such that
$h_1\subseteq h_2$.
Instead, if $h_2\not\in \edges(\HG_3)$, and hence $h_2\in\edges(\HG_2)\setminus \edges(\HG_3)$, then there is a hyperedge $h_3'\in
\edges(\HG_3)\setminus \edges(\HG_2)$ with $h_2\subseteq h_3$. It follows that $h_1\subseteq h_2\subseteq h_3'$. Putting it all together, we
have shown that in all the possible cases, for each hyperedge $h_1\in \edges(\HG_{1})\setminus \edges(\HG_{3})$ there is a hyperedge $h'\in
\edges(\HG_3)\setminus \edges(\HG_1)$ such that $h_1\subseteq h'$. It follows that $\HG_1\subseteq \HG_3$ holds.

By the above properties, `$\subseteq$' is a partial order, and `$\subset$' is a strict partial order over hypergraphs.\hfill $\Box$
\end{proof}

Hence, minimal tree projections always exist, as long as a tree projection exists.

\begin{fact}\label{thm:reducedMinimal}
The pair $(\HG_1,\HG_2)$ has a tree projection if, and only if, it has a minimal tree projection.
\end{fact}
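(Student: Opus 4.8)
The plan is to treat the two directions separately. The ($\Leftarrow$) direction is immediate: by Definition~\ref{def:minimal}, a minimal tree projection is in particular a tree projection, so there is nothing to prove. For the ($\Rightarrow$) direction, I would start from an arbitrary tree projection $\HG_a$ of $\HG_1$ w.r.t.~$\HG_2$ (one exists by hypothesis) and extract a minimal one from it.

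The key preliminary step is a finiteness observation. Since every tree projection $\HG_a'$ must satisfy $\HG_a'\leq \HG_2$, each of its hyperedges is a subset of some hyperedge of $\HG_2$; hence all hyperedges that can ever occur in a tree projection lie in the finite family $U=\bigcup_{h\in\edges(\HG_2)}2^{h}$. Consequently, the set $\mathcal{T}$ of all tree projections of $\HG_1$ w.r.t.~$\HG_2$ is finite, and it is nonempty by assumption.

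I would then invoke Fact~\ref{fact:ordering}: $\subseteq$ is a partial order on hypergraphs, so $\subset$ is a strict partial order, in particular irreflexive and transitive, hence acyclic. Descending from $\HG_a$ along $\subset$ --- repeatedly replacing the current element by some tree projection strictly below it whenever one exists --- produces pairwise distinct members of $\mathcal{T}$ (distinctness follows from irreflexivity and transitivity of $\subset$), so by finiteness the descent must stop, and it can only stop at some $\HG_a^\ast\in\mathcal{T}$ admitting no tree projection strictly below it, i.e., at a minimal tree projection.

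The main obstacle is precisely establishing the finiteness of $\mathcal{T}$ (equivalently, ruling out infinite strictly $\subset$-descending chains of tree projections): once that is in place, the existence of a minimal element is the standard fact that a finite nonempty poset has minimal elements. This finiteness is not deep, but it is essential --- without the bound provided by $\HG_2$, the candidate hypergraphs for $\HG_a$ would form an a priori unbounded collection, and the descent argument would break down.
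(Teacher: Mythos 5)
Your proposal is correct and matches the paper's intent: the paper states this fact without an explicit proof, presenting it as an immediate consequence of Fact~\ref{fact:ordering} (that $\subseteq$ is a partial order, hence $\subset$ a strict one), and your argument simply spells out the implicit justification---boundedness of candidate hyperedges by subsets of hyperedges of $\HG_2$, hence no infinite strictly $\subset$-descending chain, hence a minimal element below any given tree projection. The only remark worth making is that finiteness need only be invoked for the possible \emph{edge sets} (since $\subseteq$, $\subset$, and minimality are defined purely in terms of hyperedges), which is exactly what your bound via $\bigcup_{h\in\edges(\HG_2)}2^{h}$ provides.
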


A further property (again rather intuitive) is that minimal tree projections are \emph{reduced} hypergraphs. Recall that a hypergraph $\HG_a$
is \emph{reduced} if $\edges(\HG_a)$ does not contain two hyperedges $h_a$ and $\bar h_a$ such that $h_a\subset \bar h_a$.

\begin{fact}\label{thm:reducedMinimal2}
Every minimal tree projection is reduced.
\end{fact}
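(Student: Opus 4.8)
The plan is to argue by contradiction: suppose $\HG_a$ is a minimal tree projection of $\HG_1$ w.r.t.~$\HG_2$ that is not reduced, so there exist $h_a, \bar h_a \in \edges(\HG_a)$ with $h_a \subset \bar h_a$. I would then exhibit a tree projection $\HG_a'$ with $\HG_a' \subset \HG_a$, contradicting minimality. The natural candidate is $\HG_a'$ obtained from $\HG_a$ by deleting the smaller hyperedge $h_a$ (if several hyperedges are dominated, delete one such $h_a$; an easy induction then handles the rest, but one deletion already suffices for the contradiction). So $\edges(\HG_a') = \edges(\HG_a) \setminus \{h_a\}$.

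First I would check $\HG_a' \subset \HG_a$. Since $\edges(\HG_a') \subset \edges(\HG_a)$, by the remark preceding Definition~\ref{def:minimal} this gives $\HG_a' \subset \HG_a$ immediately. Second, I would check that $\HG_a'$ is still a tree projection of $\HG_1$ w.r.t.~$\HG_2$, i.e.~that $\HG_1 \leq \HG_a' \leq \HG_2$ and that $\HG_a'$ is acyclic. For $\HG_a' \leq \HG_2$: every hyperedge of $\HG_a'$ is a hyperedge of $\HG_a$, hence contained in some hyperedge of $\HG_2$. For $\HG_1 \leq \HG_a'$: each hyperedge $h$ of $\HG_1$ is contained in some hyperedge $g$ of $\HG_a$; if $g \neq h_a$ we are done, and if $g = h_a$ then $h \subseteq h_a \subset \bar h_a \in \edges(\HG_a')$, so $h$ is still covered. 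The main (and only slightly delicate) point is acyclicity: I need a join tree for $\HG_a'$. Starting from a join tree $\JT$ for $\HG_a$, I would argue that deleting the vertex $h_a$ and reconnecting appropriately yields a join tree for $\HG_a'$. Concretely, since $h_a \subset \bar h_a$, every node of $\HG_a'$ that occurs in $h_a$ also occurs in $\bar h_a$; I would contract the vertex $h_a$ into one of its neighbors in $\JT$ — most cleanly, reattach all subtrees hanging off $h_a$ to $\bar h_a$ is not quite right since $\bar h_a$ need not be adjacent to $h_a$, so instead I simply delete $h_a$ and make all its former neighbors children of a single chosen neighbor $v$; the connectedness condition is preserved because any node $X$ appearing in $h_a$ lies in $\bar h_a$ and, by the connectedness condition in $\JT$, along the whole path from $h_a$ to $\bar h_a$, so removing $h_a$ cannot disconnect the occurrence-subtree of $X$. (If $h_a$ is a leaf of $\JT$ this is trivial.)

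The step I expect to require the most care is precisely this surgery on the join tree: verifying that after removing $h_a$ the induced-subtree (connectedness) condition still holds for every node $X \in \nodes(\HG_a')$. The key observation making it work is that $h_a \subset \bar h_a$ forces $h_a$'s label to be "redundant" — every node certified by $h_a$ is already certified by $\bar h_a$, and the path between them in $\JT$ carries that node throughout — so $h_a$ can be spliced out without breaking connectedness. Once acyclicity of $\HG_a'$ is established, $\HG_a'$ is a tree projection strictly contained in $\HG_a$, contradicting the minimality of $\HG_a$; hence $\HG_a$ must be reduced. $\hfill\Box$
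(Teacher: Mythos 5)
Your overall strategy is exactly the paper's: delete the dominated hyperedge $h_a$, observe that $\HG_1\leq \HG_a'\leq\HG_2$ still holds and that $\edges(\HG_a')\subset\edges(\HG_a)$ yields $\HG_a'\subset\HG_a$, contradicting minimality. (The paper is in fact terser than you are: it simply asserts that the hypergraph obtained by removing $h_a$ is still a tree projection, leaving the preservation of acyclicity implicit, whereas you attempt to verify it by surgery on a join tree.)

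The one place where your write-up does not survive as literally stated is that surgery. Deleting $h_a$ from $\JT$ and attaching all of its former neighbors to an \emph{arbitrarily} chosen neighbor $v$ can break the connectedness condition: if some $X\in h_a$ occurs in two neighbors $u_1,u_2$ of $h_a$ but $X\notin v$, then in the new tree the occurrence set of $X$ is joined only through $v$, which does not contain $X$. For instance, take $h_a=\{X\}$, $\bar h_a=\{X,Y\}$, $u_1=\{X,Z\}$, $u_2=\{W\}$, with $\JT$ a star centered at $h_a$ and $v=u_2$: after the surgery, $\bar h_a$ and $u_1$ both contain $X$ but are linked only via $u_2$. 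The cure is precisely the observation you yourself make: since every $X\in h_a$ lies in $\bar h_a$ and hence, by the connectedness condition of $\JT$, in every vertex on the path from $h_a$ to $\bar h_a$, the neighbor $w$ of $h_a$ on that path satisfies $h_a\subseteq w$; choosing $v:=w$ (equivalently, contracting the edge between $h_a$ and $w$, or reattaching the orphaned subtrees directly to $\bar h_a$, which is unproblematic even though $\bar h_a$ need not be adjacent to $h_a$) makes every reattachment point contain all of $h_a$, after which the connectedness condition is easily checked both for nodes in $h_a$ and for nodes outside it. With that one correction your proof is complete and coincides with the paper's argument, modulo the paper leaving this routine verification implicit.
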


\begin{proof}
Assume for the sake of contradiction that $\HG_a$ is a minimal tree projection of $\HG_1$ w.r.t.~$\HG_2$ such that $\HG_a$ is not reduced. Let
$h_a$ and $\bar h_a$ be two hyperedges of $\HG_a$ such that $h_a\subset \bar h_a$. Consider the tree projection $\HG_a'\neq \HG_a$ obtained by
removing $h_a$ from $\HG_a$, and notice that $\HG_a'\leq \HG_a\leq \HG_2$ and $\HG_1\leq \HG_a'$. Thus, $\HG_a'$ is a tree projection of
$\HG_1$ w.r.t.~$\HG_2$.
However, we have that $\edges(\HG_a')\subset \edges(\HG_a)$, which entails that $\HG_a'\subset \HG_a$ holds, thereby contradicting the
minimality of $\HG_a$.\hfill $\Box$
\end{proof}

The last basic fact is rather trivial: minimal tree projections do not contain nodes that do not occur in $\HG_1$.

\begin{fact}\label{fact:sameNodes}
Let $\HG_a$ be a minimal tree projection of $\HG_1$ w.r.t.~$\HG_2$. Then, $\nodes(\HG_a)=\nodes(\HG_1)$.
\end{fact}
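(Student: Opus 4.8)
The plan is to prove the two inclusions $\nodes(\HG_1)\subseteq\nodes(\HG_a)$ and $\nodes(\HG_a)\subseteq\nodes(\HG_1)$ separately; only the second one really uses minimality. The inclusion $\nodes(\HG_1)\subseteq\nodes(\HG_a)$ is immediate from $\HG_1\leq\HG_a$: every hyperedge $h\in\edges(\HG_1)$ is contained in some hyperedge $h'\in\edges(\HG_a)$, hence every node occurring in some hyperedge of $\HG_1$ occurs in $\HG_a$ as well (and, as customary, every node of $\HG_1$ occurs in one of its hyperedges).

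For the reverse inclusion I would argue by contradiction, exploiting the definition of minimality. Assume there is a node $X\in\nodes(\HG_a)\setminus\nodes(\HG_1)$; in particular $X$ occurs in no hyperedge of $\HG_1$. Let $\HG_a'$ be obtained from $\HG_a$ by deleting $X$ everywhere, i.e., $\nodes(\HG_a')=\nodes(\HG_a)\setminus\{X\}$ and $\edges(\HG_a')=\{\,h\setminus\{X\}\mid h\in\edges(\HG_a)\,\}$. I would then verify that $\HG_a'$ is a tree projection of $\HG_1$ w.r.t.\ $\HG_2$: $(i)$ $\HG_a'$ is acyclic; $(ii)$ $\HG_1\leq\HG_a'$, because for $h\in\edges(\HG_1)$ we have $X\notin h$, so any hyperedge of $\HG_a$ containing $h$ still contains $h$ after removing $X$; and $(iii)$ $\HG_a'\leq\HG_2$, because $h\setminus\{X\}\subseteq h$ for each $h\in\edges(\HG_a)$ and $\HG_a\leq\HG_2$.

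It then remains to see that $\HG_a'\subset\HG_a$, contradicting the minimality of $\HG_a$. If $X$ is isolated in $\HG_a$, then $\HG_a$ and $\HG_a'$ have the same hyperedges but different node sets, so $\HG_a'\subseteq\HG_a$ holds (vacuously) while $\HG_a'\neq\HG_a$. Otherwise some $h\in\edges(\HG_a)$ contains $X$; every hyperedge in $\edges(\HG_a')\setminus\edges(\HG_a)$ then has the form $h\setminus\{X\}$ with $X\in h\in\edges(\HG_a)$, and the corresponding $h$ lies in $\edges(\HG_a)\setminus\edges(\HG_a')$ (every hyperedge of $\HG_a'$ avoids $X$) and strictly contains $h\setminus\{X\}$; hence $\HG_a'\subseteq\HG_a$, and $\HG_a'\neq\HG_a$ since their hyperedge sets differ. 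In both cases $\HG_a'\subset\HG_a$, the desired contradiction.

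The only step that is not purely mechanical is $(i)$: deleting a node from every hyperedge of an acyclic hypergraph leaves it acyclic. I expect this to be the main (minor) obstacle. I would obtain it either from the classical primal-graph characterization of $\alpha$-acyclicity (chordality of the primal graph plus conformality, both preserved when one vertex is removed from the primal graph), or directly from a join tree $\JT$ of $\HG_a$: relabel each vertex $h$ by $h\setminus\{X\}$ — the connectedness condition for every node other than $X$ is inherited — and then contract edges between adjacent vertices carrying the same label to recover a genuine join tree of $\HG_a'$.
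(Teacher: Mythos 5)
Your proposal is correct and follows essentially the same route as the paper: the paper's proof also deletes the nodes in $\nodes(\HG_a)\setminus\nodes(\HG_1)$ from every hyperedge (all at once rather than one at a time), observes that the resulting hypergraph is still an acyclic hypergraph with $\HG_1\leq\HG_a'\leq\HG_a$, and derives $\HG_a'\subset\HG_a$, contradicting minimality. Your verification of acyclicity (via chordality plus conformality of the Gaifman graph, or via relabelling a join tree) and of proper containment just fills in details the paper dismisses as straightforward.
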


\begin{proof}
Let $\HG_a$ be a minimal tree projection of $\HG_1$ w.r.t.~$\HG_2$. Of course, $\nodes(\HG_a)\supseteq\nodes(\HG_1)$ clearly holds. On the
other hand, if $\nodes(\HG_a)\supset\nodes(\HG_1)$, the hypergraphs $\HG_a'$ obtained by deleting from every hyperedge each node in
$\nodes(\HG_a)\setminus\nodes(\HG_1)$ is still an acyclic hypergraph, and $\HG_1\leq\HG_a'\leq\HG_a$ holds. Moreover, it is straightforward to
check that $\HG_a'\subset\HG_a$, which contradicts the minimality of $\HG_a$. \hfill $\Box$
\end{proof}

\subsection{Component trees}

We now generalize to the setting of tree projections some properties of join trees that are required for efficiently computable decompositions
in various notions of structural decomposition methods (see, e.g.,~\cite{gott-etal-99}).
To formalize these properties, we need to introduce some additional definitions, which will be intensively used in the following.

Assume that a hypergraph $\HG$ is given. Let $V$, $W$, and $\{X,Y\}$ be sets of nodes. Then, $X$ is said \adj{V}\ (in $\HG$) to $Y$ if there
exists a hyperedge $h\in \edges(\HG)$ such that $\{X,Y\}\subseteq (h -V)$. A \spath{V}\ from $X$ to $Y$ is a sequence $X=X_0,\ldots,X_\ell=Y$
of nodes such that $X_{i}$ is \adj{V}\ to $X_{i+1}$, for each $i\in [0...\ell\mbox{-}1]$. We say that $X$ \touches{V} $Y$ if $X$ is
\adj{\emptyset} to $Z\in \nodes(\HG)$, and there is a \spath{V} from $Z$ to $Y$; similarly, $X$ \touches{V} the set $W$ if $X$ \touches{V} some
node $Y\in W$.
We say that $W$ is \connected{V}\ if $\forall X,Y\in W$ there is a \spath{V}\ from $X$ to $Y$. A \component{V} (of $\HG$) is a maximal
\connected{V}\ non-empty set of nodes $W\subseteq (\nodes(\HG)-V)$. For any \component{V}\ $C$, let $\edges(C) = \{ h\in \edges(\HG)\;|\; h\cap
C\neq\emptyset\}$, and for a set of hyperedges $H\subseteq \edges(\HG)$, let $\nodes(H)$ denote the set of nodes occurring in $H$, that is
$\nodes(H)=\bigcup_{h\in H} h$. For any component $C$ of $\HG$, we denote by $\F(C,\HG)$  the \emph{frontier} of $C$ (in $\HG$), i.e., the set
$\nodes(\edges(C))$.\footnote{The choice of the term ``frontier'' to name the union of a component with its outer border is due to the role
that this notion plays in hypergraph games, such as the one described in the subsequent section.}
Moreover, $\partial (C,\HG)$ denote the {\em border} of $C$ (in $\HG$), i.e., the set $\F(C,\HG)\setminus C$. Note that $C_1\subseteq C_2$
entails $\F(C_1,\HG)\subseteq \F(C_2,\HG)$. We write simply $\F(C)$ or $\partial C$, whenever $\HG$ is clear from the context.

We find often convenient to think at join trees as rooted trees: For each hyperedge $h\in \edges(\HG)$, the tree obtained by rooting $\JT$ at
vertex $h$ is denoted by $\JT[h]$ (if it is necessary to point out its root). Moreover, for each hyperedge $h'\in \edges(\HG)$ with $h'\neq h$,
let $\JT[h]_{h'}$ denote the subtree of $\JT[h]$ rooted at $h'$, and let $\nodes(\JT[h]_{h'})$ be the set of all nodes of $\HG$ occurring in
the vertices of $\JT[h]_{h'}$.

\begin{definition}\label{def:component-tree}\em
Let $\HG_1$ and $\HG_a$ be two hypergraphs with the same set of nodes such that $\HG_1\leq\HG_a$ and $\HG_a$ is acyclic. A join tree $\JT$ of
$\HG_a$, rooted at some vertex ${\it root}\in \edges(\HG_a)$, is said an {\it $\HG_1$-component tree} if the following conditions hold for each
vertex $h_r\in \edges(\HG_a)$ in $\JT$: \vspace{-1mm}
\begin{description}
  \item[{\sc subtrees$\mapsto$components.}] For each child $h_s$ of $h_r$ in $\JT$, there is exactly one \component{h_r} of $\HG_1$,
      denoted by $\treecomp(h_s)$, such that $\nodes(\JT_{h_s})=\treecomp(h_s)\cup (h_s\cap h_r)$. Moreover, $h_s\cap \treecomp(h_s)\neq
      \emptyset$ and $h_s\subseteq \F(\treecomp(h_s),\HG_1)$ hold.

  \item[{\sc components$\mapsto$subtrees.}] For each \component{h_r} $C_r$ of $\HG_1$ such that $C_r\subseteq \treecomp(h_r)$, with
      $\treecomp({\it root})$ being conventionally defined as $\nodes(\HG_1)$, there is exactly one child $h_s$ of $h_r$ in $\JT$ such that
      $C_r=\treecomp(h_s)$.\hfill $\Box$
\end{description}
\end{definition}

Interestingly, any reduced acyclic hypergraph $\HG_a$ has such an $\HG_a$-component tree (i.e., $\HG_1=\HG_a$, here), as pointed out in the
result below.

\begin{theorem}\label{thm:JTstrong}
Let $\HG_a$ be a reduced acyclic hypergraph (e.g., any minimal tree projection). For any hyperedge $h\in\edges(\HG_a)$, there exists a join
tree $\JT$ rooted at $h$ that is an $\HG_a$-component tree.
\end{theorem}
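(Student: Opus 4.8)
The plan is to begin with an arbitrary join tree of $\HG_a$ (one exists since $\HG_a$ is acyclic), root it at $h$, and then apply a local ``pull-up'' restructuring as long as possible; the process reaches (in any order) a tree to which no pull-up applies, and I will show that such a tree is an $\HG_a$-component tree. Throughout I assume $\nodes(\HG_a)=\bigcup\edges(\HG_a)$, i.e.\ $\HG_a$ has no isolated node; this is implicitly needed, since an isolated node would be a \component{h} of $\HG_a$ that no child vertex could witness (Definition~\ref{def:component-tree} demands $h_s\cap\treecomp(h_s)\neq\emptyset$).

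The restructuring is: if $d$ is a vertex, $q$ its parent, and $p$ is a proper ancestor of $q$ with $d\cap q\subseteq p$, then detach the subtree rooted at $d$ and re-attach it as a child of $p$. First I would check that this keeps the tree a join tree of $\HG_a$ rooted at $h$: the only structural change is replacing the tree edge $\{q,d\}$ by $\{p,d\}$, and for any node $Z$ the set of vertices containing $Z$ remains connected — the only delicate case is $Z\in q\cap d$, when the occurrences of $Z$ used the edge $\{q,d\}$, but then $Z\in d\cap q\subseteq p$, so the new edge $\{p,d\}$ restores the connection (in every other case the occurrences of $Z$ lie entirely on one side of the changed edge and are unaffected). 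Since each pull-up moves the whole subtree rooted at $d$ strictly upwards, the sum of the depths of the vertices strictly decreases, so the process terminates.

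Next I would analyze a tree $\JT$ to which no pull-up applies. The heart of the matter is to show that for every vertex $p$ and every child $c$ of $p$ the set $\nodes(\JT_c)\setminus p$ is exactly one \component{p} of $\HG_a$. It is always a union of \component{p}s: a node $Z\notin\nodes(\JT_c)$ with $Z\notin p$ that is \adj{p} to some node of $\nodes(\JT_c)\setminus p$ would, by the connectedness condition, have its witnessing hyperedge forced into $\JT_c$, whence $Z\in\nodes(\JT_c)$, a contradiction. If $\nodes(\JT_c)\setminus p$ split into two or more components, then (using $c\not\subseteq p$ by reducedness, so $c\setminus p\neq\emptyset$ lies in a single \component{p} $C^{\ast}\subseteq\nodes(\JT_c)\setminus p$) pick $X\in(\nodes(\JT_c)\setminus p)\setminus C^{\ast}$, take the vertex $v^{\ast}$ of $\JT_c$ of smallest depth containing $X$ (it is not $c$, else $X\in c\setminus p\subseteq C^{\ast}$), and walk up from $v^{\ast}$ towards $c$. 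Let $w$ be the last vertex on this walk whose ``$\setminus p$'' part is not contained in $C^{\ast}$ (it exists because $X\in v^{\ast}\setminus p$ and $X\notin C^{\ast}$, and it is not $c$ because $c\setminus p\subseteq C^{\ast}$). Writing $w'$ for the parent of $w$ on the walk, one has $w\cap w'\subseteq p$: a shared node outside $p$ would lie in $w'\setminus p\subseteq C^{\ast}$ and be \adj{p}, within the hyperedge $w$, to a node of $w\setminus p$ outside $C^{\ast}$, which is impossible. Hence a pull-up of $w$ under $p$ is available, contradicting the choice of $\JT$. With the single-component property in hand, both clauses of Definition~\ref{def:component-tree} are obtained by setting $\treecomp(h_s):=\nodes(\JT_{h_s})\setminus h_r$ for each child $h_s$ of each vertex $h_r$: the identity $\nodes(\JT_{h_s})=\treecomp(h_s)\cup(h_s\cap h_r)$ follows from $\nodes(\JT_{h_s})\cap h_r=h_s\cap h_r$ (a consequence of the connectedness condition), $h_s\cap\treecomp(h_s)\neq\emptyset$ follows from $h_s\setminus h_r\neq\emptyset$ (reducedness), $h_s\subseteq\F(\treecomp(h_s),\HG_a)$ follows since $h_s$ meets $\treecomp(h_s)$, and the ``components$\mapsto$subtrees'' direction holds because any node of a relevant \component{h_r} has all of its occurrences inside a single child subtree of $h_r$ — here the no-isolated-node assumption is used — which the single-component property then pins down; one also has to observe $\treecomp(h_s)\subseteq\treecomp(h_r)$ and treat the root with the convention $\treecomp(\root)=\nodes(\HG_a)$.

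The step I expect to be the main obstacle is this fixpoint analysis: choosing the correct pull-up move and proving that its unavailability forces the single-component property (the walk-to-$w$ argument above), since that is where the global combinatorics of the join tree really enters. The rest — that a pull-up preserves join trees, that the process terminates, and that the single-component property entails every clause of Definition~\ref{def:component-tree}, including the surjection onto the relevant components — is routine but calls for careful use of the connectedness condition and for keeping the root and the internal vertices apart.
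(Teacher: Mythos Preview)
Your argument is correct, and it proceeds along a genuinely different route from the paper. The paper does not construct the desired join tree from scratch: it quotes the normal-form result for width-$1$ hypertree decompositions (Theorem~5.4 in~\cite{gott-etal-99}), which already furnishes, for any chosen root, a decomposition whose subtrees correspond to components and satisfy the three clauses of the {\sc subtrees$\mapsto$components} part; the remaining work is then to verify (via two short properties $P_1$, $P_2$) that this decomposition is in fact a join tree of $\HG_a$ and that {\sc components$\mapsto$subtrees} follows. By contrast, you build the component tree \emph{ab initio} by a local pull-up rewriting of an arbitrary join tree and a direct combinatorial fixpoint analysis. What your approach buys is self-containment and elementarity: nothing beyond the connectedness condition of join trees and reducedness is used, and the termination measure (sum of depths) is transparent. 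What the paper's approach buys is brevity, by outsourcing the structural core to an existing normal-form theorem for hypertree decompositions; the price is the dependence on that external result. Both arguments need (and the paper also tacitly uses) the no-isolated-node assumption you flag.
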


\begin{proof}
Let $\HG_a$ be any reduced acyclic hypergraph and let $h\in\edges(\HG_a)$ be any of its hyperedges, and consider
Definition~\ref{def:component-tree}, with its two parts: {\sc subtrees$\mapsto$components} and {\sc components$\mapsto$subtrees}.

\medskip
\noindent {\sc subtrees$\mapsto$components.}
We first show that there is a join tree $\JT$ for $\HG_a$ such that, for each pair $h_r,h_s\in \edges(\HG_a)$ where $h_s$ is a child of $h_r$
in $\JT[h]$, \vspace{-2mm}
\begin{enumerate}
  \item[\em (1)] there is exactly one \component{h_r} $C_r$ of $\HG_a$, denoted by $\treecomp(h_s)$, such that
      $\nodes(\JT[h]_{h_s})=\treecomp(h_s)\cup (h_s\cap h_r)$;

  \item[\em (2)] $h_s\cap \treecomp(h_s)\neq \emptyset$;

  \item[\em (3)] $h_s\subseteq \F(\treecomp(h_s),\HG_a)$.
\end{enumerate}

Since $\HG_a$ is a reduced acyclic hypergraph, the hypertree width of $\HG_a$ is 1. In particular, from the results in~\cite{gott-etal-99} (in
particular, from Theorem~5.4 in~\cite{gott-etal-99}) it follows that, for each hyperedge $h\in \edges(\HG_a)$, there is a width-1 hypertree
decomposition $\HD=\tuple{T,\chi,\lambda}$ for $\HG_a$, where $T$ is rooted at a vertex $\root(T)$ such that $\lambda(\root(T))=\{h\}$ and, for
each vertex $r\in \vertices(T)$ and for each child $s$ of $r$, the following conditions hold: (1) there is (exactly) one $\component{\chi(r)}$
$C_r$ of $\HG_a$ such that $\chi(T_s)\;=\; C_r\cup (\chi(s)\cap \chi(r))$; (2) $\chi(s)\cap C_r \neq \emptyset$, where $C_r$ is the
\component{\chi(r)} of $\HG_a$ satisfying Condition~(1); and (3) $h_s \cap \F(C_r,\HG_a) \neq\emptyset$ holds, where $\{h_s\}= \lambda(s)$ and
$C_r$ is the \component{\chi(r)} of $\HG_a$ satisfying Condition~(1).

Let us now denote by $h_p$ the unique (as the width is 1) hyperedge contained in $\lambda(p)$, for each vertex $p$ of $T$. Recall that
$h=h_{\root(T)}$ is the hyperedge associated with the root of $T$. Let $\JT[h]$ be the tree rooted at $h$ obtained from $T$ by replacing each
vertex $p$ with the corresponding hyperedge $h_p$. Then, for each vertex $r\in \vertices(T)$ and for each child $s$ of $r$, the three
conditions above that hold on $\HD$ can be rewritten as follows: (1) there is (exactly) one $\component{h_r}$ $C_r$ of $\HG_a$ such that
$\nodes(\JT[h]_{h_s})\;=\; C_r\cup (h_s\cap h_r)$; (2) $h_s\cap C_r \neq \emptyset$, where $C_r$ is the \component{h_r} of $\HG_a$ satisfying
Condition~(1); and (3) $h_s\subseteq \F(C_r,\HG_a)$, where $C_r$ is the \component{h_r} of $\HG_a$ satisfying Condition~(1).

It remains to show that $\JT$ is actually a join tree for $\HG_a$. To this end, we claim that the following two properties hold on $\HD$.\\

\noindent \textit{Property $P_1$:} \emph{$\forall p\in \vertices(T)$, $\chi(p)=\nodes(\lambda(p))$.}

\vspace{-2mm}
\begin{itemize}
\item[\ ] \textbf{Proof.} Recall that for each vertex $r\in \vertices(T)$ and for each child $s$ of $r$, the following conditions hold on
    the hypertree decomposition $\HD=\tuple{T,\chi,\lambda}$ for $\HG_a$: (1) there is (exactly) one $\component{\chi(r)}$ $C_r$ of $\HG_a$
    such that $\chi(T_s)\;=\; C_r\cup (\chi(s)\cap \chi(r))$; (2) $\chi(s)\cap C_r \neq \emptyset$, where $C_r$ is the \component{\chi(r)}
    of $\HG_a$ satisfying Condition~(1); and (3) $h_s \cap \F(C_r) \neq\emptyset$ holds, where $\{h_s\}= \lambda(s)$ and $C_r$ is the
    \component{\chi(r)} of $\HG_a$ satisfying Condition~(1).
In fact, $\chi(s)\not\subseteq \chi(r)$ holds, as $\chi(r)\cap C_r=\emptyset$ while $\chi(s)\cap C_r \neq \emptyset$.
Now, from Condition~(4) in the definition of hypertree decompositions it follows that, for each vertex $p\in \vertices(T)$,
$\chi(p)=\nodes(\lambda(p))\cap \chi(T_p)$. Thus, for each node $Y\in\nodes(\HG_a)$, the vertex $\bar p$ with $Y\in \chi(\bar p)$ that is
the closest to the root of $T$ is such that $\chi(\bar p)=\nodes(\lambda(\bar p))$. Indeed, each node $X\in h_{\bar p}$, where
$\lambda(\bar p)=\{h_{\bar p}\}$, must occur in the $\chi$-labeling of some vertex in the subtree rooted at $\bar p$ together with $Y$ in
order to satisfy Condition~(1) in the definition of hypertree decomposition. Thus, $X\in \chi(T_{\bar p})$.
Hence, for the vertex $\root(T)$, it is trivially the case that $\chi(\root(T))=\nodes(\lambda(\root(T)))$. Consider now an arbitrary
vertex $r\in \vertices(T)$ and let $s$ be a child of $r$. Thus, $\{h_s\}=\lambda(s)$, for some hyperedge $h_s$. Recall that
$\chi(s)\not\subseteq \chi(r)$, and take any node $Y\in h_s$ such that $Y\in \chi(s)\setminus \chi(r)$. Because of Condition~(2) in the
definition of hypertree decomposition, $Y$ cannot occur in the $\chi$-labeling of any vertex in path connecting $\root(T)$ and $r$ in $T$.
Thus, $s$ is the vertex closest to the root where $Y$ occurs. Hence, $\chi(s)=\nodes(\lambda(s))$. \hfill $\diamond$
\end{itemize}

\noindent \textit{Property $P_2$:} \emph{ $\forall p_1,p_2\in \vertices(T)$, $\lambda(p_1)\neq\lambda(p_2)$.}

\vspace{-2mm}
\begin{itemize}
\item[\ ] \textbf{Proof.} Assume for the sake of contradiction that there are two vertices $p_1$ and $p_2$ such that
    $\lambda(p_1)=\lambda(p_2)$. Because of Property $P_1$, $\nodes(\lambda(p_1))=\nodes(\lambda(p_2))=\chi(p_1)=\chi(p_2)$. Then, by
    Condition~(2) in the definition of hypertree decomposition, each vertex $p$ in the path between $p_1$ and $p_2$ is such that
    $\nodes(\lambda(p))=\chi(p)=\chi(p_1)=\chi(p_2)$ (because the hypergraph is reduced). In particular, this property holds for one vertex
    $r\in \vertices(T)$ and for one child $s$ of $r$. However, $\chi(r)=\chi(s)$ is impossible as we have observed in the proof of Property
    $P_1$. \hfill $\diamond$
\end{itemize}

Now, we show that hyperedges of $\HG_a$ one-to-one correspond to vertices of $\JT$, and that the connectedness condition holds on $\JT$.

For the first property, note that each vertex $p$ of $T$ corresponds to the hyperedge $h_p$, by construction. Moreover, by Property $P_2$, each
vertex of $\JT$ is mapped to a distinct hyperedge. Thus, it remains to show that for each hyperedge $\bar h\in \edges(\HG_a)$, there is a
vertex $p$ of $T$ such that $\bar h=h_p$. Indeed, note that by Condition~(1) of hypertree decompositions, for each hyperedge $\bar h\in
\edges(\HG_a)$, there is a vertex $p$ in $T$ such that $\bar h\subseteq \chi(p)$. By Property $P_1$ above, this entails that there is a
hyperedge $h_p\in \edges(\HG_a)$ such that $h_p=\chi(p)$ and $\bar h\subseteq h_p$. However, since $\HG_a$ is reduced, $\bar h=h_p$ holds.

We eventually observe that the connectedness condition holds on $\JT$. Indeed, if a node $Y\in\nodes(\HG_a)$ occurs in a vertex $h_{p}$ of
$\JT$, i.e., $Y\in h_p$, we have that $Y\in \chi(p)$ holds by Property $P_1$. By Condition~(2) of hypertree decompositions, the set $\{ p\in
\vertices(T) \mid Y \in \chi(p) \}$ induces a (connected) subtree of $T$. It follows that the set $\{ h_p\in \edges(\HG_a) \mid Y \in h_p \}$
induces a connected subtree of $\JT$.

\medskip

\noindent {\sc components$\mapsto$subtrees.}
Let us now complete the proof by showing that the join tree $\JT$ also satisfies the part {\sc components$\mapsto$subtrees} in
Definition~\ref{def:component-tree}.
Recall that $\treecomp(h)$ is defined as $\nodes(\HG_a)$ for the root $h$, and that $\treecomp(h_s)$ is the unique \component{h_r} with
$\nodes(\JT[h]_{h_s})=\treecomp(h_s)\cup (h_s\cap h_r)$, where $h_s$ is a child of $h_r$ in $\JT[h]$.
In fact, to conclude the proof, we next show that, for each vertex $h_r$ in $\JT[h]$ and for each \component{h_r} $C_r$ of $\HG_a$ such that
$C_r\subseteq \treecomp(h_r)$, there is exactly one child $h_s$ of $h_r$ such that $C_r=\treecomp(h_s)$.

Let $C_r$ be an \component{h_r} such that $C_r\subseteq \treecomp(h_r)$. Assume, first, that $h_r$ is the child of a vertex $h_p\in
\edges(\HG_a)$ of $\JT[h]$, i.e., $h_r$ is distinct from the root $h$ of $\JT[h]$. Then, because of the part {\sc subtrees$\mapsto$components}
above, we have that $\nodes(\JT[h]_{h_r})= \treecomp(h_r)\cup (h_p\cap h_r)$. In particular, this entails that $\nodes(\JT[h]_{h_r})\supseteq
\treecomp(h_r)$. Thus, $\nodes(\JT[h]_{h_r})\supseteq C_r$. Then, since $h_r\cap C_r=\emptyset$, we have that for each node $X\in C_r$, $X$
occurs in some vertex of a subtree of $\JT[h]_{h_r}$ rooted at a child $h_s(X)$ of $h_r$, with $X\in h_s(X)$. In particular, because of the
connectedness condition of join trees, there is precisely one such subtree, since $X\not\in h_r$. Now, we can apply the part {\sc
subtrees$\mapsto$components} above on $h_s(X)$ to observe that there is exactly one \component{h_r} $\treecomp(h_s)$ of $\HG_a$ such that
$\nodes(\JT[h]_{h_s(X)})= \treecomp(h_s)\cup (h_r\cap h_s(X))$. However, since $X\not\in h_r$, $X\in \treecomp(h_s)$ holds. Hence,
$C_r=\treecomp(h_s)$.

Finally, consider now the case where $h_r$ is the root of $\JT[h]$, i.e., $h_r=h$. Then, let $C_r$ be an \component{h_r} and let $X\in C_r$.
Let $h_X$ be the hyperedge that is the closest to the root of $\JT[h]$ and such that $X\in h_X$. Note that because of the connectedness
condition, there is precisely one such hyperedge $h_X$. By using the same line of reasoning as above, it follows that the child $h_s(X)$ of $h$
such that $X$ occurs in some vertex of $\JT[h]_{h_s(X)}$ is the only one satisfying the condition in the statement. \hfill $\Box$
\end{proof}

\subsection{Preservation of Components}

In the light of Theorem~\ref{thm:JTstrong}, the connectivity of an arbitrary tree projection $\HG_a$ for $\HG_1$ (with respect to some
hypergraph $\HG_2$) is characterized in terms of its components. We next show that it can be also characterized in terms of the components of
the original hypergraph $\HG_1$. This is formalized in the following two lemmas.

\begin{lemma}\label{lem:containEdges1}
Let $\HG_1$ and $\HG_a$ be two hypergraphs with the same set of nodes such that $\HG_1\leq\HG_a$. Then, for each $h\in \edges(\HG_a)$ and
\icomponent{h} $C_1$ in $\HG_1$, there is an \icomponent{h} $C_a$ of $\HG_a$ such that $C_1\subseteq C_a$.
\end{lemma}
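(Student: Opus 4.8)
The statement is essentially a monotonicity property of connectivity under the relation $\leq$, so the plan is to lift $\spath{h}$'s from $\HG_1$ to $\HG_a$ edge by edge. The first step is the key observation at the level of adjacency: if $X$ is $\adj{h}$ to $Y$ in $\HG_1$, then $X$ is $\adj{h}$ to $Y$ in $\HG_a$. Indeed, by definition there is $h'\in\edges(\HG_1)$ with $\{X,Y\}\subseteq h'\setminus h$; since $\HG_1\leq\HG_a$, there is $h_a\in\edges(\HG_a)$ with $h'\subseteq h_a$, and as $\{X,Y\}\cap h=\emptyset$ we get $\{X,Y\}\subseteq h_a\setminus h$, which is exactly $\adj{h}$ in $\HG_a$.

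The second step is an immediate induction on the length of a sequence: any $\spath{h}$ $X=X_0,\dots,X_\ell=Y$ in $\HG_1$ is, step by step, also a $\spath{h}$ in $\HG_a$. Consequently, any set of nodes that is $\connected{h}$ in $\HG_1$ is $\connected{h}$ in $\HG_a$. In particular, the $\component{h}$ $C_1$ of $\HG_1$ is a non-empty set that is $\connected{h}$ in $\HG_a$, and since $\HG_1$ and $\HG_a$ have the same node set, $C_1\subseteq\nodes(\HG_1)\setminus h=\nodes(\HG_a)\setminus h$.

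The third step concludes: fix any $X_0\in C_1$ (possible since components are non-empty), and let $C_a$ be the (unique) $\component{h}$ of $\HG_a$ with $X_0\in C_a$; such a component exists because $X_0\in\nodes(\HG_a)\setminus h$. For an arbitrary $Y\in C_1$, there is a $\spath{h}$ from $X_0$ to $Y$ in $\HG_1$, hence in $\HG_a$ by the first two steps, so $Y$ is $\connected{h}$ to $X_0$ in $\HG_a$; by maximality of $C_a$ this forces $Y\in C_a$. Therefore $C_1\subseteq C_a$, as required.

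There is no real obstacle here: the argument is a straightforward unwinding of the definitions of $\adj{h}$, $\spath{h}$, $\connected{h}$, and $\component{h}$, together with the defining property of $\HG_1\leq\HG_a$. The only points that need a word of care are that a $\component{h}$ is by definition non-empty (so that a witness $X_0$ can be chosen) and that $\nodes(\HG_1)=\nodes(\HG_a)$ is used to place $C_1$ inside $\nodes(\HG_a)\setminus h$; both are hypotheses already available.
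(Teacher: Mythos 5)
Your proposal is correct and follows essentially the same route as the paper's proof: the paper likewise notes that $\HG_1\leq\HG_a$ makes every $\spath{h}$ of $\HG_1$ a $\spath{h}$ of $\HG_a$, so $C_1$ is \connected{$h$} in $\HG_a$ and hence contained in some \component{$h$} of $\HG_a$. Your version merely spells out the adjacency-lifting and maximality steps that the paper leaves implicit.
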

\begin{proof}
Since $\HG_1\leq \HG_a$, for each hyperedge $h'\in \edges(\HG_1)$, there is a hyperedge $h_a\in \edges(\HG_a)$ such that $h_1\subseteq h_a$.
Then, for any set of nodes $h$ and any \component{h} $C_1$ of $\HG_1$, it follows that $C_1$ is also \connected{h} in $\HG_a$. Hence, there is
an \component{h} $C_a$ of $\HG_a$ such that $C_1\subseteq C_a$.\hfill $\Box$
\end{proof}

\begin{lemma}\label{lem:containEdges2}
Let $\HG_1$ and $\HG_a$ be two hypergraphs with the same set of nodes such that $\HG_1\leq\HG_a$. Then, for each $h\in \edges(\HG_a)$ and
\icomponent{h} $C_a$ in $\HG_a$, there are $C_1^1,...,C_1^n$ \icomponent{h}s of $\HG_1$ such that $C_a=\bigcup_{i=1}^n C_1^i$.
\end{lemma}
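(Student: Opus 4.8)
The plan is to show that an $[h]$-component $C_a$ of $\HG_a$ decomposes into $[h]$-components of $\HG_1$. First I would observe that, since $\HG_1$ and $\HG_a$ have the same node set, every node of $C_a$ lies in $\nodes(\HG_1)\setminus h$, so it belongs to some $[h]$-component of $\HG_1$. Write $C_1^1,\dots,C_1^n$ for exactly those $[h]$-components of $\HG_1$ that meet $C_a$; by the previous remark $C_a\subseteq\bigcup_{i=1}^n C_1^i$. The task is thus to prove the reverse inclusion, i.e.\ that each $C_1^i$ is entirely contained in $C_a$.

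For the reverse inclusion I would invoke Lemma~\ref{lem:containEdges1}: for each $i$ there is an $[h]$-component $C_a^i$ of $\HG_a$ with $C_1^i\subseteq C_a^i$. Since $C_1^i$ meets $C_a$ by choice, and $C_1^i\subseteq C_a^i$, the two $[h]$-components $C_a$ and $C_a^i$ of $\HG_a$ share a node; but $[h]$-components of a fixed hypergraph are either disjoint or equal (they are the maximal $[h]$-connected subsets of $\nodes(\HG_a)\setminus h$), hence $C_a^i=C_a$. Therefore $C_1^i\subseteq C_a$ for every $i$, which gives $\bigcup_{i=1}^n C_1^i\subseteq C_a$ and completes the equality $C_a=\bigcup_{i=1}^n C_1^i$.

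The argument is essentially bookkeeping once Lemma~\ref{lem:containEdges1} and the disjointness of components of a fixed hypergraph are in hand; the only point that needs a line of care is that $n\geq 1$, i.e.\ that $C_a$ is nonempty (true by the definition of component) so that at least one $C_1^i$ exists, and that no $[h]$-component of $\HG_1$ that misses $C_a$ needs to be added—this is automatic since we defined the family $C_1^1,\dots,C_1^n$ to be exactly those meeting $C_a$. I do not expect any real obstacle here; the subtlety, if any, is simply to be explicit that ``same set of nodes'' is what lets us pass a node of $C_a$ back to $\HG_1$, and that an $[h]$-path in $\HG_1$ remains an $[h]$-path in $\HG_a$ (the direction used in Lemma~\ref{lem:containEdges1}), so that the components of $\HG_1$ refine those of $\HG_a$ rather than the other way around.
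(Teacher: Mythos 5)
Your proposal is correct and follows essentially the same route as the paper: pick the $[h]$-components of $\HG_1$ meeting $C_a$, note every node of $C_a$ lies in one of them (same node set), and use Lemma~\ref{lem:containEdges1} plus disjointness of $[h]$-components of $\HG_a$ to get each such component inside $C_a$. The paper's proof is just a more compressed statement of this argument.
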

\begin{proof}
After Lemma~\ref{lem:containEdges1}, the result follows from the fact that $\HG_1$ and $\HG_a$ are defined over the same set of nodes. Indeed,
let $X$ be a node in $C_a$. Then, since $X\not\in h$, $X$ belongs to an \component{h} $C(X)$ of $\HG_1$, and because of
Lemma~\ref{lem:containEdges1}, $C(X)\subseteq C_a$ holds. Thus, $C_a=\bigcup_{X\in C_a} C(X)$. \hfill $\Box$
\end{proof}

At a first sight, however, since each hyperedge in $\HG_1$ is contained in a hyperedge of $\HG_a$, one may naturally be inclined at thinking
that such a ``bigger'' hypergraph $\HG_a$ is characterized by a higher connectivity, because some nodes that are not (directly) connected by
any edge in $\HG_1$ may be included together in some edge of $\HG_a$. Indeed, in general, for any given set of nodes $h$, evaluating
\component{h}s of $\HG_1$ gives proper subsets of the analogous components evaluated in $\HG_a$. Next, we show that this is not the case if
minimal tree projections are considered.

\begin{figure}[t]
  \centering
  \includegraphics[width=0.9\textwidth]{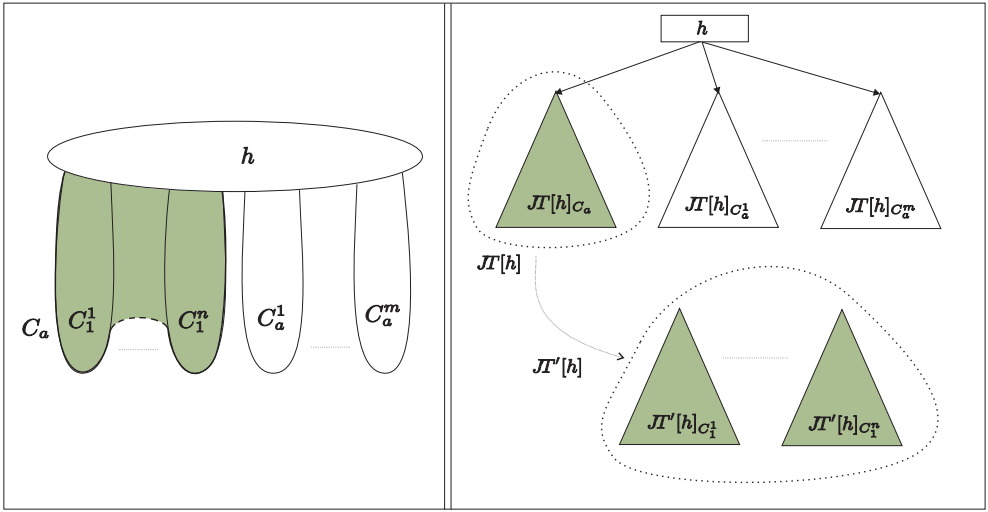}
  \caption{Illustration for the proof of Theorem~\ref{thm:normalization}.}\label{fig:nf}
\end{figure}

\begin{theorem}\label{thm:normalization}
Let $\HG_a$ be a minimal tree projection of $\HG_1$ w.r.t.~$\HG_2$. Then, for each hyperedge $h\in \edges(\HG_a)$, $C\mbox{ is an
}\icomponent{h}\mbox{ of }\HG_a \Leftrightarrow C\mbox{ is an }\icomponent{h}\mbox{ of }\HG_1$.
\end{theorem}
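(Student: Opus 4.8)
The plan is to fix a minimal tree projection $\HG_a$ of $\HG_1$ w.r.t.~$\HG_2$ and a hyperedge $h\in\edges(\HG_a)$, and to prove both implications of the claimed equivalence of $[h]$-components. The direction ``$C$ is an $[h]$-component of $\HG_1$ $\Rightarrow$ $C$ is an $[h]$-component of $\HG_a$'' is the substantive one; the converse then follows easily by Lemma~\ref{lem:containEdges2}. Indeed, if every $[h]$-component of $\HG_1$ is already an $[h]$-component of $\HG_a$, then any $[h]$-component $C_a$ of $\HG_a$, being a union $\bigcup_{i=1}^n C_1^i$ of $[h]$-components of $\HG_1$ by Lemma~\ref{lem:containEdges2}, is a union of $[h]$-components of $\HG_a$; since distinct $[h]$-components are disjoint and $C_a$ is connected, we get $n=1$, so $C_a=C_1^1$ is an $[h]$-component of $\HG_1$. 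So the real work is the forward implication, and by Lemma~\ref{lem:containEdges1} it suffices to rule out the possibility that two distinct $[h]$-components $C_1, C_1'$ of $\HG_1$ become $[h]$-connected in $\HG_a$, i.e., that they lie in a common $[h]$-component $C_a$ of $\HG_a$.

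So I assume for contradiction that some $[h]$-component $C_a$ of $\HG_a$ strictly contains an $[h]$-component $C_1$ of $\HG_1$. The idea is to exploit minimality: I want to exhibit a tree projection $\HG_a'\subset\HG_a$. Using Theorem~\ref{thm:JTstrong}, take a join tree $\JT$ of $\HG_a$ that is an $\HG_a$-component tree, rooted at the hyperedge $h$ (which is possible for any chosen hyperedge). By the {\sc components$\mapsto$subtrees} property, the $[h]$-component $C_a$ corresponds to a unique child $h_s$ of the root $h$, with $\nodes(\JT_{h_s})=C_a\cup(h_s\cap h)$ and $h_s\subseteq\F(C_a,\HG_a)$. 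The subtree $\JT_{h_s}$ is responsible for covering exactly the nodes of $C_a$ plus the ``interface'' $h_s\cap h$. The plan is to surgically modify the hyperedges appearing in $\JT_{h_s}$ so that the extra connectivity linking $C_1$ to the rest of $C_a$ is broken, while still covering every hyperedge of $\HG_1$, keeping acyclicity (the join-tree structure), and staying $\leq\HG_2$ — the last being automatic since we only shrink hyperedges. Concretely, since $C_1$ is a genuine $[h]$-component of $\HG_1$, its border in $\HG_1$ relative to $h$ lies inside $h$; the nodes of $C_a\setminus C_1$ that sit in the same $\HG_a$-hyperedge as nodes of $C_1$ are ``spurious'' and can be removed from that hyperedge. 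I expect to split the relevant part of the join tree into the piece handling $C_1$ and the piece handling $C_a\setminus C_1$, reattaching both to $h$, after deleting the offending co-occurrences; this yields an acyclic $\HG_a'$ with $\HG_1\leq\HG_a'\leq\HG_2$ and $\HG_a'\subset\HG_a$, contradicting minimality. Figure~\ref{fig:nf} presumably depicts exactly this surgery.

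The main obstacle I anticipate is making the surgery precise while preserving both the join-tree (connectedness) condition and the covering condition $\HG_1\leq\HG_a'$ simultaneously. The delicate point is that a single hyperedge of $\HG_a$ inside $\JT_{h_s}$ may contain nodes of several different $[h]$-components of $\HG_1$ at once (some in $C_1$, some in $C_a\setminus C_1$), together with interface nodes in $h$; I cannot simply delete nodes from it without possibly destroying the coverage of some hyperedge of $\HG_1$ that was contained in it. The resolution I would pursue is to argue that any hyperedge $h_1\in\edges(\HG_1)$ is contained in some hyperedge of $\HG_a$, and since $h_1$ lies entirely within one $[h]$-component of $\HG_1$ (plus possibly $h$), its covering $\HG_a$-hyperedge, restricted to that component's frontier, still contains $h_1$; hence restricting each $\HG_a$-hyperedge to $\F(C_1',\HG_1)\cup h$ for the appropriate component $C_1'$ it is assigned to does no harm to coverage. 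The bookkeeping of which vertices of $\JT_{h_s}$ get assigned to which $\HG_1$-component, and checking that the resulting forest can be re-rooted at $h$ into a valid join tree of the shrunken hypergraph, is the part that needs care; once that is set up, the strict containment $\HG_a'\subset\HG_a$ comes from the fact that at least one hyperedge genuinely shrinks (the one witnessing $C_1$'s connection to $C_a\setminus C_1$ in $\HG_a$), and reducedness of $\HG_a$ (Fact~\ref{thm:reducedMinimal2}) guarantees this produces a properly smaller hypergraph in the $\subseteq$ order.
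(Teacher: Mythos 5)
Your overall strategy is the paper's: assume a counterexample, use Lemmas~\ref{lem:containEdges1} and~\ref{lem:containEdges2} to obtain an \component{h} $C_a$ of $\HG_a$ that is the union of $n>1$ \component{h}s $C_1^1,\dots,C_1^n$ of $\HG_1$, take the $\HG_a$-component tree rooted at $h$ given by Theorem~\ref{thm:JTstrong}, shrink the hyperedges meeting $C_a$ by intersecting them with $C_1^i\cup h$ (your $\F(C_1^i,\HG_1)\cup h$ is the same set, since the border of an \component{h} of $\HG_1$ lies inside $h$), and contradict minimality via reducedness. However, there is a genuine gap exactly at the point you flag as delicate. Your resolution assigns each hyperedge $h_a$ of $\HG_a$ in the subtree to a \emph{single} component $C_1'$ and keeps only the one restriction $h_a\cap(C_1'\cup h)$. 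This is not sound in general: a single $h_a$ may cover two hyperedges $h_1,h_1'$ of $\HG_1$ with $h_1\cap C_1^i\neq\emptyset$ and $h_1'\cap C_1^j\neq\emptyset$ for $i\neq j$, and then whichever component $h_a$ is assigned to, the other $\HG_1$-hyperedge loses its cover in $\HG_a'$. The paper's construction avoids any assignment: it keeps \emph{all} intersections $h_a\cap(C_1^i\cup h)$, i.e., it duplicates the entire subtree $\JT[h]_{C_a}$ once per component $C_1^i$, restricting the $i$-th copy to $C_1^i\cup h$; coverage then follows from the observation you already made, that each $\HG_1$-hyperedge lies in one component plus $h$.

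The second missing piece is the verification you defer. With duplication the modified tree need not literally be a join tree (two vertices may carry the same shrunken hyperedge), so the paper instead verifies that it is a width-1 hypertree decomposition with $\chi(p)=\nodes(\lambda(p))$, checking that every vertex label is a hyperedge of $\HG_a'$, that every hyperedge of $\HG_a'$ occurs (here reducedness of $\HG_a$ is used again, to exclude $h_a\subset h$), and that the connectedness condition holds because any node shared by two restricted hyperedges already lay on the corresponding path of $\JT[h]$. Together with the argument that $\HG_a'\subset\HG_a$ (which, as you say, uses Fact~\ref{thm:reducedMinimal2} to show the original hyperedges of $H$ drop out of $\edges(\HG_a')$), this is the technical bulk of the proof; none of it needs ideas beyond your outline, but it only goes through once the per-hyperedge assignment is replaced by the per-component duplication.
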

\begin{proof}
Let $\HG_a$ be a minimal tree projection of $\HG_1$ with respect to $\HG_2$. Let $h$ be in $\edges(\HG_a)$, and assume, by contradiction, that:
$ C\mbox{ is an }\component{h}\mbox{ of }\HG_a \not\Leftrightarrow C\mbox{ is an }\component{h}\mbox{ of }\HG_1.$
From Lemma~\ref{lem:containEdges1} and Lemma~\ref{lem:containEdges2}, it follows that there is an \component{h} $C_a$ in $\HG_a$, and
$n> 1$ \component{h}s  $C_1^1,...,C_1^n$ of $\HG_1$ such that $C_a=\bigcup_{i=1}^n C_1^i$. See Figure~\ref{fig:nf}, for an illustration.

Let $H$ be the set of all hyperedges of $\HG_a$ that intersect $C_a$, i.e., $H=\{ h_a \mid h_a\in \edges(\HG_a)\wedge h_a\cap C_a\neq
\emptyset\}$, and consider the hypergraph $\HG_a'$ defined over the same set of nodes of $\HG_a$ and such that:
$$\edges(\HG_a')  =  (\edges(\HG_a)- H)\ \cup 
                \{ h_a\cap (C_1^i\cup h) \mid h_a\in H, i\in \{1,\dots,n\}\ \}.$$

Note that, since $C_a=\bigcup_{i=1}^n C_1^i$ with $n>1$, there is at least a hyperedge $\bar h_a\in \edges(\HG_a)$ such that $\bar
h_a\cap(C_1^i\cup h)\subset \bar h_a$, for some \component{h} $C_1^i$. Thus, $\HG_a'\neq \HG_a$.
Let in fact $h_1$ be any hyperedge in $\edges(\HG_a')\setminus \edges(\HG_a)$. Then, $h_1\in \{ h_a\cap (C_1^i\cup h) \mid h_a\in H, i\in
\{1,\dots,n\}\ \}$. That is, $h_1=h_a\cap (C_1^{\bar i}\cup h)$ for some hyperedge $h_a\in H$ and \component{h} $C_1^{\bar i}$ of $\HG_1$. In
particular, note that the case where $h_1=h_a=h_a\cap (C_1^{\bar i}\cup h)$ is impossible, for otherwise we would have $h_1\in \edges(\HG_a)$.
Thus, $h_1=h_a\cap (C_1^{\bar i}\cup h)\subset h_a$, which in turn entails that $h_a\cap (C_1^i\cup h)\subset h_a$, for each \component{h}
$C_1^i$.
This property suffices to show that $h_a\not\in\edges(\HG_a')$. Indeed, assume by contradiction that $h_a\in\edges(\HG_a')$. As $h_a\in H$,
there is a hyperedge $h_a'\in H$ such that $h_a=h_a'\cap (C_1^i\cup h)$ for some \component{h} $C_1^i$, and therefore such that $h_a\subseteq
h_a'$. However, since $h_a\cap (C_1^i\cup h)\subset h_a$, we conclude that $h_a\neq h_a'$ and, hence, $h_a\subset h_a'$. This is impossible
since $\HG_a$ is a minimal tree projection, and thus a reduced hypergraph by Fact~\ref{thm:reducedMinimal2}.
It follows that $\HG_a'\subset \HG_a$, because for (the generic) hyperedge $h_1\in (\edges(\HG_a')\setminus\edges(\HG_a))$ there exists
$h_a\in(\edges(\HG_a)\setminus\edges(\HG_a'))$ such that $h_1\subset h_a$.

We now claim that the following three properties hold on $\HG_a'$.
\\

\noindent \textit{Property $P_1$:} \emph{ $\HG_a'\leq \HG_2$.}

\vspace{-2mm}
\begin{itemize}
\item[\ ] \textbf{Proof.} We have to show that for each hyperedge $h_a'\in \edges(\HG_a')$, there is a hyperedge $h_2 \in \edges(\HG_2)$
    such that $h_a'\subseteq h_2$. To this end, observe that for each hyperedge $h_a'\in \edges(\HG_a')$, there is by definition of
    $\edges(\HG_a')$ a hyperedge $h_a \in \edges(\HG_a)$ such that $h_a'\subseteq h_a$. Then, since $\HG_a$ is a tree projection of $\HG_1$
    w.r.t.~$\HG_2$, there is in turn a hyperedge $h_2\in \edges(\HG_2)$ such that $h_a\subseteq h_2$. That is, $h_a'\subseteq h_2$, for
    some $h_2\in \edges(\HG_2)$. \hfill $\diamond$
\end{itemize}

\noindent \textit{Property $P_2$:} \emph{ $\HG_1\leq \HG_a'$.}

\vspace{-2mm}\begin{itemize}
\item[\ ] \textbf{Proof.}  We have to show that for each hyperedge $h_1\in \edges(\HG_1)$, there is a hyperedge $h_a' \in \edges(\HG_a')$ such
    that $h_1\subseteq h_a'$. Let $h_1$ be a hyperedge of $\HG_1$. Since $\HG_a$ is a tree projection of $\HG_1$, we have that there is a
    hyperedge $h_a\in \edges(\HG_a)$ such that $h_1\subseteq h_a$. In the case where $h_1\cap C_a= \emptyset$, we distinguish two subcases.
    Either $h_1\subseteq h$, or $h_1\setminus h\neq \emptyset$. In the former scenario, we have just to observe that $h$ occurs in
    $\edges(\HG_a')$, as $h\cap C_a=\emptyset$, and hence $h=h_a$. In the latter scenario, $h_a\cap C_a$ must be empty, as $h_a$ is
    \connected{h} in $\HG_a$ and $h_1\subseteq h_a$. Again, we have that $h_a$ occurs in $\edges(\HG_a')$.
Consider now the case where $h_1\cap C_a\neq \emptyset$, and let $X\in h_1\cap C_a$. Because of Lemma~\ref{lem:containEdges2}, $X$ must belong
to an \component{h} $C_1^i$ in $\HG_1$. Then, $\edges(\HG_a')$ contains, by definition, the hyperedge $h_a'=h_a\cap (C_1^i\cup h)$. In fact,
since $h_1\subseteq h_a$, we also have $h_1\cap (C_1^i\cup h)\subseteq h_a'$. In order to conclude that  $h_1\subseteq h_a'$, it remains to
observe that all the vertices in $h_1\setminus h$ are contained in $C_1^i$ since $h_1\setminus h$ is \connected{h} in $\HG_1$ and $X\in h_1\cap
C_1^i$. \hfill $\diamond$
\end{itemize}

\noindent \textit{Property $P_3$:} \emph{ $\HG_a'$ is acyclic.}

\vspace{-2mm}\begin{itemize}
\item[\ ] \textbf{Proof.} The proof of this property is rather technical, and hence we find convenient to illustrate its main ideas here, as
    they shed some light on the connectivity of minimal tree projections.
From Theorem~\ref{thm:JTstrong}, we know that $\HG_a$ has an $\HG_a$-component tree rooted at $h$, say $\JT[h]$. For such a join tree, there is
a one-to-one correspondence between components of $\HG_a$ and subtrees of $\JT[h]$. Accordingly, any such a component $C$, denote by
$\JT[h]_{C}$ the subtree rooted at the child $h_s$ of $h$ such that $C=\treecomp(h_s)$.
Then, the line of the proof is to apply a normalization procedure over the subtree $\JT[h]_{C_a}$ which is in charge of decomposing $C_a$, in
order to build the subtrees $\JT'[h]_{C_1^1}$,...,$\JT'[h]_{C_1^n}$, each one being in charge of decomposing an \component{h} in $\HG_1$. An
illustration is reported in Figure~\ref{fig:nf}. The resulting tree $\JT'[h]$ can be shown to be a join tree for $\HG_a'$, thus witnessing that
$\HG_a'$ is acyclic.

Let us now prove formally the result. Recall that $\HG_a$ is reduced because of Fact~\ref{thm:reducedMinimal2}.
From Theorem~\ref{thm:JTstrong}, we know that $\HG_a$ has an $\HG_a$-component tree rooted at $h$, say $\JT[h]$. For such a join tree, there is
a one-to-one correspondence between components of $\HG_a$ and subtrees of $\JT[h]$. Accordingly, for any such a component $C$, denote by
$\JT[h]_{C}$ the subtree rooted at the child $h_s$ of $h$ such that $C=\treecomp(h_s)$.

Let $C_a,C_a^1,...,C_a^m$ be the \component{h}s of $\HG_a$, where $C_a$ is the component such that $C_a=\bigcup_{i=1}^n C_1^i$, with $n>1$ and
$C_1^1,...,C_1^n$ are \component{h}s of $\HG_1$. Based on $\JT[h]$, we shall build a tree $\JT'[h]$ whose vertices are the hyperedges of
$\HG_a'$. In particular, $\JT'[h]$ is a built as follows:
\begin{itemize}
  \item The root of $\JT'[h]$ is the hyperedge $h$.
  \item Each subtree $\JT[h]_{C_a^i}$ occurs in $\JT'[h]$ as a subtree of $h$.
  \item For each \component{h} $C_1^i\subseteq C_a$ in $\HG_1$, $\JT'[h]$ contains, as a subtree of $h$, the subtree $\JT'[h]_{C_1^i}$ that
      is built from $\JT[h]_{C_a}$ by replacing each hyperedge $h_a$ with the hyperedge $h_a\cap(C_1^i\cup h)$.
  \item No further vertices are in $\JT'[h]$.
\end{itemize}

Next, we show that $\JT'[h]$ is a join tree. Actually, $\JT'[h]$ may contain two vertices associated to the same hyperedge of $\HG_a'$ (because
of different original hyperedges that may lead to the same intersections). Thus, formally $\JT'[h]$ cannot be precisely a join tree, and we
shall rather show that it is a hypertree decomposition of width 1 where $\chi(p)=\nodes(\lambda(p))$, for each vertex $p$, which of course
entails the acyclicity of the considered hypergraph. However, for the sake of presentation, we keep the notation of join trees, avoiding the
use of the $\chi$ and $\lambda$-labelings, and we allow that $\JT'[h]$ contains two vertices associated with the same hyperedge of $\HG_a'$.

\medskip
\emph{(i) For each vertex $h'$ in $\JT'[h]$, $h'$ is in $\edges(\HG_a')$.}
Let $h'$ be in $\JT'[h]$. In the case where $h'=h$, or $h'$ occurs in a subtree of the form $\JT'[h]_{C_a^i}$, then $h'$ precisely coincides
with a hyperedge of $\HG_a$ such that $h'\not \in H$. Thus, $h'$ also belongs to $\edges(\HG_a')$, by definition. If $h'$ occurs in a subtree
of the form $\JT'[h]_{C_1^i}$, then $h'=h_a\cap (C_1^i\cup h)$, by construction of $\JT'[h]$, for some hyperedge $h_a$ in $\JT[h]_{C_a}$ which
is, hence, such that $h\neq h_a$. In particular, because of Theorem~\ref{thm:JTstrong}, $h_a\setminus h\subseteq C_a$.
The case $h_a\subseteq h$ (actually, $h_a\subset h$) is impossible, since $h_a$ and $h$ are both hyperedges of $\HG_a$, which is minimal and
hence reduced by Fact~\ref{thm:reducedMinimal2}. Thus, $h_a\cap C_a\neq \emptyset$ and hence $h_a\in H$. Then, the hyperedge $h'=h_a\cap
(C_1^i\cup h)$ is in $\edges(\HG_a')$.

\medskip
\emph{(ii) For each hyperedge $h'$ in $\edges(\HG_a')$, $h'$ is in $\JT'[h]$.}
Let $h'\neq h$ be a hyperedge of $\HG_a'$; indeed, for $h'=h$ the property trivially holds. If $h'$ is also a hyperedge of $\HG_a$, then either
$h'\not\in H$, or $h'\in H$ and there is an \component{h} $C_1^i$ with $h'=h'\cap (C_1^i\cup h)$, i.e., with $h'\subseteq (C_1^i\cup h)$. If
$h'\not\in H$, then $h'\cap C_a= \emptyset$. Then, we have that $h'\cap C_a^i\neq \emptyset$ for some \component{h} $C_a^i\neq C_a$. Hence, due
to Theorem~\ref{thm:JTstrong}, $h'$ occurs in $\JT[h]_{C_a^i}$. The result then follows since $\JT[h]_{C_a^i}$ also occurs as a subtree of
$\JT'[h]$.
Consider now the case where $h'\in H$ and there is an \component{h} $C_1^i$ with $h'=h'\cap (C_1^i\cup h)$, i.e., with $h'\subseteq (C_1^i\cup
h)$. Since $h'\not\subseteq h$, it holds that $h'\cap C_1^i\neq \emptyset$ and hence, due to Lemma~\ref{lem:containEdges2}, $h'\cap C_a\neq
\emptyset$. Then, $h'$ occurs in $\JT[h]_{C_a}$ because of Theorem~\ref{thm:JTstrong} and, by construction, $h'$ occurs in $\JT'[h]_{C_1^i}$.
Finally, assume that $h'$ is not a hyperedge of $\HG_a$. Thus, $h'=h_a \cap (C_1^i\cup h)$, for some hyperedge $h_a\in \edges(\HG_a)$ and
\component{h} $C_1^i$ with $h_a\cap C_a\neq \emptyset$ and $h_a\not\subseteq (C_1^i\cup h)$. Due to Theorem~\ref{thm:JTstrong}, $h_a$ occurs in
$\JT[h]_{C_a}$. Then, by construction, $h'$ occurs in $\JT'[h]_{C_1^i}$.

\medskip
\emph{(iii) The connectedness condition holds on $\JT'[h]$.}
Let $h_{a_1}'$ and $h_{a_2}'$ be two hyperedges in $\HG_a'$ such that $h_{a_2}'$ occurs in the subtree of $\JT'[h]$ rooted at $h_{a_1}'$. Since
subtrees of the form $\JT[h]_{C_a^i}$ are not altered in the transformation, we can focus on the case where $h_{a_2}'$ occurs in some subtree
of the form $\JT'[h]_{C_1^i}$ and where either $h_{a_1}'=h$ or $h_{a_1}'$ occurs in the same subtree. In fact, $h_{a_2}'$ (resp., $h_{a_1}'$)
belonging to $\JT'[h]_{C_1^i}$ entails that $h_{a_2}'=h_{a_2}\cap(C_1^i\cup h)$ (resp., $h_{a_1}'=h_{a_1}\cap(C_1^i\cup h)$), for some
hyperedge $h_{a_2}\in \edges(\HG_a)$ (resp., $h_{a_1}\in \edges(\HG_a)$). Note that to deal uniformly with the two cases above, if
$h_{a_1}'=h$, then we can just set $h_{a_1}=h$.
Now, let $Y$ be a node in $h_{a_1}'\cap h_{a_2}'$. Then, $Y$ belongs to $h_{a_2}\cap h_{a_1}$. Consider a hyperedge $h_a'$ in the path between
$h_{a_1}'$ and $h_{a_2}'$. Again, $h_a'$ belonging to $\JT'[h]_{C_1^i}$ entails that $h_a'=h_a \cap(C_1^i\cup h)$, where $h_a$ is an edge
occurring in the path between $h_{a_1}$ and $h_{a_2}$ in $\JT[h]$. Since $\JT[h]$ is a join tree, $Y$ also occurs in $h_a$, and hence $Y$ is in
$h_a'$.~\hfill~$\diamond$
\end{itemize}

In the light of the above properties, $\HG_a'$ is a tree projection of $\HG_1$ w.r.t.~$\HG_2$ such that $\HG_a'\subset \HG_a$, thereby
contradicting the fact that $\HG_a$ is a minimal tree projection. \hfill $\Box$
\end{proof}

\subsection{Connected Tree Projections}

We next present another interesting property of minimal tree projections: they always admit join trees in a desirable form that we call
connected. Such a form is based on  the well known notion of connected decomposition defined for the treewidth (see, e.g.,~\cite{FN06}). Let
$\tuple{T,\chi}$ be a tree decomposition of a graph $G$. For any pair of adjacent vertices $p_r$ and $p_s$ of $T$, let $T_r$ and $T_s$ be the
two connected subtrees obtained from $T$ by removing the edge connecting $p_r$ and $p_s$. Then, $\tuple{T,\chi}$ is {\em connected} if the
sub-graphs induced by the nodes covered be the $\chi$-labeling in $T_r$ and in $T_s$, respectively, are connected, for each pair of vertices
$p_r$ and $p_s$.

Next, we define a natural extension of this notion to the more general framework of tree projections of hypergraph pairs.

\begin{definition}\label{def:connected-TP}\em
A tree projection $\HG_a$ of $\HG_1$ w.r.t.~$\HG_2$ is \emph{connected} if it has an $\HG_1\mbox{-}{\it connected}$ join tree, i.e., a join
tree $\JT$ with the following property: For each pair of adjacent vertices $h_r,h_s$ of $\JT$, the sub-hypergraph of $\HG_1$ induced by the
nodes in $\nodes(\JT[h_r]_{h_s})$ is \connected{\emptyset}.~\hfill~$\Box$
\end{definition}

\begin{figure*}[t]
  \centering
  \includegraphics[width=0.7\textwidth]{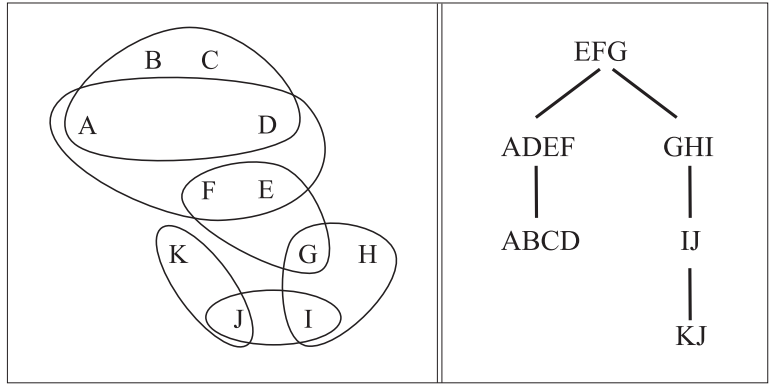}
  \caption{A minimal tree projection $\HG_a'$ of $\HG_{1}$ w.r.t.~$\HG_{2}$, and a join tree in normal form $\JT_a'$ for it.}\label{fig:minimalTP}
\end{figure*}

Note that the novel notion coincides with the original one whenever we considers the treewidth method, that is, whenever we look for tree
projections of pairs of the form $(\HG,\HG^{tk})$, for any fixed natural number $k>0$, where $\HG^{tk}$ is the hypergraph whose hyperedges are
all possible sets of at most $k+1$ nodes in $\HG$.

\begin{example}\em
The tree projection $\HG_a$ of $\HG_{1}$ w.r.t.~$\HG_{2}$ reported in Figure~\ref{fig:hypegraph} is not connected, because it has no
$\HG_{1}$-connected join trees.

For instance, consider the join tree $\JT_a$ depicted in the same figure, and let $h_r=\{E,F,G,H,I,J,K\}$ and $h_s=\{A,D,E,F,J,K\}$. Then, the
sub-hypergraph of $\HG_{1}$ induced by $\nodes(\JT_a[h_r]_{h_s})$ consists of the hyperedges $\{D,F,E\}$, $\{K,J\}$, $\{A,B,C\}$, $\{C,D\}$,
and $\{A,F\}$, and thus $\{K,J\}$ is clearly disconnected from the others. On the other hand, note that the join tree $\JT_a'$ for the minimal
tree projection $\HG_a'$ reported in Figure~\ref{fig:minimalTP} is $\HG_{1}$-connected.  \hfill $\lhd$
\end{example}

We next show that such a connected join tree always exists for any minimal tree projection, as in the special case of the tree decomposition
method. In order to establish the result, we shall exploit an algorithm, called {\em make-it-connected}, that has been described in~\cite{FN06}
and that enjoys the following properties.

\begin{proposition}[cf. \cite{FN06}]\label{prop:makeConnect} Let $\tuple{T,\chi}$ be a width-$k$ tree decomposition of a graph $G$. Then,  {\em Algorithm~make-it-connected}
builds in polynomial time a connected width-$k'$ tree decomposition $\tuple{T',\chi'}$ of $G$, with $k'\leq k$, such that: {\em (1)} for each
vertex $p'$ of $T'$, there is a vertex $p$ of $T$ such that $\chi'(p')\subseteq \chi(p)$; and {\em (2)} if $\tuple{T,\chi}$ is not connected,
then there is a vertex $\bar p$ of $T$ such that $\chi(\bar p)\neq \chi'(p')$, for each vertex $p'$ of $T'$.
\end{proposition}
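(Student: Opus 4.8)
The plan is to realise \emph{Algorithm~make-it-connected} as an iteration of the subtree-splitting transformation already used in the proof of Theorem~\ref{thm:normalization}, now with the single graph $G$ in place of a pair $(\HG_1,\HG_a)$, and then to verify the four claims. The basic step is: first \emph{clean up} $\tuple{T,\chi}$ by contracting, as long as possible, any tree edge $\{p,q\}$ with $\chi(p)\subseteq\chi(q)$ (keeping the bag $\chi(q)$) --- this never enlarges a bag, keeps every bag a subset of an original one, and leaves no bag contained in a neighbour; then, while some rooted edge $(p_r,p_s)$, with $p_s$ a child of $p_r$, has the subgraph of $G$ induced by $\chi(T_{p_s})$ disconnected, with connected components $D_1,\dots,D_m$ ($m\ge2$), replace the subtree $T_{p_s}$ by $m$ copies $T^1,\dots,T^m$ attached to $p_r$, where in $T^i$ each bag $\chi(v)$ is replaced by $\chi(v)\cap D_i$ (so that each adhesion node of $\chi(p_r)\cap\chi(p_s)$ stays in the unique copy that reaches it, keeping connectedness with $p_r$). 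I stress right away that this transformation is the morally correct move but that, as explained in the last paragraph, the \emph{control flow} of make-it-connected in~\cite{FN06} is more careful than ``split any violated edge''; I take its precise form, and its guarantees, from there.

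Granting that the iteration halts, the output $\tuple{T',\chi'}$ is a width-$k'$ tree decomposition of $G$ with $k'\le k$. Each new bag is an intersection of an old one with a set, so $\chi'(p')\subseteq\chi(p)$ for the originating vertex $p$, which is exactly property~(1), and $k'\le k$. For the defining conditions: every node of $G$ still lies in some bag; every edge $\{X,Y\}\in\edges(G)$ has, since $X,Y$ are adjacent in $G$, both endpoints inside a single component $D_i$ at every split, so a copy of any bag that covered it still does; and the running-intersection (connectedness) condition holds by exactly the argument of items~(i)--(iii) in the proof of Property~$P_3$ of Theorem~\ref{thm:normalization}, since a node shared by two surviving bags was shared along a path of bags of $T$ that is copied faithfully into $T'$. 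That the fixed point is connected in the sense recalled above is then immediate, because the loop guard ranges over all rooted edges.

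For property~(2), assume $\tuple{T,\chi}$ is not connected. If the clean-up phase already deletes some original bag --- a contracted $\chi(p)\subsetneq\chi(q)$ occurring at no surviving vertex --- we are done. Otherwise a violated rooted edge survives the clean-up, so the main loop performs at least one split, of the subtree below some $p_s$; at that point no bag is contained in a neighbour, hence $\chi(p_s)\not\subseteq\chi(p_r)$, so $\chi(p_s)$ meets $\chi(T_{p_s})\setminus\chi(p_r)$, a set partitioned among $D_1,\dots,D_m$ with $m\ge2$. Then some bag $\chi(v)$ of $T_{p_s}$ meets at least two of the $D_i$, and every copy of it is a \emph{proper} subset of $\chi(v)$; since the decomposition is reduced at that moment no other original bag equals $\chi(v)$, and every later-created bag is a proper subset of some original bag, hence also $\ne\chi(v)$. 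Thus $\bar p=v$ witnesses the claim. (If, exceptionally, no bag of $T_{p_s}$ meets two components, the disconnection is already reflected in the tree shape; a short additional argument --- or a preliminary normalisation removing such redundant structure --- again exhibits a vanishing bag.)

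The step I expect to be the real obstacle is termination, and beyond it polynomial running time. Even termination is not evident: if one splits \emph{any} currently violated rooted edge, a component peeled off below $p_r$ can reappear in the induced subgraph of a different (upward) rooted edge incident to the new copies and be peeled off again, producing a cycle; indeed naive potentials such as the total number of components of induced subgraphs summed over rooted edges do not decrease, because the $p_r$-side of each new edge grows. Algorithm~make-it-connected of~\cite{FN06} therefore imposes a disciplined choice of which violation to repair and in what order, under which each repair makes irreversible progress, so that only polynomially many splits occur and the final tree has polynomial size; the correctness of this discipline and the polynomial time bound are exactly the technical content I would cite from~\cite{FN06}, as the downstream uses of the proposition require only properties~(1)--(2), the bound $k'\le k$, and connectedness of the output --- all of which follow as above once termination is granted.
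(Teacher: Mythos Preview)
The paper does not prove this proposition at all: it is stated with the attribution ``cf.~\cite{FN06}'' and used as a black box from the literature. There is therefore no paper proof to compare your attempt against; the authors simply import the guarantees of Fraigniaud and Nisse's algorithm and move on to Theorem~\ref{thm:connected}.

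That said, your reconstruction is reasonable in spirit --- the component-splitting step is indeed the same idea the paper itself deploys in Theorem~\ref{thm:normalization} --- but it is not a self-contained proof either, since you explicitly defer termination and the polynomial bound to~\cite{FN06}. Your argument for property~(2) also has a soft spot: in the clean-up phase, contracting an edge with $\chi(p)\subsetneq\chi(q)$ removes the \emph{vertex} $p$, but the \emph{set} $\chi(p)$ could still coincide with some other surviving bag, so ``we are done'' is too quick; and the parenthetical ``exceptionally'' case at the end is a genuine gap you wave away. If you want a complete argument you would need to tighten both points, but since the paper is content to cite the result, so can you.
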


\begin{theorem}\label{thm:connected}
If $\HG_a$ is a minimal tree projection of $\HG_1$ w.r.t.~$\HG_2$, then any join tree $\JT$ for $\HG_a$ is $\HG_1$-connected.
\end{theorem}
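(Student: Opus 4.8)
The plan is to argue by contradiction, and the main tool is Proposition~\ref{prop:makeConnect} (the \emph{make-it-connected} algorithm). The first thing to set up is a dictionary: any join tree $\JT$ of $\HG_a$ is, essentially by definition, a tree decomposition of the Gaifman graph $G_1$ of $\HG_1$. Indeed, put $\chi(h)=h$ for every vertex $h\in\edges(\HG_a)$ of $\JT$; since $\HG_1\leq\HG_a$, every edge of $G_1$ lies inside some hyperedge of $\HG_a$; every node of $\HG_1$ is covered (using $\nodes(\HG_a)=\nodes(\HG_1)$, Fact~\ref{fact:sameNodes}); and the connectedness condition of $\JT$ is exactly condition~(3) of tree decompositions. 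The key point of the dictionary is that, for any set of nodes $W$, the sub-hypergraph of $\HG_1$ induced by $W$ is $[\emptyset]$-connected if, and only if, the subgraph of $G_1$ induced by $W$ is connected (two nodes of $W$ share a restricted hyperedge of $\HG_1$ precisely when they are joined by an edge of $G_1[W]$). Hence $\JT$ is $\HG_1$-connected in the sense of Definition~\ref{def:connected-TP} exactly when $\tuple{\JT,\chi}$ is a connected tree decomposition of $G_1$.

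So assume, for contradiction, that $\JT$ is not $\HG_1$-connected; then $\tuple{\JT,\chi}$ is a non-connected tree decomposition of $G_1$, of width $k=\max_{h\in\edges(\HG_a)}|h|-1$. Running make-it-connected on it yields a tree decomposition $\tuple{T',\chi'}$ of $G_1$ such that: (1) each bag $\chi'(p')$ is contained in some hyperedge of $\HG_a$ (an original bag); and (2) there is a hyperedge $\bar h\in\edges(\HG_a)$ with $\bar h\neq\chi'(p')$ for every vertex $p'$ of $T'$. Let $\HG'$ be the hypergraph on $\nodes(\HG_1)$ whose hyperedges are the bags $\chi'(p')$, and let $\HG_a'$ be its reduction (discard any hyperedge strictly contained in another). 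The claim is that $\HG_a'$ is a tree projection of $\HG_1$ w.r.t.~$\HG_2$ with $\HG_a'\subset\HG_a$, contradicting the minimality of $\HG_a$. The verifications are all routine: $\HG'$ is acyclic because $T'$, after contracting edges joining equal bags, is a join tree of it, and reduction preserves acyclicity; $\HG_1\leq\HG'$ because every hyperedge of $\HG_1$ is a clique of $G_1$ and hence lies inside some bag (standard ``clique in a bag'' property), and this survives reduction; and $\HG'\leq\HG_2$ since each bag lies inside a hyperedge of $\HG_a$, which in turn lies inside a hyperedge of $\HG_2$ by property~(1). Finally, $\HG_a'\subset\HG_a$: since both $\HG_a$ (Fact~\ref{thm:reducedMinimal2}) and $\HG_a'$ are reduced, any $g\in\edges(\HG_a')\setminus\edges(\HG_a)$ is strictly contained in some $h\in\edges(\HG_a)$ with $h\notin\edges(\HG_a')$, giving $\HG_a'\subseteq\HG_a$; and $\HG_a'\neq\HG_a$ because $\bar h\in\edges(\HG_a)$ is not a hyperedge of $\HG_a'$ by property~(2).

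The genuinely non-trivial part is getting the two translations right: first, that join trees of $\HG_a$ are precisely tree decompositions of $G_1$ and that the two notions of connectedness coincide, which is what licenses applying Proposition~\ref{prop:makeConnect} at all; and second, the bookkeeping with the partial order $\subseteq$ on hypergraphs from Definition~\ref{def:minimal}, where one must pass to the reduction $\HG_a'$ of $\HG'$ in order to turn the ``every new bag sits inside an old hyperedge'' guarantee of make-it-connected into an actual containment $\HG_a'\subseteq\HG_a$, and must invoke property~(2) to make that containment strict. Everything beyond these two points is bookkeeping; no component-tree machinery (Theorem~\ref{thm:JTstrong}) or the preservation-of-components result (Theorem~\ref{thm:normalization}) is needed for this argument.
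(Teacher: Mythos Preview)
Your proof is correct and follows essentially the same route as the paper's own argument: translate the join tree into a tree decomposition of the Gaifman graph of $\HG_1$, apply Proposition~\ref{prop:makeConnect} to the assumed non-connected decomposition, take the (reduced) hypergraph of bags, and derive a strictly smaller tree projection, contradicting minimality. You are in fact slightly more explicit than the paper about why the two connectedness notions coincide and about the clique-in-a-bag justification for $\HG_1\leq\HG'$, but the approach is the same.
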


\begin{proof}
Assume that $\HG_a$ is a minimal tree projection of $\HG_1$ w.r.t.~$\HG_2$, hence from Fact~\ref{fact:sameNodes} $\nodes(\HG_1)=\nodes(\HG_a)$.
Let $\JT$ be a join tree for $\HG_a$, and let $\tuple{T,\chi}$ be a labeled tree whose vertices one-to-one correspond with the vertices of
$\JT$. In particular, for each hyperedge $h\in\edges(\HG_a)$, $T$ contains the vertex $p_h$, which is moreover such that $\chi(p_h)=h$.
From the connectedness property of join trees and the fact that $\HG_1\leq \HG_a$, it immediately follows that $\tuple{T,\chi}$ is a tree
decomposition of (the Gaifman graph of) $\HG_1$.
Assume now, for the sake of contradiction, that $\JT$ is not $\HG_1$-connected. Then, $\tuple{T,\chi}$ is not connected too. Thus, we can apply
algorithm {\em make-it-connected} on $\tuple{T,\chi}$, which produces the connected tree decomposition $\tuple{T',\chi'}$, with $T'=(N',E')$ of
(the Gaifman graph of) $\HG_1$.

Let $\HG_a'$ be the acyclic hypergraph such that $\nodes(\HG'_a)=\nodes(\HG_a)$ and $\edges(\HG_a')= \{ \chi'(p') \mid p'\in N'\}$, and let
$\HG_a''$ be the reduced hypergraph obtained from $\HG_a'$ by removing its hyperedges that are proper subsets of some hyperedge in $\HG_a'$.
Therefore, we have $\edges(\HG_a'')\subseteq \edges(\HG_a')$ and $\HG_a'\leq \HG_a''$. Of course, $\HG_a''$ is acyclic too. Moreover, we claim
that $\HG_a''\subseteq \HG_a$. Indeed, for each hyperedge $\chi'(p')\in \edges(\HG_a'')\setminus \edges(\HG_a)$, by
Proposition~\ref{prop:makeConnect}.(1), there is a hyperedge $\chi(p)\in \edges(\HG_a)$ such that $\chi'(p')\subseteq \chi(p)$. Moreover,
$\chi(p)$ cannot occur in $\edges(\HG_a'')$, as $\HG_a''$ is reduced. Hence, $\chi(p)$ is in $\edges(\HG_a)\setminus \edges(\HG_a'')$, and we
actually have $\chi'(p')\subset \chi(p)$. That is, $\HG_a''$ is an acyclic hypergraph with $\HG_a''\subseteq \HG_a$.

Now, observe that since $\tuple{T',\chi'}$ is a tree decomposition of (the Gaifman graph of) $\HG_1$ and since $\HG_a''\subseteq \HG_a$, we
have
$\HG_1 \leq \HG_a'\leq\HG_a'' \leq \HG_a \leq \HG_2$. Thus, $\HG_a''$ is a tree projection of $\HG_1$ w.r.t.~$\HG_2$.
However, by Proposition~\ref{prop:makeConnect}.(2), there is a vertex $\bar p$ of $T$ such that $\chi(\bar p)\neq \chi'(p')$, for each vertex
$p'$ of $T'$. Thus, $\HG_a''\neq \HG_a$. Hence, $\HG_a''$ is a tree projection for $(\HG_1,\HG_2)$ such that $\HG_a'' \subset \HG_a$, which
contradicts the minimality of $\HG_a$. \hfill $\Box$
\end{proof}

Eventually, by exploiting Fact~\ref{thm:reducedMinimal}, we get the following corollary.

\begin{corollary}\label{cor:connected}
$(\HG_1,\HG_2)$ has a tree projection if, and only if, $(\HG_1,\HG_2)$ has a connected tree projection.
\end{corollary}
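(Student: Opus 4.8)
The plan is to observe that the statement follows almost immediately by combining Fact~\ref{thm:reducedMinimal} with Theorem~\ref{thm:connected}, so the proof is essentially a short repackaging rather than a new argument.

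The easy direction is that a connected tree projection is, by Definition~\ref{def:connected-TP}, in particular a tree projection; hence if $(\HG_1,\HG_2)$ has a connected tree projection, it trivially has a tree projection.

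For the converse, I would start from the hypothesis that $(\HG_1,\HG_2)$ has a tree projection. By Fact~\ref{thm:reducedMinimal}, this is equivalent to the existence of a minimal tree projection $\HG_a$ of $\HG_1$ w.r.t.~$\HG_2$. Being acyclic, $\HG_a$ admits at least one join tree $\JT$. By Theorem~\ref{thm:connected}, every join tree of a minimal tree projection is $\HG_1$-connected; in particular $\JT$ is $\HG_1$-connected. By Definition~\ref{def:connected-TP}, the existence of an $\HG_1$-connected join tree for $\HG_a$ means exactly that $\HG_a$ is a connected tree projection of $\HG_1$ w.r.t.~$\HG_2$. Hence $(\HG_1,\HG_2)$ has a connected tree projection, and the two conditions are equivalent.

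Since both implications are immediate consequences of results already established in the excerpt, there is no genuine obstacle here: the substantive content lies entirely in Theorem~\ref{thm:connected}, while Fact~\ref{thm:reducedMinimal} supplies the (trivial) step of passing from an arbitrary tree projection to a minimal one. The only thing worth spelling out is that a minimal tree projection, being acyclic, does possess a join tree to which Theorem~\ref{thm:connected} can be applied.
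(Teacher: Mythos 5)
Your proposal is correct and matches the paper's intended argument: the paper derives the corollary exactly by combining Fact~\ref{thm:reducedMinimal} (existence of a minimal tree projection) with Theorem~\ref{thm:connected} (every join tree of a minimal tree projection is $\HG_1$-connected), the converse direction being trivial. Nothing is missing.
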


\begin{remark}\label{rem:alternative-connectedness}\em
A different notion of connected decomposition has been introduced in~\cite{SH07} for the special case  of (generalized) hypertree
decompositions, in order to speed-up their computation. According to~\cite{SH07}, a (generalized) hypertree decomposition
$\HD=\tuple{T,\chi,\lambda}$ is {connected} if the root $r$ of $T$ is such that $|\lambda(r)|=1$, and for each pair of nodes $p$ and $s$, with
$s$ child of $p$ in $T$, and for each $h\in \lambda(s)$, $h\cap \chi(s)\cap\chi(p)\neq \emptyset$. The connected (generalized) hypertree width
$c(g)hw$ is the minimum width over all the possible connected (generalized) hypertree decompositions.
Whether or not $chw(\HG)=hw(\HG)$ for every hypergraph $\HG$ was an open question~\cite{SH07}.

Next, we give a negative answer to this question by showing that the latter notion of connectedness gives a structural method that is weaker
than the unrestricted (generalized) hypertree decomposition, even on graphs.
Consider the graph $G_{hex}$ in Figure~\ref{fig:controesempio}. As shown in the same figure, there is a hypertree decomposition
$\HD_{hex}=\tuple{T,\chi,\lambda}$ of this (hyper)graph having width 3, and thus $hw(G_{hex})\leq 3$. In $\HD_{hex}$, for each vertex $p$ of
$T$, $\chi(p)=\nodes(\lambda(p))$ holds, and thus Figure~\ref{fig:controesempio} shows only the $\lambda$-labeling of each vertex. Moreover,
only the left branch is detailed, showing how to deal with the upper cluster of hexagons. The other subtrees are of the same form, and thus are
not reported, for the sake of simplicity.
Note that $\{0\}$ and its child $\{0,21,42\}$ violate the required connectedness property. In fact, it turns out that the only way to attack
such hexagons is by using, at some vertex $s$ of the decomposition tree, some nodes that are not directly connected to (the nodes occurring in)
the parent vertex of $s$. Indeed, the reader can check there is neither a hypertree decomposition nor a generalized hypertree decomposition of
$G_{hex}$ that is connected according to~\cite{SH07} and has width 3. Thus, the following holds.
\end{remark}

\begin{figure}[t]
  \centering
  \includegraphics[width=0.99\textwidth]{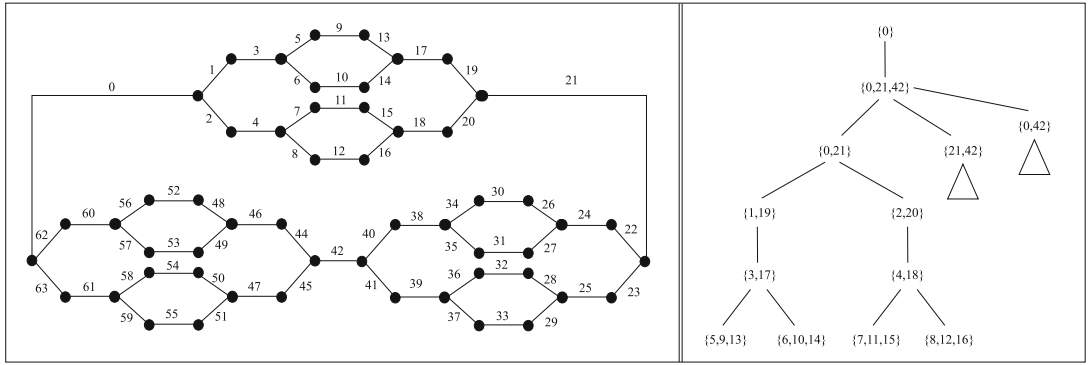}
  \caption{A graph $G_{hex}$ with $cghw(G_{hex})=4>3=hw(G_{hex})$.}\label{fig:controesempio}
\end{figure}

\begin{fact}\label{thm:connected2}
There is a graph $G_{hex}$ such that $cghw(G_{hex}) > hw(G_{hex})$.
\end{fact}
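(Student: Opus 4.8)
The plan is to prove the two inequalities $hw(G_{hex})\le 3$ and $cghw(G_{hex})\ge 4$ separately; together they give $cghw(G_{hex})\ge 4>3\ge hw(G_{hex})$, which is the claimed Fact.

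For the upper bound it suffices to check that the hypertree $\HD_{hex}=\tuple{T,\chi,\lambda}$ drawn in Figure~\ref{fig:controesempio} is a hypertree decomposition of $G_{hex}$ of width $3$. This is a routine verification of Conditions~(1)--(4): every edge of $G_{hex}$ is contained in some $\chi(p)$, every node induces a connected subtree of $T$, $|\lambda(p)|\le 3$ and $\chi(p)=\nodes(\lambda(p))$ for every vertex $p$, and the descendant condition~(4) holds along each branch. As the hexagonal clusters of $G_{hex}$ are pairwise isomorphic and attached to $G_{hex}$ in the same way, it is enough to perform this verification on the single branch detailed in the figure and invoke symmetry for the others. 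Hence $hw(G_{hex})\le 3$.

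The substantial part is the lower bound $cghw(G_{hex})\ge 4$: no generalized hypertree decomposition of $G_{hex}$ that is connected in the sense of Remark~\ref{rem:alternative-connectedness} has width at most $3$. Assume for contradiction that $\HD=\tuple{T,\chi,\lambda}$ is such a decomposition, rooted so that $|\lambda(\root(T))|=1$ and satisfying $h\cap\chi(s)\cap\chi(p)\neq\emptyset$ for every vertex $p$, every child $s$ of $p$, and every edge $h\in\lambda(s)$. The argument localises to a single hexagonal cluster $H$ of $G_{hex}$ --- a union of chordless $6$-cycles --- which is glued to the rest of $G_{hex}$ only through a bounded-size set $A_H$ of attachment nodes (for the left cluster, the node $0$ together with the few chain nodes $0,21,42,\dots$ of Figure~\ref{fig:controesempio}). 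Walking from $\root(T)$ toward $H$, let $s$ be the first vertex whose bag reaches into $H$ beyond $A_H$, and let $p$ be its parent. Because the $6$-cycles inside $H$ are $2$-connected, any split of $H$ induced by the decomposition separates $H$ along a set of vertices that contains a pair of \emph{non-adjacent} vertices of $H$; covering such a pair needs two distinct edges, and the connectedness requirement of Remark~\ref{rem:alternative-connectedness} forces each of those edges to be anchored to the parent bag, so that $\chi(p)$ must in turn contain a pair of non-adjacent vertices of $H$, and so on up the tree. Following this chain of constraints down to $A_H$ --- which, by the way $G_{hex}$ is constructed, cannot supply enough anchor points to sustain it --- one is forced to assign at least four edges to some vertex along the way, contradicting width $\le 3$. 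This yields $cghw(G_{hex})\ge 4$.

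The genuine obstacle is this last step: establishing that a connected, width-$3$ generalized hypertree decomposition has no legal way to reach into a hexagonal cluster. The delicate point is that the ordinary (generalized) hypertree width of $G_{hex}$ is already $3$, so the contradiction cannot come from an edge-counting argument alone --- it must genuinely exploit the requirement of Remark~\ref{rem:alternative-connectedness} that every edge used at a child $s$ meet $\chi(s)\cap\chi(p)$. Concretely, one has to rule out, configuration by configuration, every way in which the at most three edges of a bag could be distributed so as to both cover an internal $6$-cycle separator of a hexagon and remain attached to the parent bag through an edge; the key quantitative fact is that the attachment nodes $A_H$ provide too few such attachment points for this to be possible with only three edges per bag. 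Since all hexagonal clusters of $G_{hex}$ are isomorphic and identically attached, it suffices to run this analysis on the single branch shown in Figure~\ref{fig:controesempio}. Finally --- although not needed for the Fact --- enlarging each offending bag of $\HD_{hex}$ by one extra edge produces a connected generalized hypertree decomposition of width $4$, so that in fact $cghw(G_{hex})=4$.
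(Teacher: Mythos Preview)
Your approach matches the paper's: exhibit the width-$3$ hypertree decomposition in Figure~\ref{fig:controesempio} for the upper bound, and argue that no connected (in the sense of~\cite{SH07}) generalized hypertree decomposition of width $3$ exists for the lower bound. The paper itself does not give a detailed argument for the lower bound --- it simply states that ``the reader can check'' that no such decomposition exists --- so there is no proof in the paper to compare your elaboration against.

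That said, your elaboration of the lower bound is a sketch, not a proof. The propagation argument (``$\chi(p)$ must in turn contain a pair of non-adjacent vertices of $H$, and so on up the tree'') is asserted rather than established: you have not shown that the anchoring constraint from Remark~\ref{rem:alternative-connectedness} actually forces the non-adjacency to persist at the parent, nor have you quantified why the attachment set $A_H$ ``cannot supply enough anchor points.'' You yourself acknowledge that ``one has to rule out, configuration by configuration, every way in which the at most three edges of a bag could be distributed,'' and then do not do so. Since $hw(G_{hex})=3$, the obstruction is genuinely delicate --- a width-$3$ bag \emph{can} separate a hexagon, so the contradiction must come entirely from the connectedness clause, and that requires the case analysis to actually be carried out rather than deferred. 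As written, your argument is at the same level of rigor as the paper's (an appeal to verification), dressed up with plausible but unverified structural claims.
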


\subsection{Tree Projections in Normal Form}

We next show the main result of this section, where all the above ingredients are exploited together: all minimal tree projections have join
trees in a suitable normal form. This normal form is of theoretical interest, since it can be exploited to establish further results on the
setting (as its game-theoretic characterization discussed in Section~\ref{GTC}). Moreover, it is of practical interest, since it can be used to
prune the search space in solution approaches aimed at computing tree projections.

\begin{definition}\label{def:nf}\em
A join tree of a tree projection of $\HG_1$ w.r.t.~$\HG_2$ is said in {\em normal form} if it is $\HG_1$-connected and it is an
$\HG_1$-component tree. \hfill $\Box$
\end{definition}

\begin{example}\em
Consider again the tree projection $\HG_a$ and its join tree $\JT_a$ illustrated in Figure~\ref{fig:hypegraph}. Consider the vertices
$h_r=\{E,F,G,H,I,J,K\}$ and $h_s=\{A,D,E,F,J,K\}$ in $\JT_a[\{E,F,G,H,I,J,K\}]$, and note that there is exactly one \component{h_r}
$\treecomp(h_s)=\{A,B,C,D\}$ of $\HG_{1}$ such that $\nodes(\JT[h]_{h_s})=\{A,B,C,D,E,F,J,K\}=\treecomp(h_s)\cup (h_s\cap h_r)$. However,
$\F(\treecomp(h_s),\HG_{1})=\{A,B,C,D,E,F\}$ and hence $h_s\not\subseteq \F(\treecomp(h_s),\HG_{1})$. Thus, one condition in the ({\sc
subtrees$\mapsto$components})-part of Theorem~\ref{thm:mainMinimal} is violated.

Indeed, the tree projection $\HG_a$ is not minimal. This is witnessed by the tree projection $\HG_a'$ for $(\HG_{1},\HG_{2})$ that is reported
on the left of Figure~\ref{fig:minimalTP} and that is properly contained in $\HG_a$. A join tree $\JT_a'$ for $\HG_a'$ is reported on the right
of the same figure. The careful reader may check that $\JT_a'$ satisfies all conditions in Theorem~\ref{thm:mainMinimal}. \hfill $\lhd$
\end{example}

\begin{theorem}[Normal Form]\label{thm:mainMinimal}
Let $\HG_a$ be a minimal tree projection of $\HG_1$ w.r.t.~$\HG_2$. For any hyperedge $h\in\edges(\HG_a)$, there is a join tree for $\HG_a$ in
normal form rooted at $h$.
\end{theorem}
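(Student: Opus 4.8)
The plan is to build a single join tree $\JT$ of $\HG_a$, rooted at $h$, that meets both requirements of Definition~\ref{def:nf} at once, by combining the three results already established for minimal tree projections. Since $\HG_a$ is minimal it is reduced (Fact~\ref{thm:reducedMinimal2}) and $\nodes(\HG_a)=\nodes(\HG_1)$ (Fact~\ref{fact:sameNodes}). First I would apply Theorem~\ref{thm:JTstrong} to the reduced acyclic hypergraph $\HG_a$ in order to obtain, for the prescribed hyperedge $h$, a join tree $\JT$ of $\HG_a$ rooted at $h$ which is an $\HG_a$-component tree. By Theorem~\ref{thm:connected}, every join tree of the minimal tree projection $\HG_a$ is $\HG_1$-connected, so $\JT$ in particular has this property. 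Hence it only remains to show that this same $\JT$ is also an $\HG_1$-component tree; once this is proved, $\JT$ is in normal form and the statement follows.

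For the upgrade from ``$\HG_a$-component tree'' to ``$\HG_1$-component tree'' I would go through the two parts of Definition~\ref{def:component-tree}, substituting $\HG_1$ for $\HG_a$, and lean on Theorem~\ref{thm:normalization}, which asserts that for every hyperedge $h_r$ the \component{h_r}s of $\HG_a$ and of $\HG_1$ are the same sets. Fix a vertex $h_r$ of $\JT$ and a child $h_s$ of it. Then $\treecomp(h_s)$, being an \component{h_r} of $\HG_a$, is also an \component{h_r} of $\HG_1$; and the remaining conditions of the {\sc subtrees$\mapsto$components} part, namely $\nodes(\JT_{h_s})=\treecomp(h_s)\cup(h_s\cap h_r)$ and $h_s\cap\treecomp(h_s)\neq\emptyset$, are assertions about sets and about the shape of $\JT$, so they carry over unchanged. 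Likewise, the {\sc components$\mapsto$subtrees} part transfers because the \component{h_r}s of $\HG_1$ included in $\treecomp(h_r)$ are precisely those of $\HG_a$, and each of them equals $\treecomp(h_s)$ for exactly one child $h_s$ of $h_r$. The one clause that is not immediate is $h_s\subseteq\F(\treecomp(h_s),\HG_1)$, since Theorem~\ref{thm:JTstrong} only delivers $h_s\subseteq\F(\treecomp(h_s),\HG_a)$.

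Proving this last clause is the \emph{main obstacle}, and I would reduce it to showing $\F(C,\HG_1)=\F(C,\HG_a)$ for every set $C$ that is an \component{h_r} of both hypergraphs. The inclusion $\F(C,\HG_1)\subseteq\F(C,\HG_a)$ is immediate from $\HG_1\leq\HG_a$. For the reverse, I would first note that the border is contained in the separator: any node of $\F(C,\HG_a)\setminus C$ lying outside $h_r$ would be \adj{h_r} to a node of $C$ and hence, by maximality of the component, would itself belong to $C$; moreover, using the connectedness condition of the join tree $\JT$, one gets $\partial(C,\HG_a)=h_s\cap h_r$, where $h_s$ is the child of $h_r$ with $\treecomp(h_s)=C$. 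So it suffices to show that every $Y\in h_s\cap h_r$ lies in some hyperedge of $\HG_1$ that meets $C$. I would prove this by contradiction, mimicking the replacement technique used in the proof of Theorem~\ref{thm:normalization}: supposing $Y\notin\F(C,\HG_1)$, delete $Y$ from every hyperedge of $\HG_a$ that occurs as a vertex of the subtree $\JT_{h_s}$; the vertices of $\JT_{h_s}$ containing $Y$ form a connected subtree rooted at $h_s$, so the connectedness condition of join trees is preserved and the resulting hypergraph $\HG_a'$ is still acyclic (possibly with repeated vertex labels, to be handled via width-$1$ hypertree decompositions exactly as in Theorem~\ref{thm:normalization}). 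The crucial point is that $\HG_a'$ remains a tree projection, that is $\HG_1\leq\HG_a'\leq\HG_2$: shrinking hyperedges keeps $\HG_a'\leq\HG_a\leq\HG_2$, and no hyperedge $h_1\in\edges(\HG_1)$ with $Y\in h_1$ is uncovered, because such an $h_1$ is contained in some vertex of $\JT_{h_s}$ and hence in $C\cup(h_s\cap h_r)$, so either $h_1\subseteq h_r$, in which case $h_1$ is still covered by the unchanged parent $h_r$, or $h_1$ meets $C$, which forces $Y\in h_1\subseteq\F(C,\HG_1)$, contrary to the assumption. Finally, since $\HG_a$ is reduced, removing $Y$ from $h_s$ produces a proper subset of $h_s$ that is not already a hyperedge of $\HG_a$, whence $\HG_a'\subset\HG_a$, contradicting the minimality of $\HG_a$. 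This contradiction yields $\F(C,\HG_1)=\F(C,\HG_a)$ and therefore $h_s\subseteq\F(\treecomp(h_s),\HG_1)$, finishing the verification that $\JT$ is an $\HG_1$-component tree and thus a join tree in normal form.
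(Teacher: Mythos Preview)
Your proposal is correct and follows essentially the same approach as the paper's proof: obtain an $\HG_a$-component tree from Theorem~\ref{thm:JTstrong}, transfer the component structure to $\HG_1$ via Theorem~\ref{thm:normalization}, establish the missing frontier clause $h_s\subseteq\F(\treecomp(h_s),\HG_1)$ by a minimality contradiction (deleting the offending nodes from every hyperedge in the subtree $\JT_{h_s}$ to produce a strictly smaller tree projection), and invoke Theorem~\ref{thm:connected} for $\HG_1$-connectedness. The only cosmetic differences are that the paper removes the whole set $W=h_s\setminus\F(\treecomp(h_s),\HG_1)$ at once rather than a single node $Y$, and argues the frontier clause directly rather than via the intermediate equality $\F(C,\HG_1)=\F(C,\HG_a)$; neither changes the substance of the argument.
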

\begin{proof}
Let $h\in\edges(\HG_a)$ be any hyperedge of the tree projection $\HG_a$. Since minimal tree projections are reduced, Theorem~\ref{thm:JTstrong}
entails that $\HG_a$ has a join tree $\JT$ that is an $\HG_a$-component tree rooted at $h$. Again from minimality and
Theorem~\ref{thm:normalization}, we get that $\component{h'}$s in $\HG_a$ and  $\component{h'}$s in $\HG_1$ do coincide, for each
$h'\in\edges(\HG_a)$. Thus, $\JT$ has the following properties:

\vspace{-1mm}
\begin{description}
  \item[\sc subtrees$\mapsto$components.] For each vertex $h_r$ of $\JT[h]$ and each child $h_s$ of $h_r$, there is exactly one
      \component{h_r} $\treecomp(h_s)$ of $\HG_1$ such that $\nodes(\JT[h]_{h_s})=\treecomp(h_s)\cup (h_s\cap h_r)$. Moreover, $h_s\cap
      \treecomp(h_s)\neq \emptyset$ holds.

  \item[\sc components$\mapsto$subtrees.] For each vertex $h_r$ of $\JT[h]$ and each \component{h_r} $C_r$ of $\HG_1$ such that
      $C_r\subseteq \treecomp(h_r)$,  there is exactly one child $h_s$ of $h_r$ such that $C_r=\treecomp(h_s)$.
\end{description}

Hence, in order to prove that $\JT$ is an $\HG_1$-component tree, it remains to show that, for each vertex $h_r$ of $\JT[h]$ and each child
$h_s$ of $h_r$, $h_s\subseteq \F(\treecomp(h_s),\HG_1)$ holds.
Assume, for the sake of contradiction, that there is a vertex $h_r$ and a child $h_s$ of $h_r$ such that $h_s\subseteq
\F(\treecomp(h_s),\HG_1)$ does not hold. From Theorem~\ref{thm:JTstrong}, we know that $h_s\subseteq \F(\treecomp(h_s),\HG_a)$ holds. It
follows that there exists a non-empty set $W\subseteq h_s\setminus \treecomp(h_s)$ of nodes such that $X\not\in \F(\treecomp(h_s),\HG_1)$, for
each $X\in W$. Moreover, as $\nodes(\JT[h]_{h_s})=\treecomp(h_s)\cup (h_s\cap h_r)$, we have that $W\subseteq h_s\cap h_r$. Consider the
hypergraph $\HG_a'$ obtained from $\HG_a$ by replacing each hyperedge $\bar h$ occurring in $\JT[h]_{h_s}$ with $\bar h\setminus W$, and note
that $\HG_a'\subset \HG_a$. Of course, the tree $\JT'$ obtained from $\JT$ by replacing any such $\bar h$ with $\bar h\setminus W$ is a join
tree for $\HG_a'$. Finally, $\HG_a'$ is again a tree projection for $(\HG_1,\HG_2)$ because every hyperedge of $\HG_1$ is still covered by some
vertex in $\JT'$. Indeed, there is no hyperedge $h\in\edges(\HG_1)$ such that both $h\cap W\neq\emptyset$ and $h\cap
\treecomp(h_s)\neq\emptyset$, by construction of $W$. This contradicts the fact that $\HG_a$
 is a minimal tree projection for
$(\HG_1,\HG_2)$.

Finally, from Theorem~\ref{thm:connected}, $\JT$ is $\HG_1$-connected.\hfill $\Box$
\end{proof}

\section{Game-Theoretic Characterization}\label{GTC}

The \emph{Robber and Captain} game is played on a pair of hypergraphs $(\HG_1,\HG_2)$ 
by a Robber and a Captain controlling some squads of cops, in charge of the surveillance of a number of strategic targets. The Robber stands on
a node and can run at great speed along the edges of $\HG_1$; however, she is not permitted to run trough a node that is controlled by a cop.
Each move of the Captain involves one squad of cops, which is encoded as a hyperedge $h\in \edges(\HG_2)$. The Captain may ask any cops in the
squad $h$ to run in action, as long as they occupy nodes that are currently reachable by the Robber, thereby blocking an escape path for the
Robber. Thus, ``second-lines'' cops cannot be activated by the Captain. Note that the Robber is fast and may see cops that are entering in
action. Therefore, while cops move, the Robber may run trough those positions that are left by cops or not yet occupied. The goal of the
Captain is to place a cop on the node occupied by the Robber, while the Robber tries to avoid her capture.

For a comparison, observe that this game is somehow in the middle between the Robber and Marshals game of \cite{gott-etal-03}, where the
marshals occupy a full hyperedge at each move, and the Robber and Cops game of \cite{ST93}, where each cop stands on a vertex and thus, if
there are enough cops, any subset of any edge can be blocked at each move. Instead, the Captain cannot employ ``second-lines'' cops, but only
cops whose positions are under possible Robber attacks.

\begin{figure*}[t!]
  \centering
  \includegraphics[width=0.9\textwidth]{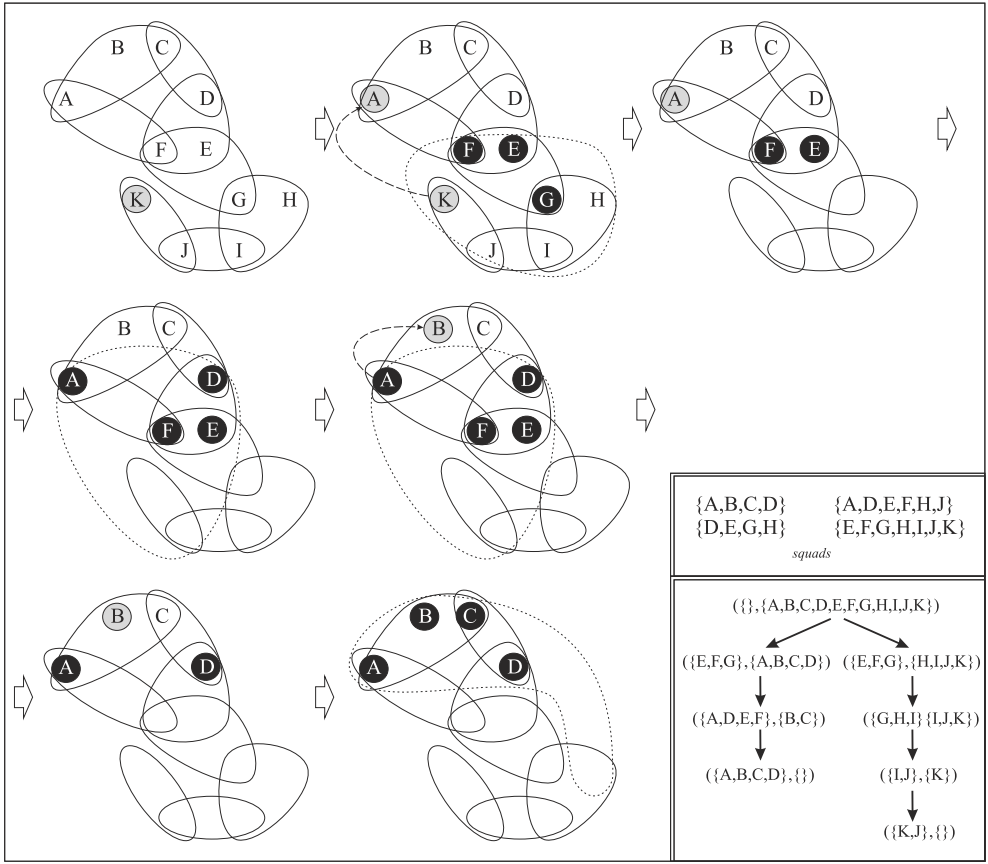}
  \caption{The Robber and Captain game on $(\HG_{1},\HG_{2})$.}\label{fig:strategy}
\end{figure*}

\begin{example}\label{ex:CR}\em
Consider the Robber and Captain game played on the pair $(\HG_{1},\HG_{2})$ of hypergraphs depicted in Figure~\ref{fig:hypegraph}, and the
sequences of moves illustrated in Figure~\ref{fig:strategy}.

Initially, the Robber stands on the node $K$, and each other node is reachable. The Captain selects the squad $\{E,F,G,H,I,J,K\}$ and uses the
three cops blocking $E$, $F$, and $G$. The Robber sees the cops and, while they enter in action, is fast enough to run on $A$. Note that, when
the Robber is on $A$ and nodes $E$, $F$, and $G$ are blocked by the Captain, the Robber can move over $\{A,B,C,D\}$, while $\{E,F\}$ are also
under possible Robber attacks because they are adjacent to her escape space. All other nodes are no longer reachable by the Robber and no
longer depicted. Hence, the Captain might ask cops to occupy some of the nodes in $\{A,B,C,D,E,F\}$, provided they are covered by some
hyperedge. In fact, the strategy of the Captain is to select the hyperedge/squad $\{A,D,E,F,J,K\}$, and then to use those cops in this squad
that block nodes $A$, $D$, $E$, and $F$. During this move of the cops, the potential escape door $\{E,F\}$ for the Robber is still blocked, and
hence its available space shrinks. Indeed, during the move of the Captain, the Robber can just move either on $B$ or on $C$. Finally, the
Captain uses the squad $\{A,B,C,D,H\}$ and order its cops to move to $A$, $B$, $C$, and $D$, thereby capturing the Robber, as its potential
escape door $\{A, D\}$ remains blocked by the cops. \hfill $\lhd$
\end{example}

In the rest of the section, we formalize and analyze the game. To this end, we intensively use the notions and the notations given in the previous section,
by implicitly applying them to the hypergraph $\HG_1$, unless stated otherwise.

\begin{definition}[\CR\ Game]\em  Let $\HG_1$ and $\HG_2$ be two hypergraphs.
The Robber and Captain game on $(\HG_1,\HG_2)$ (short: $\CR(\HG_1,\HG_2)$ game) is formalized as follows.
A \emph{position} for the Captain is a set $M$ of vertices where the cops stand such that $M\subseteq h_2$, for some hyperedge (squad) $h_2\in
\edges(\HG_2)$.
A \emph{configuration} is a pair $(M,v)$, where $M$ is a position for the Captain, and $v\in \nodes(\HG_1)$ is the node where the Robber
stands. The initial configuration is $(\{\},v_0)$, where $v_0$ is a node arbitrarily picked by the Robber.

Let $(M_i,v_i)$ be the configuration at step $i$. This is a capture configuration, where the Captain wins, if $v_i\in M_i$. Otherwise, the
Captain activates the cops in a novel position $M_{i+1}$ such that: $\forall X\in M_{i+1}$, $X$ \touches{M_i} $v_i$ (in $\HG_1$). Then, the
Robber selects some available node $v_{i+1}$ (if any) such that there is a \spath{M_i\cap M_{i+1}} from $v_i$ to $v_{i+1}$ (in $\HG_1$). If the
game continues forever, the Robber wins. \hfill $\Box$
\end{definition}

Note that it does not make sense for the Captain to assume that the Robber is on a particular node, given the ability of the Robber of changing
positions before the cops land. Thus, given a configuration $(M_i,v_i)$, we may assume w.l.o.g. that the next Captain's move is only determined
by the \component{M_i} (of $\HG_1$) that contains $v_i$, rather than by $v_i$ itself. And, accordingly, positions can equivalently be written
as $(M_i,C_i)$, where $C_i$ is an \component{M_i}. In this case, capture configurations have the form $(M,\{\})$, and the initial configuration
has the form $(\{\},\nodes(\HG_1))$.

In the following, assume that a $\CR(\HG_1,\HG_2)$ game is given. Moreover recall that, for any component $C$ (of $\HG_1$),  $\F(C,\HG_1)$ is
the {frontier} of $C$, i.e., (by omitting hereafter $\HG_1$, which is understood) the set $\F(C)=\nodes(\edges(C))=C \cup \{ Z \mid \exists
X\in C, h\in \edges(\HG_1)\mbox{ s.t. } \{X,Z\}\subseteq h \}$.
Then, observe that the moves of the Captain are confined in the frontier of the current component where the Robber stands.

\begin{fact}\label{lem:main} Let $M_i$ and $M_{i+1}$ be positions for the Captain and let $C_i$ be
an \icomponent{M_i}. Then, $\forall X\in M_{i+1}$, $X$ \itouches{M_i} $C_i$ if, and only if, $M_{i+1}\subseteq \F(C_i)$.
\end{fact}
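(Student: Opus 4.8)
The plan is to unfold the definition of \itouches{M_i} and compare it with the definition of the frontier $\F(C_i)$. Recall that $X$ \touches{M_i} $C_i$ means that there is a node $Z\in\nodes(\HG_1)$ such that $X$ is \adj{\emptyset} to $Z$ (i.e., $\{X,Z\}\subseteq h$ for some hyperedge $h\in\edges(\HG_1)$) and there is a \spath{M_i} from $Z$ to some node of $C_i$. The key observation, which I would establish first, is that any node $Z$ that is reachable from $C_i$ by an $[M_i]$-path must itself already lie in $C_i$: indeed, a \spath{M_i} from a node of $C_i$ to $Z$ never passes through a node of $M_i$, so by maximality of the \component{M_i} $C_i$ (and since $C_i\subseteq\nodes(\HG_1)-M_i$), the whole path, and in particular $Z$, stays inside $C_i$. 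Hence $X$ \touches{M_i} $C_i$ if, and only if, there exist $Z\in C_i$ and $h\in\edges(\HG_1)$ with $\{X,Z\}\subseteq h$.

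Once this is in hand, the equivalence is essentially immediate by recalling the explicit description of the frontier given just before the statement, namely
\[
\F(C_i)=C_i\cup\{\,Z'\mid \exists X'\in C_i,\ h\in\edges(\HG_1)\text{ s.t. }\{X',Z'\}\subseteq h\,\}.
\]
For the forward direction, assume $\forall X\in M_{i+1}$, $X$ \touches{M_i} $C_i$. Fix $X\in M_{i+1}$. By the observation above there is $Z\in C_i$ and $h\in\edges(\HG_1)$ with $\{X,Z\}\subseteq h$; by the displayed description of $\F(C_i)$ this gives $X\in\F(C_i)$ (it is either in $C_i$ already, or it is one of the border nodes $Z'$). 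Since $X$ was arbitrary, $M_{i+1}\subseteq\F(C_i)$. For the converse, assume $M_{i+1}\subseteq\F(C_i)$ and fix $X\in M_{i+1}$. If $X\in C_i$, then taking $Z=X$ works, because $X$ is trivially \adj{\emptyset} to itself through any hyperedge of $\edges(\HG_1)$ containing $X$ (such a hyperedge exists since $C_i\subseteq\nodes(\HG_1)=\nodes(\edges(\HG_1))$), and the empty path is a \spath{M_i} from $X$ to $X\in C_i$. If $X\in\F(C_i)\setminus C_i$, then by the displayed description there is $Z\in C_i$ and $h\in\edges(\HG_1)$ with $\{X,Z\}\subseteq h$, so $X$ is \adj{\emptyset} to $Z$ and the empty \spath{M_i} from $Z$ witnesses $X$ \touches{M_i} $C_i$. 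In both cases $X$ \touches{M_i} $C_i$, as required.

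I do not expect any serious obstacle here; the statement is really a bookkeeping lemma that reconciles the path-based definition of ``touching'' with the set-based definition of the frontier. The only point that needs a little care is the maximality argument showing that an \spath{M_i} starting in $C_i$ cannot leave $C_i$ — this is what guarantees that ``touching $C_i$'' collapses to ``being adjacent (via $\HG_1$) to some node of $C_i$'', so that no genuinely longer path contributes anything new. Once that is noted, both inclusions are one-line verifications against the explicit formula for $\F(C_i)$.
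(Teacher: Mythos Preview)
Your proof is correct and follows essentially the same approach as the paper: both arguments unfold the definition of \itouches{M_i} and match it against the explicit description $\F(C_i)=C_i\cup\{Z'\mid\exists X'\in C_i,\,h\in\edges(\HG_1)\text{ s.t. }\{X',Z'\}\subseteq h\}$. The paper's proof is a two-sentence remark that these two conditions ``precisely coincide'', while you spell out the maximality argument (an \spath{M_i} starting in $C_i$ cannot leave $C_i$) and the two cases $X\in C_i$ versus $X\in\F(C_i)\setminus C_i$ explicitly; this is the same idea, just with more detail.
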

\begin{proof}
In fact, a node $X$ is in $\F(C_i)$ if, and only if, either $X\in C_i$, or $X\in M_i$ and there is a node $Z\in C_i$ with $\{X,Z\}\subseteq h$,
for some edge $h$ in $\edges(\HG_1)$. But, this condition precisely coincides with the definition that $X$ \touches{M_i} $C_i$.\hfill $\Box$
\end{proof}

\begin{definition}[Strategies]\label{def:strategy}\em
A \emph{strategy} $\sigma$ for $\CR(\HG_1,\HG_2)$ is a function that encodes the \emph{moves} of the Captain, i.e., given a configuration
$(M_i,C_i)$, with $C_i\neq \emptyset$, $\sigma$ returns a position $M_{i+1}$ such that $M_{i+1}\subseteq \F(C_i)$.

A \emph{game-tree} for $\sigma$ is a rooted tree $T(\sigma)$ defined over configurations as follows. Its root is the configuration
$(\emptyset,\nodes(\HG_1))$. Let $(M_{i},C_{i})$ be a vertex in $T(\sigma)$ and let $M_{i+1}=\sigma(M_{i},C_{i})$.\footnote{Note the little
abuse of notation: $\sigma(M_{i},C_{i})$ instead of $\sigma((M_{i},C_{i}))$.} Then, $(M_{i},C_{i})$ has exactly one child $(M_{i+1},C_{i+1})$,
for each
\component{M_{i+1}} $C_{i+1}$ such that $C_{i}\cup C_{i+1}$ is \connected{M_{i}\cap M_{i+1}}; we 
call such a $C_{i+1}$ an \ecomponent{(M_i,C_i),M_{i+1}} for the Robber.
If there is no \ecomponent{(M_i,C_i),M_{i+1}}, then $(M_{i},C_{i})$ has exactly one child $(M_{i+1},\{\})$.
No further edge or vertex is in $T(\sigma)$.

Then, $\sigma$ is a \emph{winning} strategy if $T(\sigma)$ is a finite tree.
Moreover, define a position $M_{i+1}$ to be a \emph{monotone} move of the Captain in $(M_i,C_i)$, if for each \ecomponent{(M_i,C_i),M_{i+1}}
$C_{i+1}$, $C_{i+1}\subseteq C_i$. We say that $\sigma$ is a \emph{monotone} strategy if, for each edge from $(M_{i},C_{i})$ to
$(M_{i+1},C_{i+1})$, it holds that $M_{i+1}$ is a monotone move in $(M_i,C_i)$.~\hfill~$\Box$
\end{definition}

\begin{example}\em
Consider again the exemplification of the Robber and Captain in Figure~\ref{fig:strategy}. In particular, the bottom-right part of the figure
depicts a game-tree associated with a winning strategy: The Captain initially moves on $\{E,F,G\}$, and there are two connected components
available to the Robber, namely $\{A,B,C,D\}$ and $\{H,I,J,K\}$. The left branch of the tree illustrates the strategy when the Robber goes into
the component $\{A,B,C,D\}$. Note that this branch precisely corresponds to the moves that are discussed in Example~\ref{ex:CR}. The right
branch addresses the case where the Robber goes into the component $\{H,I,J,K\}$. In both cases, the Captain will eventually capture the
Robber. Observe that this winning strategy is monotone.\hfill $\lhd$
\end{example}

\subsection{Monotone vs Non-monotone Strategies}\label{sec:monotoneVSnonmonotone}

In this section, we show that there is no incentive for the Captain to play a strategy $\sigma$ that is not monotone, since it is always
possible to construct and play a monotone strategy $\sigma'$ that is equivalent to $\sigma$, i.e., such that $\sigma'$ is winning if, and only
if, $\sigma$ is winning. This crucial property conceptually relates our game with the \emph{Robber and Cops game} characterizing the
\emph{treewidth}~\cite{ST93}, and differentiates it from most of the hypergraph-based games in the literature, in particular, from the
\emph{Robber and Marshals game}, whose monotone strategies characterize hypertree decompositions~\cite{gott-etal-03}, while non-monotone
strategies do not correspond to valid decompositions \cite{adler04}.

We point out that the proof below does not apply to the traditional Robber and Cops game, because in our setting cops can be placed just on the
positions that are reachable by the Robber. As a matter of fact, our techniques are substantially different from those used to show that
non-monotonic moves provide no extra-power in the Robber and Cops game. We start by illustrating some properties of the novel game.

In the following, assume that $\sigma$ and $T(\sigma)$ are a strategy and a game tree for it, respectively. Moreover recall that, for any
component $C$ (of $\HG_1$), $\partial (C)$ denotes the {\em border} of $C$ (in $\HG_1$), i.e., the set $\F(C)\setminus C$.
Then, let the \emph{escape-door} of the Robber in $v_i=(M_i,C_i)$ when attacked with $M_{i+1}$ be defined as $\E(v_i,M_{i+1})=\partial
(C_i)\setminus M_{i+1}$. Note that this is equivalent to state that $\E(v_i,M_{i+1})=M_i\cap \F(C_i) \setminus M_{i+1}$, because $C_i$ is an
\component{M_i}.
Consider for instance Example~\ref{ex:CR} at the configuration $v_1=(\{E,F,G\},\{A,B,C,D\})$, when the Robber is attacked by the Captain with
the cops $\{A,D,E,F\}$. In this case, the frontier is $\F(\{A,B,C,D\})= \{A,B,C,D,E,F\}$, hence the \emph{escape-door} is $\{E,F\}\setminus
\{A,D,E,F\}=\emptyset$.

In the following lemma, we show that this set precisely characterizes those vertices trough which the Robber may escape from
the current component $C_i$, when the Captain changes her position from $M_i$ to $M_{i+1}$.

\begin{lemma}\label{lem:escapeSpace}
Let $M_i$ and $M_{i+1}$ be positions for the Captain, let $C_i$ be an \icomponent{M_i}, and let $v_i=(M_i,C_i)$. Then, $C_{i+1}$ is a
\iecomponent{v_i,M_{i+1}} if, and only if, $C_{i+1}$ is an \icomponent{M_{i+1}} with $C_{i+1}\cap (C_i \cup \E(v_i,M_{i+1}))\neq \emptyset$.
\end{lemma}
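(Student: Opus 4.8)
The plan is to unwind both of the definitions in play. By definition $C_{i+1}$ is an $[(M_i,C_i),M_{i+1}]$-option exactly when it is an $[M_{i+1}]$-component and $C_i\cup C_{i+1}$ is $[M_i\cap M_{i+1}]$-connected, while $\E(v_i,M_{i+1})=\partial(C_i)\setminus M_{i+1}=(\F(C_i)\setminus C_i)\setminus M_{i+1}$. So, fixing an $[M_{i+1}]$-component $C_{i+1}$, it suffices to show that $C_i\cup C_{i+1}$ is $[M_i\cap M_{i+1}]$-connected if and only if $C_{i+1}\cap\big(C_i\cup\E(v_i,M_{i+1})\big)\neq\emptyset$. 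I would use two standing observations: (i) since $C_i$ is an $[M_i]$-component, $\partial(C_i)\subseteq M_i$ (a node of $\F(C_i)\setminus C_i$ lying outside $M_i$ would be $[M_i]$-adjacent to $C_i$, hence in $C_i$); and (ii) $M_{i+1}\subseteq\F(C_i)$, since $M_{i+1}$ is a legal Captain position against $C_i$ (Fact~\ref{lem:main}). Finally note that each of $C_i$ and $C_{i+1}$ is itself $[M_i\cap M_{i+1}]$-connected, so $C_i\cup C_{i+1}$ is $[M_i\cap M_{i+1}]$-connected iff the two sets intersect, or some $[M_i\cap M_{i+1}]$-path joins a node of $C_i$ to a node of $C_{i+1}$.

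For the ``if'' direction I would split on which part of $C_i\cup\E(v_i,M_{i+1})$ is met. If $C_{i+1}\cap C_i\neq\emptyset$, connectedness of $C_i\cup C_{i+1}$ is immediate. If instead $X\in C_{i+1}\cap\E(v_i,M_{i+1})$, then $X\in\F(C_i)\setminus C_i$, so $X\in M_i$ by (i), while $X\notin M_{i+1}$ by the definition of $\E$; picking $Z\in C_i$ and $h\in\edges(\HG_1)$ with $\{X,Z\}\subseteq h$, neither $X$ nor $Z$ lies in $M_i\cap M_{i+1}$ (as $Z\notin M_i$ and $X\notin M_{i+1}$), so $X$ is $[M_i\cap M_{i+1}]$-adjacent to $Z$, witnessing that $C_i\cup C_{i+1}$ is $[M_i\cap M_{i+1}]$-connected.

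For the ``only if'' direction, assume $C_i\cup C_{i+1}$ is $[M_i\cap M_{i+1}]$-connected and, in the non-trivial case, $C_i\cap C_{i+1}=\emptyset$. I would take a $[M_i\cap M_{i+1}]$-path from a node of $C_i$ to a node of $C_{i+1}$ and truncate it to the segment between its last node in $C_i$ and its first subsequent node in $C_{i+1}$, obtaining $X_0,\dots,X_\ell$ with $X_0\in C_i$, $X_\ell\in C_{i+1}$, $\ell\geq 1$, and no other $X_j$ in $C_i\cup C_{i+1}$. Since $X_1$ shares a hyperedge with $X_0\in C_i$ it lies in $\F(C_i)$, and $X_1\notin C_i$, so $X_1\in\partial(C_i)\subseteq M_i$ by (i); being on a $[M_i\cap M_{i+1}]$-path, $X_1\notin M_i\cap M_{i+1}$, hence $X_1\notin M_{i+1}$, i.e. $X_1\in\E(v_i,M_{i+1})$. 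It remains to see $X_1\in C_{i+1}$. This is the one place where observation (ii) is needed: if an internal node $X_t$ ($1\le t\le\ell-1$) were in $M_{i+1}$, then by (ii) it would lie in $\F(C_i)$ and, being internal, outside $C_i$, hence in $\partial(C_i)\subseteq M_i$, so in $M_i\cap M_{i+1}$ --- impossible along a $[M_i\cap M_{i+1}]$-path. Together with $X_\ell\in C_{i+1}$ (hence outside $M_{i+1}$), every $X_1,\dots,X_\ell$ avoids $M_{i+1}$, so the consecutive $[M_i\cap M_{i+1}]$-adjacencies along $X_1,\dots,X_\ell$ are in fact $[M_{i+1}]$-adjacencies; thus $X_1$ is $[M_{i+1}]$-connected to $X_\ell\in C_{i+1}$, and maximality of the $[M_{i+1}]$-component $C_{i+1}$ forces $X_1\in C_{i+1}$. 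Hence $X_1\in C_{i+1}\cap\E(v_i,M_{i+1})$.

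The main obstacle I anticipate is precisely isolating where the legality condition $M_{i+1}\subseteq\F(C_i)$ enters: without it, the first node reached after leaving $C_i$ still belongs to $\E(v_i,M_{i+1})$ but need not belong to $C_{i+1}$, and the equivalence can fail outright (a small path example makes this explicit). The rest is routine manipulation of the adjacency/path definitions; the only care required is to truncate the connecting path so that $X_0$ and $X_\ell$ are its only nodes in $C_i\cup C_{i+1}$, which is what lets us place $X_1$ in $\partial(C_i)$.
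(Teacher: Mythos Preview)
Your proof is correct and follows essentially the same approach as the paper. The \emph{if}-direction is virtually identical. In the \emph{only-if}-direction the paper is much terser: after assuming $C_i\cap C_{i+1}=\emptyset$ it simply asserts that ``there is a node $p_{i+1}\in M_i\cap\F(C_i)$ such that $p_{i+1}\in C_{i+1}$,'' whereas you actually justify this claim by the truncated-path argument and, in particular, make explicit that the legality hypothesis $M_{i+1}\subseteq\F(C_i)$ (your observation~(ii), i.e.\ Fact~\ref{lem:main}) is what forces the first node after leaving $C_i$ to land in $C_{i+1}$. Your remark that the equivalence can fail without this hypothesis is correct, so your more careful treatment is a genuine improvement in rigor over the paper's sketch, though not a different proof strategy.
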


\begin{proof}
Recall from Definition~\ref{def:strategy} that $C_{i+1}$ is an \ecomponent{(M_i,C_i),M_{i+1}} if $C_{i+1}$ is an \component{M_{i+1}} such that
$C_{i+1}\cup C_i$ is \connected{M_{i+1}\cap M_i}.

{\em (if-part)} Assume that $C_{i+1}$ is an \component{M_{i+1}} with $C_{i+1}\cap (C_i \cup \E(v_i,M_{i+1}))\neq \emptyset$. Since $C_{i+1}$
(resp., $C_i$) is an \component{M_{i+1}} (resp., \component{M_i}), we have that $C_{i+1}$ (resp., $C_i$) is contained in an
\component{M_{i+1}\cap M_i}, say $C_{i+1}'$ (resp., $C_i'$). Therefore, if $C_{i+1}\cap C_i\neq \emptyset$, we immediately can conclude that
$C_{i+1}\cup C_i$ is \connected{M_{i+1}\cap M_i}, with $C_{i+1}'=C_i'$. Thus, let us consider the case where $C_{i+1}\cap C_i=\emptyset$ and,
hence, $C_{i+1}\cap \E(v_i,M_{i+1})\neq \emptyset$. Consider now $p_{i+1}\in C_{i+1}\cap \E(v_i,M_{i+1})$. By definition of $\E(v_i,M_{i+1})$,
$p_{i+1}$ belongs in particular to $M_i\cap \F(C_i)\setminus M_{i+1}$. Thus, $p_{i+1}\not\in M_{i+1}\cap M_i$. However, $p_{i+1}\in C_{i+1}$
and $p_{i+1}\in \F(C_i)$. From the latter, we have that there is a node $Z\in C_i$ and a hyperedge $h_1\in \edges(\HG_1)$ such that
$\{p_{i+1},Z\}\subseteq h_1$. It follows that there is an \spath{M_{i+1}\cap M_i} from $p_{i+1}\in C_{i+1}$ to $Z\in C_i$. Thus, $C_{i+1}\cup
C_i$ is \connected{M_{i+1}\cap M_i}.

{\em (only-if-part)} Assume that $C_{i+1}$ is an \component{M_{i+1}} such that $C_{i+1}\cup C_i$ is \connected{M_{i+1}\cap M_i}. Consider the
case where $C_{i+1}\cap C_i=\emptyset$. Then, there is a node $p_{i+1}\in M_i\cap \F(C_i)$ such that $p_{i+1}\in C_{i+1}$. Thus,
$p_{i+1}\not\in M_{i+1}$. It follows that $p_{i+1}\in M_i\cap \F(C_i)\setminus M_{i+1}$, and hence $C_{i+1}\cap \E(v_i,M_{i+1})\neq \emptyset$.
\hfill $\Box$
\end{proof}

Moreover, we next characterize monotone moves based on escape-doors.

\begin{lemma}\label{lem:monotone}
Let $M_i$ and $M_{i+1}$ be positions for the Captain, let $C_i$ be an \icomponent{M_i}, and let $v_i=(M_i,C_i)$. Then,
$\E(v_i,M_{i+1})=\emptyset$ if, and only if, for each \iecomponent{v_i,M_{i+1}} $C_{i+1}$, $C_{i+1}\subseteq C_i$.
\end{lemma}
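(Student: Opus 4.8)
The plan is to prove both directions of the equivalence $\E(v_i,M_{i+1})=\emptyset \iff$ (every \ecomponent{v_i,M_{i+1}} $C_{i+1}$ satisfies $C_{i+1}\subseteq C_i$) using the characterization of \ecomponent{}s from Lemma~\ref{lem:escapeSpace}. Recall that lemma says $C_{i+1}$ is a \ecomponent{v_i,M_{i+1}} exactly when $C_{i+1}$ is an \component{M_{i+1}} with $C_{i+1}\cap(C_i\cup\E(v_i,M_{i+1}))\neq\emptyset$. So the whole argument is really a bookkeeping exercise on where the nodes of $\E(v_i,M_{i+1})=M_i\cap\F(C_i)\setminus M_{i+1}$ can go.

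For the \emph{only-if part}, assume $\E(v_i,M_{i+1})=\emptyset$, and let $C_{i+1}$ be any \ecomponent{v_i,M_{i+1}}. By Lemma~\ref{lem:escapeSpace}, $C_{i+1}\cap(C_i\cup\E(v_i,M_{i+1}))\neq\emptyset$, and since the escape-door is empty this forces $C_{i+1}\cap C_i\neq\emptyset$. Now I want to upgrade this to $C_{i+1}\subseteq C_i$. The key observation is that $\E(v_i,M_{i+1})=\emptyset$ means $M_i\cap\F(C_i)\subseteq M_{i+1}$, i.e.\ the whole border $\partial C_i$ is re-occupied by the new position. Pick any node $u\in C_{i+1}$; since $C_{i+1}$ is \connected{M_{i+1}}, there is an $[M_{i+1}]$-path from some node $w\in C_{i+1}\cap C_i$ to $u$ inside $\nodes(\HG_1)\setminus M_{i+1}$. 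I claim this path stays inside $C_i$: it starts in $C_i$, and to leave $C_i$ it would have to pass through a node of $\partial C_i = M_i\cap\F(C_i)$ (since any $[\emptyset]$-adjacency step out of a \component{M_i} lands in its border), but every such node lies in $M_{i+1}$ and is therefore forbidden on an $[M_{i+1}]$-path. Hence the path stays in $C_i$, so $u\in C_i$, giving $C_{i+1}\subseteq C_i$.

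For the \emph{if part}, argue by contraposition: assume $\E(v_i,M_{i+1})\neq\emptyset$ and exhibit a \ecomponent{v_i,M_{i+1}} not contained in $C_i$. Take a node $p\in\E(v_i,M_{i+1})=M_i\cap\F(C_i)\setminus M_{i+1}$. Since $p\notin M_{i+1}$, it lies in some \component{M_{i+1}}, call it $C_{i+1}$. By Lemma~\ref{lem:escapeSpace}, $C_{i+1}$ is a \ecomponent{v_i,M_{i+1}} (because $p\in C_{i+1}\cap\E(v_i,M_{i+1})$). But $p\in M_i$, so $p\notin\nodes(\HG_1)\setminus M_i$ and in particular $p\notin C_i$; since $p\in C_{i+1}$, we get $C_{i+1}\not\subseteq C_i$, as desired.

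The step I expect to require the most care is the claim inside the only-if part that an $[M_{i+1}]$-path starting in $C_i$ cannot escape $C_i$ without touching $\partial C_i$. This needs the precise definition of \component{} and of $[\emptyset]$-adjacency: consecutive nodes on the path are joined by a hyperedge of $\HG_1$ avoiding $M_{i+1}$, but not necessarily avoiding $M_i$, so I must locate the first node $z$ on the path with $z\notin C_i$ and observe that its predecessor $z'\in C_i$ is $[\emptyset]$-adjacent to $z$ via some $h\in\edges(\HG_1)$; then $z\in\F(C_i)$, and since $z\notin C_i$ we get $z\in\partial C_i\subseteq M_{i+1}$, contradicting that $z$ is on an $[M_{i+1}]$-path. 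Everything else is a direct application of the earlier lemma and the definitions of $\E$ and $\partial$.
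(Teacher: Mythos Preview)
Your proof is correct and follows essentially the same approach as the paper. Both directions use Lemma~\ref{lem:escapeSpace} in the same way: your contrapositive \emph{if part} matches the paper's contradiction argument verbatim, and your \emph{only-if part}---locating the first step of an $[M_{i+1}]$-path that leaves $C_i$ and showing it lands in $\partial C_i\subseteq M_{i+1}$---is exactly the paper's argument with the transition pair $Z_1,Z_2$.
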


\begin{proof}
{\em (if-part)} Assume that $\forall$ \ecomponent{v_i,M_{i+1}} $C_{i+1}$, $C_{i+1}\subseteq C_i$. Moreover, assume for the sake of
contradiction that $\E(v_i,M_{i+1})\neq \emptyset$, and let $X\in \E(v_i,M_{i+1})=M_i\cap \F(C_i)\setminus M_{i+1}$. In particular note that
$X\not\in M_{i+1}$, from which we conclude that there must be an \component{M_{i+1}} $C_{i+1}$ such that $X\in C_{i+1}$. Thus, $X\in
C_{i+1}\cap \E(r,M_{i+1})$ and hence we can apply Lemma~\ref{lem:escapeSpace} to conclude that $C_{i+1}$ is a \ecomponent{v_i,M_{i+1}}.
However, $X$ is not in $C_i$, since $X$ belongs to $M_i$ (and $C_i$ is an \component{M_{i}}). Thus, $C_{i+1}\not\subseteq C_i$, which is
impossible.

{\em (only-if-part)} Assume that $\E(v_i,M_{i+1})= \emptyset$, and for the sake of contradiction that $C_{i+1}$ is a \ecomponent{v_i,M_{i+1}}
such that $C_{i+1}\not\subseteq C_i$. Let $Y$ be a node in $C_{i+1}\setminus C_i$, and observe that there must be a node $X\in C_{i+1}\cap
C_i$, because of Lemma~\ref{lem:escapeSpace}.
Consider now an \spath{M_{i+1}} from $Y$ to $X$ and let $Z_1,Z_2$ be two nodes in this path such that $Z_1\in C_{i+1}\cap C_i$, $Z_2\in
C_{i+1}\setminus C_i$, and $\{Z_1,Z_2\}\subseteq h$ for some hyperedge $h\in \edges(\HG_1)$. Note that these two nodes exist because of the
properties of the endpoints $Y$ and $X$. Now, it must be the case that $Z_2$ is in $(\F(C_i)\setminus C_i)\cap C_{i+1}$. Since $M_i\supseteq
\F(C_i)\setminus C_i$, the latter entails that $Z_2\in M_i\cap\F(C_i)\cap C_{i+1}$. Finally, since $C_{i+1}$ is an \component{M_{i+1}}, we
conclude that $Z_2\in M_i\cap\F(C_i)\setminus M_{i+1}$, i.e., $Z_2\in \E(v_i,M_{i+1})$ which is impossible.~\hfill~$\Box$
\end{proof}

The lemma above easily leads us to characterize monotone strategies as those ones for which there are no escape-doors.

\begin{corollary}\label{cor:monotone}
The strategy $\sigma$ is monotone if, and only if, for each vertex $v_i=(M_i,C_i)$ in $T(\sigma)$, and for each child $(M_{i+1},C_{i+1})$ of
$v_i$, $\E(v_i,M_{i+1})=\emptyset$.
\end{corollary}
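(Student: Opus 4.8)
The plan is to obtain the statement simply by unwinding the definition of a monotone strategy (Definition~\ref{def:strategy}) and then invoking Lemma~\ref{lem:monotone}; no new combinatorial argument is needed, so this is really a one-line deduction once the quantifiers are lined up.

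First I would recall that, by Definition~\ref{def:strategy}, $\sigma$ is monotone precisely when every edge of $T(\sigma)$, say from a vertex $v_i=(M_i,C_i)$ to a child $(M_{i+1},C_{i+1})$, is such that $M_{i+1}$ is a \emph{monotone move} of the Captain in $(M_i,C_i)$; and, again by Definition~\ref{def:strategy}, $M_{i+1}$ being a monotone move in $(M_i,C_i)$ means exactly that every \iecomponent{v_i,M_{i+1}} $C'$ satisfies $C'\subseteq C_i$. Next I would apply Lemma~\ref{lem:monotone} to the positions $M_i,M_{i+1}$ and the \icomponent{M_i} $C_i$: it yields precisely the equivalence ``$\E(v_i,M_{i+1})=\emptyset$ if and only if every \iecomponent{v_i,M_{i+1}} $C'$ satisfies $C'\subseteq C_i$''. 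Chaining these two statements gives that, for each vertex $v_i$ of $T(\sigma)$ and the unique move $M_{i+1}=\sigma(v_i)$ it prescribes there, $M_{i+1}$ is a monotone move in $v_i$ if and only if $\E(v_i,M_{i+1})=\emptyset$. Quantifying over all vertices $v_i$ (equivalently, over all edges) of $T(\sigma)$ then delivers the corollary.

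The only point that needs a moment of care---and the closest thing to an obstacle---is the bookkeeping of quantifiers: being a monotone move is a property of the single position $M_{i+1}=\sigma(v_i)$ relative to $v_i$ and refers to \emph{all} of the options $C_{i+1}$ available at that move, whereas the corollary is phrased edge by edge. Since in $T(\sigma)$ all children of $v_i$ arise from the very same move $\sigma(v_i)$, the per-edge and per-vertex readings coincide; this also transparently covers the degenerate case in which $v_i$ has a unique child $(M_{i+1},\{\})$ with no \iecomponent{v_i,M_{i+1}}, where ``every \iecomponent{v_i,M_{i+1}}\,$\ldots$'' holds vacuously and, correspondingly, $\E(v_i,M_{i+1})=\emptyset$ holds (indeed, any node of $\E(v_i,M_{i+1})$ would, via Lemma~\ref{lem:escapeSpace}, give rise to an option). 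Hence there is no real difficulty, and the corollary is immediate. \hfill $\Box$
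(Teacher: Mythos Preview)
Your proposal is correct and follows exactly the approach the paper intends: the corollary is stated immediately after Lemma~\ref{lem:monotone} without an explicit proof, being presented as a direct consequence of that lemma together with the definition of monotone strategy. Your additional care with the quantifier bookkeeping and the degenerate leaf case $(M_{i+1},\{\})$ is sound and simply makes explicit what the paper leaves implicit.
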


Assume now that $\sigma$ is a non-monotone winning strategy. Armed with the above notions and results, we shall show how $\sigma$ can be
transformed into a monotone winning strategy, by ``removing'' the various escape-doors.

\begin{figure}[t]
  \centering
  \includegraphics[width=0.99\textwidth]{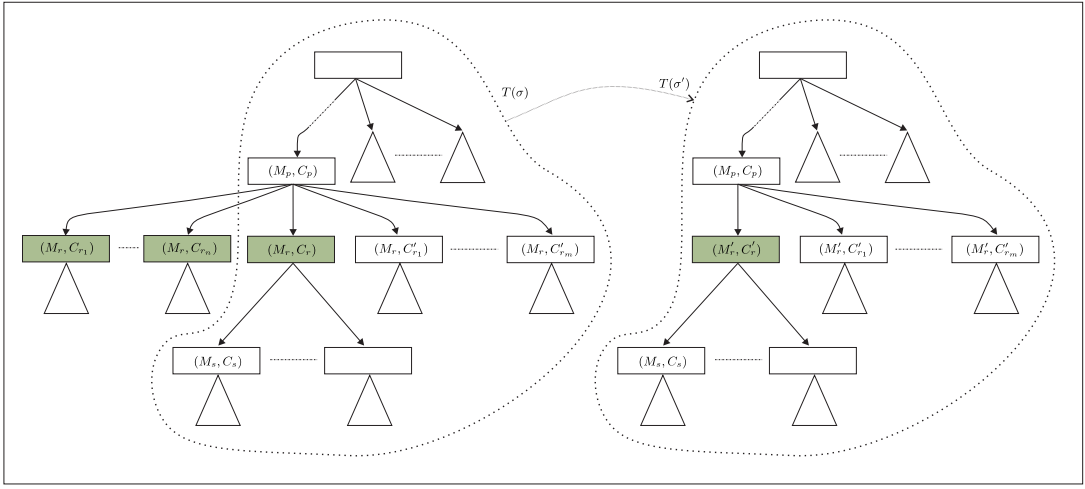}
  \caption{Exemplification in the proof of Lemma~\ref{lem:construction}.}\label{fig:nonmonotone2}
\end{figure}

Let $p=(M_p,C_p)$ be a configuration reached in $T(\sigma)$ from $(\emptyset,\nodes(\HG_1))$ by a (possibly empty) succession of moves $\pi$.
Assume that $M_r$ is the move of the Captain in $p$ and that this move is monotone, i.e., for each \ecomponent{p,M_r} $C$, $C\subseteq C_p$
(note that any move in the initial configuration is monotone). Let $r=(M_r,C_r)$ be a child of $p$ in $T(\sigma)$, and let $s=(M_s,C_s)$ be a
child of $r$ such that $C_s\not\subseteq C_r$, i.e., such that $\E(r,M_s)\neq \emptyset$ (by Corollary~\ref{cor:monotone}). This witnesses that
$M_s$ is a non-monotone move---see Figure~\ref{fig:nonmonotone2}.

Let $M_r'=M_r\setminus \E(r,M_s)\subset M_r$, and consider the function $\sigma'$ built as follows:
$$
\sigma'(M,C)= \left\{
\begin{array}{ll}
M_r' & \mbox{ if }(M,C)=(M_p,C_p)\\
\sigma(M,C) & \mbox{ otherwise.}\\
\end{array}
\right.
$$

Intuitively, we are removing from $r$ the source of non-monotonicity that was suddenly evidenced while moving to $s$, i.e., the fact that
$\E(r,M_s)\neq \emptyset$.
Next, we show that this modification does not affect the final outcome of the game.

\begin{lemma}\label{lem:construction}
$\sigma'$ is a winning strategy.
\end{lemma}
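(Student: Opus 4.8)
The plan is to show that modifying $\sigma$ into $\sigma'$ — which differs from $\sigma$ only at the configuration $p=(M_p,C_p)$, where it now plays the smaller position $M_r' = M_r \setminus \E(r,M_s)$ instead of $M_r$ — cannot turn a winning strategy into a non-winning one. Concretely, I would prove that the game-tree $T(\sigma')$ is finite, given that $T(\sigma)$ is finite. The key observation is that the subtree of $T(\sigma)$ rooted at $p$ captures all possible continuations of the game after the history $\pi$, and I must exhibit, for the modified play from $p$, a mapping of each reachable configuration to a configuration in the old subtree rooted at $r=(M_r,C_r)$ (or at $p$ itself), in such a way that the well-foundedness of $T(\sigma)$ transfers. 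Everything outside the subtree rooted at $p$ is untouched, so it suffices to bound the subtree of $T(\sigma')$ rooted at $p$.

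The core step is to analyze what happens at $p$ under $\sigma'$: the Captain plays $M_r' \subset M_r$. Since $M_r' \subseteq M_r \subseteq \F(C_p)$ (the move $M_r$ was legal at $p$), this is still a legal move. Now the $[M_r']$-components available to the Robber are, by Lemma~\ref{lem:containEdges1}-type reasoning (or directly from the definitions), unions of $[M_r]$-components together with possibly the nodes of $\E(r,M_s) = M_r\cap\F(C_p)\setminus M_r'$ that have been ``freed''. The crucial point — and here I would use Lemma~\ref{lem:escapeSpace} and Lemma~\ref{lem:monotone} — is that $\E(r,M_s)\subseteq \F(C_p)$, and in fact $\E(r,M_s) \subseteq \partial(C_p)$ since the move $M_r$ at $p$ was monotone (so all $[r,M_r]$-options are contained in $C_p$, forcing $\E(p,M_r)=\emptyset$ by Corollary~\ref{cor:monotone}, i.e. $\partial(C_p)\subseteq M_r$, hence $\E(r,M_s)=M_r\cap\F(C_r)\setminus M_s$ sits where it can only re-connect to $C_p$). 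Thus after playing $M_r'$, the $[M_r']$-option the Robber can reach is contained in $C_p$ — i.e. $\sigma'$'s move at $p$ is itself monotone — and this option $C_r'$ satisfies $C_r \subseteq C_r' \subseteq C_p$, with $C_r'$ differing from $C_r$ only by nodes in $C_s$ and $\E(r,M_s)$ that were previously split off at the node $s$.

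From the child $(M_r',C_r')$ of $p$ in $T(\sigma')$, the strategy $\sigma'$ coincides with $\sigma$. I would then set up an injective, edge-respecting embedding of the subtree of $T(\sigma')$ below $(M_r',C_r')$ into the subtree of $T(\sigma)$ below $r$: each configuration $(M,C)$ reached by $\sigma'$ maps to the configuration $(M,C\cap \nodes(\JT[r]_{\ldots}))$ — more precisely, I map a Robber-component $C$ under $\sigma'$ to the unique $\sigma$-component it is contained in, appealing to the fact that shrinking the Captain's position can only merge components, never create genuinely new escape routes that $\sigma$ did not already have to handle in the subtree rooted at $r$ or at $s$ (the latter being where $C_s$ was dealt with). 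Since both $T(\sigma)\!\restriction_r$ and $T(\sigma)\!\restriction_s$ are finite (subtrees of the finite tree $T(\sigma)$), and the embedding is such that no infinite branch of $T(\sigma')$ can map to a finite path, $T(\sigma')$ has no infinite branch, hence is finite by König's lemma; therefore $\sigma'$ is winning.

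The main obstacle I anticipate is verifying cleanly that merging components (passing from $M_r$ to the smaller $M_r'$) does not let the Robber reach a component that is ``new'' in the sense of not being dominated by some component already appearing either below $r$ or below $s$ in $T(\sigma)$. The delicate case is exactly a Robber-option $C_r'$ under $\sigma'$ that straddles $C_r$ and (part of) $C_s\cup\E(r,M_s)$: I must argue that the Captain, continuing with $\sigma$ from this point, still captures — and the right way is to observe that $C_r' \subseteq C_p$ and that the subsequent $\sigma$-play from $(M_r,C_r)$ already shrinks $C_r$ to capture, while the $\sigma$-play from $(M_s,C_s)$ already handles the residual part coming through $\E(r,M_s)$; one combines these two finite subtrees. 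Making the embedding literally well-defined (choosing, for a straddling option, which of the two old subtrees to map into, and checking the Captain's moves there remain legal and lead to capture) is the technical heart of the argument.
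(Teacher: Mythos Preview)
Your plan contains a real error and misses the clean insight that makes the paper's argument work.

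First, the claim that ``$\sigma'$'s move at $p$ is itself monotone,'' i.e.\ that the option $C_r'$ satisfies $C_r'\subseteq C_p$, is false in general. You argue that $\E(p,M_r)=\emptyset$ gives $\partial(C_p)\subseteq M_r$, but this does not yield $\E(r,M_s)\cap\partial(C_p)=\emptyset$: nothing prevents a node of $\partial(C_r)\setminus M_s$ from lying in $\partial(C_p)$. When that happens, such a node belongs to $C_r'$ (by definition of $C_r'$ as the $[M_r']$-component containing $C_r\cup\E(r,M_s)$) but not to $C_p$, so $C_r'\not\subseteq C_p$. Your downstream reasoning about $C_r'$ being ``$C_r$ plus only nodes previously split off at $s$'' is equally unsupported: removing $\E(r,M_s)$ from $M_r$ can merge \emph{several} of the old $[M_r]$-options into $C_r'$, not just $C_r$.

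Second, and more importantly, your embedding scheme is never actually constructed. You concede that ``making the embedding literally well-defined\dots is the technical heart of the argument,'' then propose to split straddling options between the subtrees below $r$ and below $s$ and ``combine'' them. But you give no mechanism for doing this, no argument that $\sigma$'s moves remain legal along the embedded branches, and no reason K\"onig's lemma applies (the game tree is not locally finite a priori unless you already know the strategy is well-behaved).

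The paper avoids all of this by proving something much sharper. It shows that the $[(p,M_r')]$-options other than $C_r'$ are \emph{exactly} $[(p,M_r)]$-options (so $\sigma$ already handles them), and that after the Captain plays $M_s$ at $r'=(M_r',C_r')$, the resulting $[(r',M_s)]$-options coincide \emph{set-for-set} with the $[(r,M_s)]$-options in the original tree. The crucial step is proving $\E(r',M_s)=\emptyset$, which follows in one line from $\partial(C_r')\subseteq M_r'\subseteq M_r\subseteq\F(C_r)$ together with the definition of $\E(r,M_s)$; this is the game-specific fact (cops may only occupy nodes touching the Robber's component) that makes the whole construction go through. With equality of option sets, the subtree of $T(\sigma')$ below $r'$ is literally the subtree of $T(\sigma)$ below $r$ after the move $M_s$ --- no embedding, no combination, no K\"onig.
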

\begin{proof} By definition, $\sigma'$ leaves unchanged the configurations of $\sigma$ encoded in those subtrees
of $T(\sigma)$ that are not rooted below $p=(M_p,C_p)$---see Figure~\ref{fig:nonmonotone2}. Therefore, throughout this proof we have only to
take care of what happens in the subtree of $T(\sigma)$ rooted at $(M_p,C_p)$. Indeed, the first crucial difference between $\sigma'$ and
$\sigma$ occurs when the Captain plays $\sigma'(M_p,C_p)=M_r'$. Beforehand, we note that this is a ``valid'' move, since $M_r'\subset
M_r\subseteq \F(C_p)$ (cf. Fact~\ref{lem:main}).

Let $C_r'$ be the \component{M_r'} with $C_r\cup \E(r,M_s)\subseteq C_r'$. Note that such a component exists since $\E(r,M_s)\subseteq \F(C_r)$
and $M_r'\subseteq M_r$. In addition, we claim that $C_r'$ is a \ecomponent{p,M_r'}. Indeed, $C_r$ is a \ecomponent{p,M_r}, for which therefore
$C_r\cap (C_p\cup \E(p,M_r))\neq \emptyset$ holds, by Lemma~\ref{lem:escapeSpace}.
Moreover, since $M_r'\subset M_r$,  $\E(p,M_r)\subseteq \E(p,M_r')$ holds. Hence, given that $C_r\subseteq C_r'$,
it follows that $C_r'\cap (C_p\cup \E(p,M_r'))\neq \emptyset$, and again by Lemma~\ref{lem:escapeSpace}, that $C_r'$ is a \ecomponent{p,M_r'}.
Thus, there is an edge from $p$ to $(M_r',C_r')$ in $T(\sigma')$.
In order to complete the picture, we need the following two properties.\\

\noindent \textit{Property $P_1$:} \emph{For each \iecomponent{p,M_r} $C$, either $C\subseteq C_r'$ or $C$ is a \iecomponent{p,M_r'}.}

\vspace{-2mm}
\begin{itemize}
\item[\ ] \textbf{Proof.} We distinguish two cases depending on whether $\F(C)\cap \E(r,M_s)$ is empty or not.
In the case where $\F(C)\cap \E(r,M_s)=\emptyset$, then $C$ is an \component{M_r'} given that $M_r'=M_r\setminus \E(r,M_s)$. Moreover, by
Lemma~\ref{lem:escapeSpace}, $C$ is an \component{M_r} such that $C\cap (C_p\cup \E(p,M_r))\neq \emptyset$. Hence, it trivially holds that
$C\cap(C_p\cup \E(p,M_r'))\neq \emptyset$, because $\E(p,M_r)\subseteq \E(p,M_r')$. Thus, by Lemma~\ref{lem:escapeSpace}, $C$ is a
\ecomponent{p,M_r'}.
Eventually, consider the case where $\F(C)\cap \E(r,M_s)\neq \emptyset$, and recall that $C_r'$ is the \component{M_r'} with $C_r\cup
\E(r,M_s)\subseteq C_r'$. Since $M_r'=M_r\setminus \E(r,M_s)$, we then have that $C\subseteq C_r'$. \hfill $\diamond$
\end{itemize}

\noindent \textit{Property $P_2$:} \emph{For each \iecomponent{p,M_r'} $C'\neq C_r'$, $C'$ is a \iecomponent{p,M_r}.}

\vspace{-2mm}
\begin{itemize}
\item[\ ] \textbf{Proof.}  Let $C'\neq C_r'$ be a \ecomponent{p,M_r'}, hence in particular an \component{M_r'}. Since $\E(r,M_s)\subseteq
    C_r'$ and $C_r'$ is also an \component{M_r'}, we have that $C'\cap \E(r,M_s)=\emptyset$.
Moreover, $C'\cap M_r'=\emptyset$, with $M_r'=M_r\setminus \E(r,M_s)$. Thus, $C'\cap M_r\neq \emptyset$ and, hence, $C'$ is also an
\component{M_r}. Then, in the light of Lemma~\ref{lem:escapeSpace}, to conclude the proof, it suffices to show that $C'\cap (C_p \cup
\E(p,M_r))\neq \emptyset$ holds. In the case where $C'\cap C_p\neq \emptyset$, we have concluded.
Therefore, consider the case where $C'\cap C_p=\emptyset$. In this case, as $C'$ is a \ecomponent{p,M_r'} and, hence, $C'\cap (C_p\cup
\E(p,M_r'))\neq \emptyset$ because of Lemma~\ref{lem:escapeSpace}, we have $C'\cap \E(p,M_r')\neq \emptyset$.
Recall now that $\E(p,M_r)=M_p\cap \F(C_p)\setminus M_r$ and $\E(p,M_r')=M_p\cap \F(C_p)\setminus M_r'$.
Thus, $\E(p,M_r')\subseteq (M_r\setminus M_r')\cup \E(p,M_r)$, and then $\E(p,M_r')\subseteq \E(r,M_s)\cup \E(p,M_r)$ holds, as
$M_r'=M_r\setminus \E(r,M_s)$ by definition of the strategy $\sigma'$. Given that $C' \cap \E(r,M_s)=\emptyset$, we immediately can
conclude that $C'\cap \E(p,M_r')\subseteq C'\cap \E(p,M_r)$. However, $\E(p,M_r)\subseteq \E(p,M_r')$, because $M_r'\subset M_r$, and
hence, $C' \cap \E(p,M_r')= C'\cap \E(p,M_r)$ actually holds. Given that $C'\cap \E(p,M_r')\neq \emptyset$, we have therefore that $C'\cap
\E(p,M_r)\neq \emptyset$. Thus, $C'\cap (C_p \cup \E(p,M_r))\neq \emptyset$. \hfill $\diamond$
\end{itemize}

Note that in the light of the two results above, the function $\sigma'$ encodes a winning strategy when attacking each \ecomponent{p,M_r'}
$C'\neq C_r'$, since these components remain completely unchanged when changing $\sigma$ with $\sigma'$. Thus, as illustrated in
Figure~\ref{fig:nonmonotone2}, all subtrees of $T(\sigma)$ rooted at the children of $(M_p,C_p)$ and attacking options outside $C_r'$ are
preserved in the game-tree $T(\sigma')$. Hence, we have only to take care of how $\sigma'$ attacks the remaining component $C_r'$.

By definition of $\sigma'$, $\sigma'(M_r',C_r')=M_s$, which is a valid position since $M_s\subseteq \F(C_r')$ because of the facts that
$\sigma$ is a strategy (and, hence, we can apply Fact~\ref{lem:main} to conclude that $M_s\subseteq \F(C_r)$) and $C_r'\supseteq C_r$, so that
$\F(C_r)\subseteq \F(C_r')$. Moreover, the following property holds, which guarantees that the novel move is actually monotone.\\

\noindent \textit{Property $P_3$:} \emph{ $\E(r',M_s)=\emptyset$.}

\vspace{-2mm}
\begin{itemize}
\item[\ ] \textbf{Proof.} Assume by contradiction that $\E(r',M_s)= \partial C'_r \setminus M_s \neq\emptyset$, and let $X\in \E(r',M_s)$.
    From  $\partial C'_r \subseteq M'_r = M_r\setminus \E(r,M_s)$, we get that $X\in M_r \setminus \F(C_r)$. However, this is impossible
    because $M_r \subseteq \F(C_r)$, by definition of the Robber and Captain game.\hfill $\diamond$
\end{itemize}

We next show that applying $M_s$ to $C_r'$ leads exactly to the same strategy obtained when attacking $C_r$ with the same move $M_s$. Let $r'$
be the position $(M_r',C_r')$.\\

\noindent \textit{Property $P_4$:} \emph{  For each \iecomponent{r,M_s} $C$, $C$ is an \iecomponent{r',M_s}.}

\vspace{-2mm}
\begin{itemize}
\item[\ ] \textbf{Proof.} Let $C$ be an \ecomponent{r,M_s} and, hence, an \component{M_s} such that $C\cap(C_r\cup \E(r,M_s))\neq
    \emptyset$, by Lemma~\ref{lem:escapeSpace}. By definition, $C_r\cup \E(r,M_s)\subseteq C_r'$. Hence, $C\cap C_r'\neq \emptyset$. Again
    by Lemma~\ref{lem:escapeSpace}, we conclude that $C$ is a \ecomponent{r',M_s}.~\hfill~$\diamond$
\end{itemize}

\noindent \textit{Property $P_5$:} \emph{   For each \iecomponent{r',M_s} $C$, $C$ is an \iecomponent{r,M_s}.}

\vspace{-2mm}
\begin{itemize}
\item[\ ] \textbf{Proof.} Let $C$ be an \ecomponent{r',M_s} and, hence, an \component{M_s} such that $C\cap(C_r'\cup \E(r',M_s))\neq
    \emptyset$, by Lemma~\ref{lem:escapeSpace}. Our goal is to show that $C$ is also such that $C\cap(C_r\cup \E(r,M_s))\neq \emptyset$, so
    that $C$ is also an \ecomponent{r,M_s} (again by Lemma~\ref{lem:escapeSpace}). By Property $P_3$, $\E(r',M_s)=\emptyset$ and, hence,
    $C\cap C_r'\neq \emptyset$.
Let $Y$ be any vertex in $C\cap C_r'$ and assume, for the sake of contradiction, that $C \cap(C_r\cup \E(r,M_s))= \emptyset$. Then, since
$C$ is an \component{M_s}, $M_s$ separates $Y$ from the vertices in $C_r\cup \E(r,M_s)$, i.e., each path connecting $Y$ with some vertex in
$C_r\cup \E(r,M_s)$ must include a vertex belonging to $M_s$. 5 Let $\bar M_s\subseteq M_s$ be the set of all such vertices blocking the
paths from $Y$ to $C_r\cup \E(r,M_s)$. Then, consider the set $\{Y\}\cup C_r\cup \E(r,M_s)$ which is contained in $C_r'$. Hence, $\{Y\}\cup
C_r\cup \E(r,M_s)$ is \connected{M_r'}, because $C_r'$ is an \component{M_r'}. Therefore, the separator $\bar M_s$ cannot be included in
$M_r'$. It follows that there is a node $p\in \bar M_s\setminus M_r'$. Recall, now, that $M_r'=M_r\setminus \E(r,M_s)$. Thus, $p\not\in
M_r\setminus \E(r,M_s)$ holds. In addition, since $\E(r,M_s)=M_r\cap\F(C_r)\setminus M_s$, we have that $M_s\cap \E(r,M_s)=\emptyset$. So,
given that $p\in \bar M_s\subseteq M_s$, we have that $p\not\in \E(r,M_s)$. It follows that $p\not\in (M_r\setminus \E(r,M_s))\cup
\E(r,M_s)=M_r$.
Now observe that $M_s\subseteq \F(C_r)$ (because of Fact~\ref{lem:main}), while $\F(C_r)\setminus C_r\subseteq M_r$. Then, since $p\not\in
M_r$ and $p\in M_s$, we conclude that $p$ is in $C_r$. Thus, we can assume w.l.o.g that $Y$ is in $\F(C_r)$. Since $Y\not\in C_r\cup
\E(r,M_s)$, because of the assumption that $C \cap(C_r\cup \E(r,M_s))= \emptyset$ and the fact that $Y\in C$, we conclude that $Y\in
\F(C_r)\setminus C_r$, i.e., $Y\in M_r$. And, actually, $Y\in M_r'=M_r\setminus \E(r,M_s)$, given that $Y\not\in \E(r,M_s)$. But, this is
impossible, since $Y\in C_r'$ and $C_r'$ is an \component{M_r'}.~\hfill~$\diamond$
\end{itemize}

It follows that after the move $M_r'=\sigma'(M_p,C_p)$, the set of options available to the Robber is precisely the set that the Robber would
have obtained with the move $M_r$. Then, because $\sigma'$ attacks these options precisely as $\sigma$, we conclude that $\sigma'$ is still a
winning strategy. \hfill $\Box$
\end{proof}

\begin{figure}[t]
  \centering
  \includegraphics[width=0.98\textwidth]{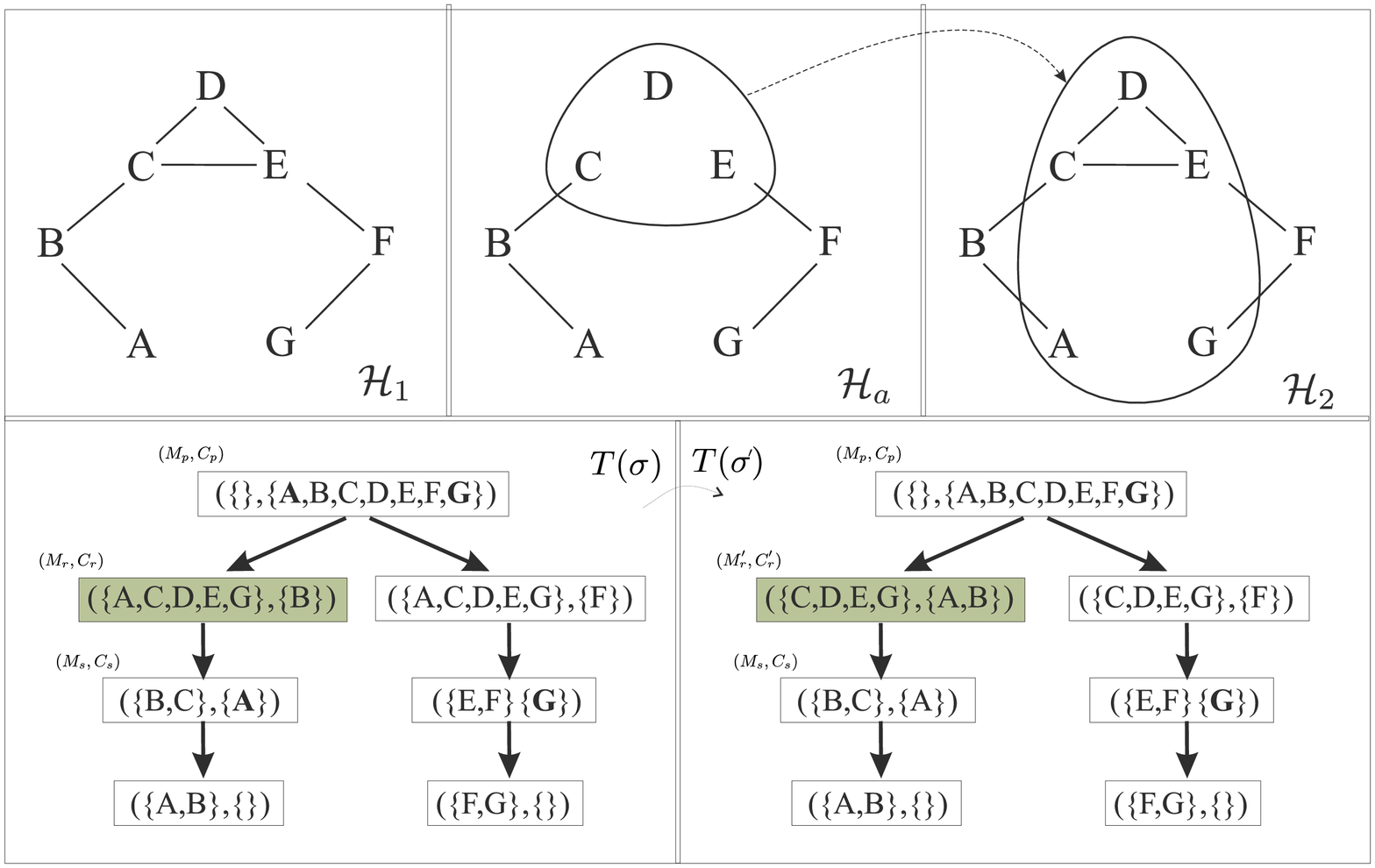}
  \caption{Illustration for Example~\ref{ex:construction}.}\label{fig:greedy-bis}
\end{figure}

\begin{example}\label{ex:construction}\em
For an example application of the above lemma, consider Figure~\ref{fig:greedy-bis}. The figure reports two hyperedges $\HG_1$ and $\HG_2$,
plus the game-tree for a winning strategy $\sigma$. In particular, note that $\sigma$ is non-monotone, because the Robber is allowed to return
on $A$ and $G$, after that these nodes have been previously occupied by the Captain with the move $M_r=\{A,C,D,E,G\}$. In fact, the figure also
reports the strategy $\sigma'$ that is obtained from $\sigma$, by turning the non-monotone move of the Captain (in the left branch of the tree)
into a monotone one according to the construction of Lemma~\ref{lem:construction}.

Note that the novel move of the Captain is $M_r'=\{C,D,E,G\}=M_r\setminus \{A\}$, with $\{A\}=\E(r',{B,C})$ being the escape door for the
Robber. In fact, this novel move does not affect the (winning) strategy of the Captain in the right branch.~\hfill~$\lhd$
\end{example}

Now that the transformation from $\sigma$ to $\sigma'$ has been clarified, we can state the main result of this section, which is based on the
fact that minimal winning strategies have to be monotone. For a strategy $\sigma$, let $||\sigma||$ be the \emph{size} of $\sigma$ measured as
total number of cops used over all the vertices of $T(\sigma)$, i.e., $||\sigma||=\sum_{(M,C)\in T(\sigma)}|M|$. Let $\sigma_1$ and $\sigma_2$
be two winning strategies. We write $\sigma_1\prec \sigma_2$ iff $||\sigma_1||<||\sigma_2||$. We say that a winning strategy $\sigma$ is
minimal, if there is no winning strategy $\bar \sigma$ such that $\bar \sigma\prec \sigma$. Note that the existence of a winning strategy
always entails the existence of a minimal winning one.

\begin{theorem}\label{thm:monotonization}
On the $\CR(\HG_1,\HG_2)$ game, the existence of a winning strategy implies the existences of a monotone winning strategy.
\end{theorem}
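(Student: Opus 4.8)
The plan is a minimal-counterexample argument in which Lemma~\ref{lem:construction} does all the real work. Since some winning strategy exists, so does a \emph{minimal} one; let $\sigma$ be a winning strategy minimizing $||\sigma||=\sum_{(M,C)\in T(\sigma)}|M|$. I claim that any such $\sigma$ must be monotone, which is exactly the statement. Suppose, for contradiction, that it is not.

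Then $T(\sigma)$ contains an edge whose Captain move is non-monotone; choose one of minimal depth, played at a configuration $r=(M_r,C_r)$, and let $M_s=\sigma(M_r,C_r)$ be that move and $s=(M_s,C_s)$ a child of $r$ witnessing non-monotonicity, so that $\E(r,M_s)\neq\emptyset$ by Corollary~\ref{cor:monotone}. First I would observe that $r$ is not the root of $T(\sigma)$: at the starting configuration the relevant component $C$ is a $\component{\emptyset}$ of $\HG_1$, hence $\F(C)=C$ and $\partial C=\emptyset$, so every move played there has empty escape-door and is monotone. Consequently $r$ has a parent $p=(M_p,C_p)$ in $T(\sigma)$ with $M_r=\sigma(M_p,C_p)$, and because our non-monotone move was chosen of minimal depth, the move from $p$ to $r$ is monotone, i.e., every \ecomponent{p,M_r} is contained in $C_p$. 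These are precisely the hypotheses of Lemma~\ref{lem:construction}; applying it produces a winning strategy $\sigma'$ that agrees with $\sigma$ everywhere except at $(M_p,C_p)$, where it plays $M_r'=M_r\setminus\E(r,M_s)\subset M_r$. Since $M_r\neq\emptyset$ (otherwise $M_r'=M_r$), we have $|M_r'|<|M_r|$, and it remains only to show $||\sigma'||<||\sigma||$, which contradicts minimality and finishes the proof.

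This inequality is the technical heart, and I would obtain it by comparing $T(\sigma)$ and $T(\sigma')$ via the structural correspondences established inside the proof of Lemma~\ref{lem:construction} (its Properties $P_1$, $P_2$, $P_4$, $P_5$). Outside the subtree rooted at $p$ the two trees coincide, and $p$ contributes $|M_p|$ to both. Below $p$: the children of $p$ in $T(\sigma')$ are the \ecomponent{p,M_r'}s, which---using that the \component{M_r'} $C_r'$ containing $C_r\cup\E(r,M_s)$ is one of them, together with Properties $P_1$ and $P_2$---are exactly $C_r'$ together with those \ecomponent{p,M_r}s $C$ with $C\not\subseteq C_r'$. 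Each such $C$ reappears in $T(\sigma')$ as a child of $p$ carrying the strictly smaller position $M_r'$, with an unchanged subtree below it; and by Properties $P_4$ and $P_5$, $\sigma'$ plays $M_s$ at $(M_r',C_r')$ and the resulting Robber options are exactly the \ecomponent{r,M_s}s, so the subtree of $T(\sigma')$ rooted at $(M_r',C_r')$ is the subtree of $T(\sigma)$ rooted at $r$ with its root's position reduced from $M_r$ to $M_r'$. Finally, the subtrees of $T(\sigma)$ hanging below the children $(M_r,C)$ of $p$ with $\emptyset\neq C\subseteq C_r'$ and $C\neq C_r$ (the components absorbed into $C_r'$) simply disappear in $T(\sigma')$. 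Summing these contributions, no vertex's position ever grows, the subtree handling $C_r'$ has strictly smaller total weight than the subtree $\sigma$ devotes to $C_r$, and hence $||\sigma||-||\sigma'||\ge|M_r|-|M_r'|=|\E(r,M_s)|\ge1>0$.

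I expect the main obstacle to be precisely this accounting: one has to be sure that the subtrees asserted to be ``unchanged'' really are preserved by the surgery of Lemma~\ref{lem:construction}. The key point is that for an option $C$ with $C\not\subseteq C_r'$ one has $\E(r,M_s)\cap\F(C)=\emptyset$ (from the case analysis in the proof of Property~$P_1$), hence $M_r\cap\F(C)=M_r'\cap\F(C)$, so that the continuation of the game inside $C$ is identical whether the Captain sits at $M_r$ or at $M_r'$; to make this fully rigorous it is convenient to work, without loss of generality, with strategies whose move at a configuration depends only on the current component and on the portion of the Captain's position meeting that component's frontier. The remaining ingredients---existence of a minimal winning strategy, the reduction to a topmost non-monotone move, and the verification of the hypotheses of Lemma~\ref{lem:construction}---are routine.
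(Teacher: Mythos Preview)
Your proposal is correct and follows essentially the same approach as the paper: take a winning strategy $\sigma$ of minimal size $||\sigma||$, assume it is non-monotone, invoke Lemma~\ref{lem:construction} to obtain a winning strategy $\sigma'$ with $||\sigma'||<||\sigma||$, and derive a contradiction. You supply considerably more detail than the paper's four-line proof---in particular the explicit verification that a topmost non-monotone move has a monotone parent (so that the hypotheses of Lemma~\ref{lem:construction} are met), and the careful size accounting together with the WLOG normalization of strategies needed to make the ``unchanged subtree'' claims precise---but the underlying argument is identical.
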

\begin{proof}
We claim that minimal winning strategies must be monotone. Indeed, let $\sigma$ be a non-monotone winning strategy and assume, for the sake of
contradiction, that $\sigma$ is minimal. Consider the transformation from $\sigma$ to $\sigma'$ discussed in the proof of
Lemma~\ref{lem:construction}, by recalling that $\sigma'$ is a winning strategy. Then, by definition of $\sigma'$ and the properties pointed
out in that proof, we have that $\sigma'\prec \sigma$, which is impossible. \hfill $\Box$
\end{proof}

As a remark, the transformation in the proof of Lemma~\ref{lem:construction} entails the existence of a constructive method to build a monotone strategy from a non-monotone one.

\subsection{Tree Projections and the R\&C Game}\label{sec:TreeProjections}

In this section, we prove that the Robber and Captain game precisely characterizes the tree projection problem, in the sense that a winning
strategy for $\CR(\HG_1,\HG_2)$ exists if and only if $(\HG_1,\HG_2)$ has a tree projection. Hence, any decomposition technique that can be
restated in terms of tree projections is in turn characterized by \CR\ games. In particular, if we consider pairs of the form
$(\HG_1,\HG_1^k)$, we get a game characterization for the notion of $k$-width generalized hypertree decomposition, for which such a
characterization was still missing in the literature.

For an exemplification of the results below, the reader may consider the game-tree illustrated in the bottom-right part of
Figure~\ref{fig:strategy} and the tree projection in Figure~\ref{fig:minimalTP}.

\begin{theorem}\label{thm:fromGameToTree}
If there is a winning strategy in $\CR(\HG_1,\HG_2)$, then $(\HG_1,\HG_2)$ has a tree projection.
\end{theorem}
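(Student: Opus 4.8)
The plan is to feed Theorem~\ref{thm:monotonization} into the argument and build the tree projection from a \emph{monotone} winning strategy. So I first take a monotone winning strategy $\sigma$; its game tree $T(\sigma)$ is finite. By Corollary~\ref{cor:monotone}, for every vertex $v=(M_v,C_v)$ of $T(\sigma)$ with $C_v\neq\emptyset$ the escape-door towards each child is empty, i.e.\ $\partial C_v\subseteq\sigma(v)$; combined with $\sigma(v)\subseteq\F(C_v)$ (Definition~\ref{def:strategy}) this gives $\partial C_v\subseteq\sigma(v)\subseteq\F(C_v)=C_v\cup\partial C_v$, and by Lemma~\ref{lem:monotone} every option available to the Robber at $v$ is a subset of $C_v$. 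Let $T$ be the subtree of $T(\sigma)$ induced by the \emph{active} vertices (those with non-empty component); it contains the initial configuration $(\emptyset,\nodes(\HG_1))$, and its leaves are exactly the active vertices having no Robber option. Writing $N_v=\sigma(v)$ for $v\in T$, I will show that $\tuple{T,\chi}$ with $\chi(v)=N_v$ is a tree decomposition of the Gaifman graph of $\HG_1$ all of whose bags are contained in hyperedges of $\HG_2$. By the standard construction underlying the treewidth/tree-projection correspondence recalled in Section~\ref{sec:framework} (which only uses that bags are contained in hyperedges of the resource), the $\subseteq$-maximal bags of such a decomposition then form an acyclic hypergraph $\HG_a$ with $\HG_1\leq\HG_a\leq\HG_2$, which is the desired tree projection. (Nodes of $\HG_1$ occurring in no hyperedge play no role and may be ignored.)

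Containment in $\HG_2$ is immediate, since each $N_v$ is a position for the Captain, hence a subset of some hyperedge of $\HG_2$. For $\HG_1\leq\HG_a$, I fix $e\in\edges(\HG_1)$ and descend in $T$ from the root keeping the invariant $e\cap C_v\neq\emptyset$ (true at the root, where $C_v=\nodes(\HG_1)$). Note that $e\cap C_v\neq\emptyset$ forces $e\subseteq\F(C_v)$, so $e\setminus C_v=e\cap\partial C_v\subseteq\partial C_v\subseteq N_v$. If moreover $e\cap C_v\subseteq N_v$, then $e\subseteq N_v$ and we are done; otherwise pick $z\in(e\cap C_v)\setminus N_v$: the \component{N_v} containing $z$ meets $C_v$, hence is a Robber option by Lemma~\ref{lem:escapeSpace}, so the corresponding child $v'$ of $v$ still satisfies $e\cap C_{v'}\neq\emptyset$ and the descent continues. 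If $v$ is a leaf of $T$ with $C_v\neq\emptyset$, the absence of a Robber option forces (via Lemma~\ref{lem:escapeSpace}, using $\E(v,N_v)=\emptyset$) $C_v\subseteq N_v$, so $N_v=\F(C_v)\supseteq e$. Finiteness of $T$ guarantees termination, so every hyperedge of $\HG_1$ is covered.

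The crux, which I expect to be the main obstacle, is the connectedness condition: for each node $X$, the set $D_X=\{v\in T\mid X\in N_v\}$ induces a connected subtree of $T$. I would first record two auxiliary facts. (a) $R_X=\{v\in T\mid X\in C_v\}$ induces a subtree containing the root (because by monotonicity a child's component is contained in its parent's, so $R_X$ is closed under taking parents); moreover, if $v\in R_X$ and $X\in N_v$ then no child of $v$ is in $R_X$, if $v\in R_X$ and $X\notin N_v$ then exactly one child of $v$ is in $R_X$ (namely the one whose component is the \component{N_v} of $X$, an option by Lemma~\ref{lem:escapeSpace}), and if $v$ is a leaf of $T$ in $R_X$ then $X\in N_v=\F(C_v)$. (b) Along any root-to-leaf path $v_0,\dots,v_m$ of $T$, the set $\{\,j\mid X\in N_{v_j}\,\}$ is either empty or a contiguous interval starting at the depth $t$ of the last $R_X$-vertex on the path: for $j<t$ one has $X\notin N_{v_j}$ since $X$ lies in the components of both $v_j$ and $v_{j+1}$ and $C_{v_{j+1}}$ is an option avoiding $N_{v_j}$; and for $j>t$ one has $X\notin C_{v_j}$, so $X\in N_{v_j}$ forces $X\in\partial C_{v_j}\subseteq M_{v_j}=N_{v_{j-1}}$, which propagates membership upward to $v_t$ and makes the set contiguous.

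Finally, given $v_1,v_2\in D_X$ with least common ancestor $w$ in $T$, I would deduce $w\in D_X$ and hence $[v_1,v_2]\subseteq D_X$. If $w$ equals $v_1$ or $v_2$ this is trivial; otherwise extend the root-to-$v_i$ paths to root-to-leaf paths $P_i$, sharing the prefix down to $w$. Assume, for contradiction, $w\notin D_X$. By fact (b) applied to $P_1$ together with $v_1\in D_X$, the last $R_X$-vertex on $P_1$ must be a strict descendant of $w$; thus $w\in R_X$ and $R_X$ continues from $w$ into the child of $w$ on $P_1$, which by fact (a) is then the \emph{only} child of $w$ in $R_X$ and forces $X\notin N_w$. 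But the child of $w$ on $P_2$ is a different child, hence not in $R_X$, so on $P_2$ the last $R_X$-vertex is $w$ itself; since $X\notin N_w$, fact (b) leaves $D_X$ empty all along $P_2$, contradicting $v_2\in D_X$. Hence $w\in D_X$, and since on each $P_i$ the set $D_X$ is an interval containing both $w$ and $v_i$, the whole path between $v_1$ and $v_2$ lies in $D_X$. This establishes that $\tuple{T,\chi}$ is a tree decomposition of the Gaifman graph of $\HG_1$, and therefore that $(\HG_1,\HG_2)$ has a tree projection.
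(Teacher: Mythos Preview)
Your proof is correct and follows essentially the same approach as the paper: take a monotone winning strategy (via Theorem~\ref{thm:monotonization}), use the Captain's moves along the game tree as bags, verify the running-intersection/connectedness property, and read off an acyclic hypergraph sandwiched between $\HG_1$ and $\HG_2$. The differences are presentational rather than substantive: the paper builds a join tree for the hypergraph of positions directly, while you build a tree decomposition of the Gaifman graph and then pass to maximal bags; and your connectedness argument is organized through the auxiliary facts about $R_X$ and the interval structure of $D_X$ along root-to-leaf paths, whereas the paper argues more directly by looking at the vertex closest to the root on the path between two occurrences of $X$. Your descent argument for coverage of $\edges(\HG_1)$ is also more explicit than the paper's one-line appeal to ``the Captain has necessarily covered $h_1$ in some position''.
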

\begin{proof}
From Theorem~\ref{thm:monotonization}, if there is a winning strategy in $\CR(\HG_1,\HG_2)$, there exists a monotone winning strategy, say
$\sigma$, for this game. Based on $\sigma$ we build a hypergraph $\HG_a(\sigma)$ where, for each vertex $(M,C)$ in $T(\sigma)$ with $M\neq
\emptyset$, $\edges(\HG_a(\sigma))$ contains the hyperedge $M$; and, no further hyperedge is in $\edges(\HG_a(\sigma))$. Note that, by
construction, $\HG_a(\sigma)\leq \HG_2$, since each position $M$ is such that $M\subseteq h_2$ for some hyperedge $h_2\in \edges(\HG_2)$.
Let $h_1$ be a hyperedge in $\edges(\HG_1)$. Since $\sigma$ is a winning strategy, we trivially conclude that the Captain has necessarily
covered in a complete form $h_1$ in some position. Thus, $\HG_1\leq \HG_a(\sigma)$. Eventually, in order to show that $\HG_a(\sigma)$ is a tree
projection, it remains to check that $\HG_a(\sigma)$ is acyclic.

To this end, we build a tree $\JT$ by exploiting the strategy $T(\sigma)$. $\JT$ contains all the hyperedges in $\edges(\HG_a)$ and, for each
pair of adjacent configurations $(M_s,C_s)$ and $(M_r,C_r)$ in $T(\sigma)$, the vertices $M_s$ and $M_r$ of $\JT$ are connected with an edge in
$\JT$. We claim that $\JT$ is a join tree for $\HG_a(\sigma)$. In the following, assume (for the sake of exposition) that $\JT$ is rooted at
the hyperedge encoding the first move of the Captain (e.g., $\{E,F,G\}$ in the game-tree depicted in Figure~\ref{fig:strategy}).

Note first that by construction of $\JT$ and since $\sigma$ is monotone, each vertex of $\JT$ has exactly one parent, but the root. Thus, $\JT$
is in fact an acyclic graph, and we have just to focus on showing that the connectedness condition is satisfied.

Let $h_1$ and $h_2$ be two distinct vertices in $\JT$ and let $X\in h_1\cap h_2$. Let $h$ be the vertex in the shortest path between $h_1$ and
$h_2$ that is the closest to the root of $\JT$. Assume, w.l.o.g., that $h_2\neq h$ and assume that $h_2$ is a child of $h'$. Because of
Fact~\ref{lem:main}, $h_2\subseteq \F(C_{h'})$, where $C_{h'}$ is the \component{h'} such that $(h,C_{h'})$ is in $T(\sigma)$ and where
$\sigma(h,C_{h'})=h_2$. Thus, $X\in \F(C_{h'})$.
Assume, for the sake of contradiction, that $X$ does not occur in a hyperedge in the path between $h_1$ and $h_2$, i.e., that the connectedness
condition is violated. In particular, w.l.o.g., we may just focus on the case where $X\not\in h'$. Then, since $h'\supseteq \F(C_{h'})\setminus
C_{h'}$, $X\in \F(C_{h'})$ and $X\not\in h'$ immediately entail that $X$ is in $C_{h'}$. Then, because of the monotonicity of $\sigma$, $X$
also occurs in an \component{h} $C_h$ such that $(h,C_h)\in T(\sigma)$. Now, observe that the scenario $h_1=h$ is impossible. Indeed, $h_1$
contains $X$ that would be also contained in an \component{h_1}, which is impossible by the monotonicity of the strategy. Thus, assume that
$h_1$ is the child of an edge $\bar h'$. By using the same line of reasoning as above, we conclude that $X$ occurs in an \component{h} $\bar
C_h$ such that $(h,\bar C_h)\in T(\sigma)$. Thus, $C_h=\bar C_h$. By definition of $\JT$, this means that $h_2$ and $h_1$ occur in a subtree
rooted at some child of $h$, which is impossible since $h$ is in the shortest path between $h_1$ and $h_2$. \hfill $\Box$
\end{proof}

We now complete the picture by showing the converse result.

\begin{theorem}\label{thm:fromTreetoGame}
If $(\HG_1,\HG_2)$ has a tree projection, then there is a winning strategy in $\CR(\HG_1,\HG_2)$.
\end{theorem}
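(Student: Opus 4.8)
The plan is to turn a normal-form join tree of a minimal tree projection into a monotone winning strategy for the Captain whose game tree mirrors (the relevant part of) that join tree. By Fact~\ref{thm:reducedMinimal}, since $(\HG_1,\HG_2)$ has a tree projection it has a minimal one $\HG_a$, which is reduced (Fact~\ref{thm:reducedMinimal2}) and satisfies $\nodes(\HG_a)=\nodes(\HG_1)$ (Fact~\ref{fact:sameNodes}); it suffices to treat the case in which $\HG_1$ is connected, the general case following by arguing separately on each \component{\emptyset}. Fix any $h\in\edges(\HG_a)$ and, by Theorem~\ref{thm:mainMinimal}, let $\JT=\JT[h]$ be a join tree for $\HG_a$ rooted at $h$ that is in normal form, hence an $\HG_1$-component tree in the sense of Definition~\ref{def:component-tree}. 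I will use the following facts about $\JT$ (with the convention $\treecomp(h)=\nodes(\HG_1)$): for every vertex $h_r$ and every \component{h_r} $C$ of $\HG_1$ with $C\subseteq\treecomp(h_r)$ there is a \emph{unique} child $h_s$ of $h_r$ in $\JT$ with $\treecomp(h_s)=C$, and every such child satisfies $\nodes(\JT_{h_s})=\treecomp(h_s)\cup(h_s\cap h_r)$, $h_s\cap\treecomp(h_s)\neq\emptyset$, and $h_s\subseteq\F(\treecomp(h_s),\HG_1)$.

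I then define $\sigma$ by $\sigma(\emptyset,\nodes(\HG_1))=h$ (a legal move, since $\F(\nodes(\HG_1))=\nodes(\HG_a)\supseteq h$ and $h$ is contained in some hyperedge of $\HG_2$) and, for every configuration $(h_r,C)$ in which $h_r$ is a vertex of $\JT$ and $C$ is a \component{h_r} of $\HG_1$ with $C\subseteq\treecomp(h_r)$, by $\sigma(h_r,C)=h_s$, the unique child of $h_r$ with $\treecomp(h_s)=C$. This move is legal too: $h_s\subseteq\F(\treecomp(h_s),\HG_1)=\F(C)$, so by Fact~\ref{lem:main} each cop of $h_s$ \touches{h_r} $C$, and $h_s\in\edges(\HG_a)$ is contained in some hyperedge of $\HG_2$. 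The core step is to identify the Robber's replies: I claim the \ecomponent{(h_r,C),h_s}s are exactly the \component{h_s}s of $\HG_1$ contained in $C=\treecomp(h_s)$. To see this I would first prove $\partial(C)\subseteq h_s$: if $Z\in\partial(C)$ then $Z\in h_r$ and $\{Z,W\}\subseteq h_1$ for some $W\in C$ and $h_1\in\edges(\HG_1)$, hence, since $\HG_1\leq\HG_a$, $\{Z,W\}\subseteq h_a$ for some vertex $h_a$ of $\JT$; because $W\in C=\treecomp(h_s)\subseteq\nodes(\JT_{h_s})$ while $C\cap h_r=\emptyset$, the connectedness condition of $\JT$ forces every vertex of $\JT$ containing $W$ — in particular $h_a$ — to lie in the subtree $\JT_{h_s}$, so $Z\in h_a\subseteq\nodes(\JT_{h_s})=C\cup(h_s\cap h_r)$, and as $Z\in h_r\setminus C$ we get $Z\in h_s$. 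Hence $\E((h_r,C),h_s)=\partial(C)\setminus h_s=\emptyset$, so by Lemma~\ref{lem:monotone} every such option is contained in $C$, while conversely any nonempty \component{h_s} contained in $C$ meets $C$ and is an option by Lemma~\ref{lem:escapeSpace}. By the {\sc components$\mapsto$subtrees} property applied at $h_s$, each such option $C'$ equals $\treecomp(h_t)$ for a unique child $h_t$ of $h_s$ and satisfies $C'\subseteq\treecomp(h_s)$, so the invariant defining $\sigma$ is preserved; moreover $h_s\cap\treecomp(h_s)\neq\emptyset$ forces $C'\subsetneq C$.

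It remains to check that $\sigma$ is winning, i.e.\ that $T(\sigma)$ is finite. Along any branch of $T(\sigma)$ the successive positions chosen by the Captain form a sequence $h=h_{s_0},h_{s_1},h_{s_2},\dots$ with each $h_{s_{j+1}}$ a child of $h_{s_j}$ in $\JT$; thus the branch corresponds to a strictly descending path in the finite tree $\JT$ and has bounded length, and since $T(\sigma)$ is finitely branching (a hypergraph has finitely many components with respect to any node set) it is finite. Therefore $\sigma$ is a winning strategy in $\CR(\HG_1,\HG_2)$, as desired. I expect the only delicate point to be the inclusion $\partial(C)\subseteq h_s$: this is where the relation $\HG_1\leq\HG_a$, the connectedness condition of the join tree, and the identity $\nodes(\JT_{h_s})=\treecomp(h_s)\cup(h_s\cap h_r)$ supplied by the component-tree structure all come into play; once it is established, the remaining claims follow mechanically from Lemmas~\ref{lem:escapeSpace} and~\ref{lem:monotone}.
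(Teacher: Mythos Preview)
Your proposal is correct and follows essentially the same route as the paper: take a minimal tree projection, invoke Theorem~\ref{thm:mainMinimal} to get a normal-form join tree rooted at some $h$, and let the Captain's successive positions walk down this tree, with $\sigma(h_r,C)$ being the unique child $h_s$ of $h_r$ with $\treecomp(h_s)=C$. The only difference is in how monotonicity is argued: the paper derives it from the chain $\treecomp(h_s)\subseteq\treecomp(h_r)$ obtained via the identity $\nodes(\JT[h]_{h_r})=\treecomp(h_r)\cup(h_r\cap h_p)$, whereas you prove the sharper inclusion $\partial(C)\subseteq h_s$ directly from join-tree connectedness and then invoke Lemmas~\ref{lem:escapeSpace} and~\ref{lem:monotone}; your version is a bit more explicit at the point where the paper is terse, but the two arguments are equivalent.
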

\begin{proof}
Assume that $(\HG_1,\HG_2)$ has a tree projection. From Fact~\ref{thm:reducedMinimal}, it has a minimal tree projection, say $\HG_a$. Let $\JT$
be a join tree for $\HG_a$ in normal form (cf. Theorem~\ref{thm:mainMinimal}), and let $h$ be any hyperedge in $\edges(\HG_a)$.

Based on $\JT$, we build a strategy $\sigma$ as follows. Let $h_0=\emptyset$ and $C_0=\nodes(\HG_1)$. The first move of the Captain is $h$.
Recall from Section~\ref{MTP} that $\treecomp(h_s)$ is the unique \component{h_r} with $\nodes(\JT[h]_{h_s})=\treecomp(h_s)\cup (h_s\cap h_r)$,
where $h_s$ is a child of $h_r$ in $\JT[h]$ (with $\treecomp(h)$ be defined as $\nodes(\HG_a)$).
Given the current position $(h_p,C_p)$ and the current move $h_r$, assume that the following inclusion relationship holds: {for each
\ecomponent{(h_{p},C_{p}),h_r} $C_r$, $C_r\subseteq \treecomp(h_r)$}. Then, $\sigma(h_r,C_r)$ is defined as the hyperedge $h_{s}$ that is the
child of $h_r$ in $\JT[h]$ and that is such that $C_r=\treecomp(h_{s})$. It follows that the strategy $\sigma$ is well-defined under this
assumption, because such a hyperedge exists by Theorem~\ref{thm:mainMinimal}. Now, note that we can set $C_p=\treecomp(h_r)$ (for the first
move, just recall that $\treecomp(h)=\nodes(\HG_1)$).

We now show that the above inclusion relationship actually holds, that is, for each vertex $h_r$ of $\JT[h]$ and for each child $h_s$ of $h_r$,
we have $\treecomp(h_s)\subseteq \treecomp(h_r)$. To see this is true, recall again by Theorem~\ref{thm:mainMinimal} that
$\nodes(\JT[h]_{h_r})=\treecomp(h_r)\cup(h_r\cap h_p)$ and $\nodes(\JT[h]_{h_s})=\treecomp(h_s)\cup(h_s\cap h_r)$. Assume, for the sake of
contradiction, that $\treecomp(h_s)\not\subseteq \treecomp(h_r)$ and let $X\in\treecomp(h_s)\setminus \treecomp(h_r)$. Since,
$\nodes(\JT[h]_{h_s})\subseteq \nodes(\JT[h]_{h_r})$, it follows that $X\in h_r\cap h_p$. This is impossible, since $\treecomp(h_s)$ is a
\component{h_r}, with $X\in \treecomp(h_s)$.

Finally, to complete the proof just notice that the above also entails that $\sigma$ is a monotone strategy, eventually covering all the nodes
in $\HG_1$, hence it is a winning strategy.~\hfill~$\Box$
\end{proof}

\section{Applications and Conclusion}\label{sec:applications}

In this paper, we have analyzed structural decomposition methods to identify nearly-acyclic hypergraphs by focusing on the general concept of
tree projections.

We defined and studied a natural notion of minimality for tree-projections of pairs of hypergraphs. It turns out that minimal tree-projections
always exist (whenever some tree-projection exists), and that they enjoy some useful properties, such as the existence of join trees in a
suitable normal form that is crucial for algorithmic applications. In particular, such join trees have polynomial size with respect to the
given pair of hypergraphs.
As an immediate consequence of these properties, we get that deciding whether a tree projection of a pair of hypergraphs $(\HG_1,\HG_2)$ exists
is an $\NP$-problem. Note that this result is expected but not trivial, because in general a tree projection may employ any subset of every
hyperedge of $\HG_2$. In fact, the results proved in detail in the present paper have been (explicitly) used by \cite{GMS07} in the membership
part of the proof that deciding the existence of a tree projection is an $\NP$-complete problem, which closed the long-standing open question
about its computational complexity~\cite{goodman83synatctic,GS84,SS93,LS99}.

Moreover, we provided a natural game-theoretic characterization of tree projections in terms of the Captain and Robber game, which was missing
and asked for even in the special case of generalized hypertree decompositions. In this game, monotone strategies have the same power as
non-monotone strategies. Even this result is not just of theoretical interest. Indeed, by exploiting the power of non-monotonicity for some
easy-to-compute strategies in the game, called {\em greedy strategies}, larger islands of tractability for the homomorphism problem (hence, for
the constraint satisfaction problem and for the problem of evaluating conjunctive queries, and so on) have been identified in~\cite{GS12tr}. In
particular, for the special case of generalized hypertree decompositions, these strategies lead to the definition of a new tractable
approximation, called {\em greedy hypertree decomposition}, which is strictly more powerful than the (standard) notion of hypertree
decomposition.

\end{document}